\definecolor{dark-red}{rgb}{0.7,0.25,0.25}
\definecolor{dark-blue}{rgb}{0.15,0.15,0.55}
\definecolor{medium-blue}{rgb}{0,0,.8}
\definecolor{DarkGreen}{RGB}{0,150,0}
\definecolor{rho}{named}{red}
\theoremstyle{plain}
\newtheorem{thm}{Theorem}[section]
\newtheorem{cor}[thm]{Corollary}
\newtheorem{lem}[thm]{Lemma}
\newtheorem{prop}[thm]{Proposition}
\theoremstyle{definition}
\newtheorem{defn}[thm]{Definition}
\theoremstyle{remark}
\newtheorem{rem}[thm]{Remark}
\numberwithin{equation}{section}
\DeclareMathOperator{\dom}{Dom}
\DeclareMathOperator{\End}{End}
\DeclareMathOperator{\Hom}{Hom}
\DeclareMathOperator{\Lie}{Lie}
\DeclareMathOperator{\Ran}{Ran}
\DeclareMathOperator{\Mob}{M\ddot{o}b}
\DeclareMathOperator{\spann}{span}
\DeclareMathOperator{\supp}{supp}
\newcommand{\abs}[1]{\left| #1 \right|}
\newcommand{\ip}[1]{\langle #1 \rangle}
\newcommand{\norm}[1]{\left\| #1 \right\|}
\newcommand{\set}[2]{\left\{#1 \, \middle|\, #2\right\}}
\newcommand{\comm}[2]{[ #1, #2 ]}
\newcommand{\pd}{\partial}
\def\semicolon{;}
\def\applytolist#1{
    \expandafter\def\csname multi#1\endcsname##1{
        \def\multiack{##1}\ifx\multiack\semicolon
            \def\next{\relax}
        \else
            \csname #1\endcsname{##1}
            \def\next{\csname multi#1\endcsname}
        \fi
        \next}
    \csname multi#1\endcsname}
\def\calc#1{\expandafter\def\csname c#1\endcsname{{\mathcal #1}}}
\def\bbc#1{\expandafter\def\csname bb#1\endcsname{{\mathbb #1}}}
\def\bfc#1{\expandafter\def\csname bf#1\endcsname{{\mathbf #1}}}
\def\sfc#1{\expandafter\def\csname s#1\endcsname{{\sf #1}}}
\newcounter{wightman}
\title{Unitary vertex algebras and Wightman conformal field theories}
\author[1]{Christopher Raymond\thanks{{\tt Christopher.Raymond@anu.edu.au}}}
\author[2]{Yoh Tanimoto\thanks{{\tt hoyt@mat.uniroma2.it}}}
\author[1]{James E.\! Tener\thanks{{\tt James.Tener@anu.edu.au}}}
\affil[1]{Mathematical Sciences Institute, The Australian National University,\authorcr
Canberra, ACT 2600, Australia}
\affil[2]{Dipartimento di Matematica, Universit\`a di Roma Tor Vergata,\authorcr
   Via della Ricerca Scientifica 1, I-00133 Roma, Italy}
\date{}
\begin{document}

\maketitle

\begin{abstract}
We prove an equivalence between the following notions: (i) unitary M\"obius vertex algebras, and (ii) Wightman conformal field theories on the circle (with finite-dimensional conformal weight spaces) satisfying an additional condition that we call uniformly bounded order.
Reading this equivalence in one direction, we obtain new analytic and operator-theoretic information about vertex operators.
In the other direction we characterize OPEs of Wightman fields and show they satisfy the axioms of a vertex algebra.
As an application we establish new results linking unitary vertex operator algebras with conformal nets.
\end{abstract}

{\hypersetup{linkcolor=black}
\setcounter{tocdepth}{2}
\tableofcontents
}

\section{Introduction}

Two-dimensional conformal field theory has attracted the attention of physicists and mathematicians for its
rich algebraic and analytic structures (see e.g.\! \cite{DMS97}).
There are several well-developed approaches to axiomatizing unitary two-dimensional chiral conformal field theories (chiral CFTs), and in recent years the problem of reconciling the various approaches has garnered considerable interest.
In this article, we will compare two formulations of two-dimensional unitary chiral conformal field theories, namely the vertex algebra approach
(in terms of formal power series)
and the Wightman approach (in terms of operator-valued distributions on the circle).
While these two axiomatizations are supposed to describe the same theories, the manners in which they do so are quite different.
In particular, Wightman fields rely on nontrivial analytic properties which are not readily visible in the vertex algebra formulation.

Our main result is that unitary M\"obius vertex algebras are equivalent to M\"obius-covariant Wightman field theories that satisfy an analytic condition which we call \emph{uniformly bounded order}.
The adjective ``M\"obius'' describes the underlying symmetry of the theories under consideration, which is $\mathfrak{sl}_2$ at the infinitesimal level or $\mathrm{PSU}(1,1)$ at the global level.
The main result is summarized in Theorem \ref{thm: wightman CFT iff va}.
A highlight of the result is that we do not impose any extraneous constraints on the vertex algebras under consideration, which is a departure from other results comparing axiomatizations of chiral CFTs.
We establish the main result by giving a careful axiom-by-axiom comparison of the two notions, with emphasis on the key properties of locality and M\"obius covariance.

The uniformly bounded ordered property arises in our treatment of the locality axiom, and it seems that this property had not been noticed in earlier comparisons of vertex operators and Wightman fields at a physical level of rigor (e.g. \cite[\S1.2]{Kac98}; see Section \ref{subsec: VA to W and back} for further discussion).
We use the locality axiom of vertex algebras to show that unitary M\"obius vertex algebras satisfy uniformly bounded order, which yields new analytic information about these theories.
On the other hand, we observe that uniformly bounded order for Wightman fields allows us to construct a vertex algebra structure.
 This amounts to showing operator product expansion for Wightman fields (cf.\! \cite{Bostelmann05OPE}).

As an application, we are able to obtain further results comparing unitary vertex operator algebras (VOA) with conformal nets, another well-studied axiomatization of two-dimensional unitary chiral conformal field theories in the spirit of the Haag-Kastler axioms for algebraic quantum field theory (AQFT).
First, starting from a conformal net we construct a canonical unitary VOA structure on a subspace of the finite-energy vectors of the theory.
We show that this subspace is, in fact, the entire space of finite-energy vectors whenever the conformal net ``comes from'' a unitary VOA in a loose sense.
This construction applies to arbitrary conformal nets, extending a result of \cite{CKLW18} (based on the construction of \cite{FJ96}).
Second, we construct conformal nets from unitary VOAs possessing a strong commutativity property, which reproduces a result of \cite{CKLW18} without requiring the technical assumption of ``polynomial energy bounds.''

The article is organized as follows:
In Section \ref{sec: two formulations} we describe the vertex algebra and Wightman formulations of unitary two-dimensional chiral conformal field theories.
We also give preliminary results constructing Wightman fields from unitary vertex algebras.
In Section \ref{sec:vertexwightman} we give a careful comparison of the two different axiomatizations, with particular attention to locality and M\"obius covariance. Here, we prove our main result giving an equivalence between the two notions.
In Section \ref{sec: conformal nets} we discuss conformal nets and give several applications of our results in this context.

\subsection*{Acknowledgements}

CR and JT were supported by ARC Discovery Project DP200100067.
CR was also supported by ARC Discovery Project DP170103265.
YT was supported by Programma per giovani ricercatori, ``Rita Levi Montalcini''
of the Italian Ministry of Education, University and Research until 2020.

YT acknowledges the MIUR Excellence Department Project awarded to
the Department of Mathematics, University of Rome ``Tor Vergata'' CUP E83C18000100006 and the University of
Rome ``Tor Vergata'' funding scheme ``Beyond Borders'' CUP E84I19002200005.
JT gratefully acknowledges the support of ``Rita Levi Montalcini'' grant and the hospitality of the University of Rome, Tor Vergata, which facilitated this collaboration.

\section{Two formulations of chiral CFT}\label{sec: two formulations}
In this section we introduce our two formulations of chiral CFT, namely unitary M\"obius vertex algebras and M\"obius-covariant Wightman field theories.

\subsection{Positive-energy representations of the M\"obius group}\label{sec: PERs}

We denote by $\Mob$ the group of holomorphic automorphisms of the unit disk, which we identify with fractional-linear transformations via
\[
\Mob \cong \mathrm{PSU}(1,1) = \left\{ \begin{pmatrix} a & b\\ \overline{b} & \overline{a}\end{pmatrix} \, \, : \, \, a,b \in \bbC, \, |a|^2 - |b|^2 = 1 \right\} / \left\{ \pm \mathrm{Id} \right\}.
\]
It is a three-dimensional real Lie group with Lie algebra denoted $\Lie(\Mob)$.
If we consider $\Mob$ as a subgroup of the group $\mathrm{Diff}(S^1)$ of orientation-preserving diffeomorphisms of the unit circle $S^1 \subset \bbC$, $\Lie(\Mob)$ is identified with
a three-dimensional subspace of the space of smooth vector fields $\mathrm{Vect}(S^1)$ on $S^1$.
Each vector field is identified with a differential operator $f(e^{i\theta})\frac{d}{d\theta}$ for some smooth function $f(e^{i\theta})$,
and the Lie bracket is given by $[f\frac{d}{d\theta},g\frac{d}{d\theta}] = (f'g - fg')\frac{d}{d\theta}$, where $f'$ denotes $\tfrac{df}{d\theta}$.
Note that this bracket is the opposite of the bracket of vector fields, which is the natural choice when identifying $\mathrm{Vect}(S^1)$ with the Lie algebra of $\mathrm{Diff}(S^1)$.
This gives rise to an embedding $\Lie(\Mob) \subset \mathrm{Vect}(S^1)$, and the complexification $\Lie(\Mob)_\bbC \cong \mathfrak{sl}(2,\bbC)$ of $\Lie(\Mob)$ is spanned by the elements $\{L_{-1},L_0,L_1\}$,
where $L_m$ is the complexified vector field $-ie^{im\theta}\frac{d}{d\theta}$.
The vector fields $L_m$ satisfy the commutation relations
\begin{align*}
 [L_m, L_n] = (m-n)L_{m+n}, \qquad m,n=-1,0,1.
\end{align*}

A representation of $\Lie(\Mob)$ on an inner product space $\cV$ is called a \textbf{unitary representation} if $\ip{L_m u, v} = \ip{u, L_{-m}v}$ for all $u,v \in \cV$ and $m=-1,0,1$.
Here we have extended the representation of $\Lie(\Mob)$ to its complexification, and, as an abuse of notation, used the same symbols $L_m$ for operators on the representation space.
We are interested in \textbf{positive-energy representations} of $\Lie(\Mob)$, which are unitary representations with the additional property that $L_0$ is a diagonalizable operator with non-negative eigenvalues.
Any positive-energy representation of $\Lie(\Mob)$ in which $L_0$ has integral spectrum can be integrated to a continuous (in the strong operator topology, SOT) unitary representation of $\Mob$ on the Hilbert space completion $\cH$ of $\cV$ (see e.g.\! \cite{Longo08}, \cite[Appendix A]{WeinerThesis}, or \cite[\S3.1]{CKLW18}).

Conversely, an SOT-continuous unitary representation $\cH$ of $\Mob$ is called a \textbf{positive-energy representation} if the generator of the rotation subgroup is a positive operator (with necessarily integral spectrum). 
In this case, the representation is the integration of a positive-energy representation of $\Lie(\Mob)$ on the dense subspace $\cV$ of finite-energy vectors (that is, the algebraic span of eigenvectors of the generator of rotation).
In this article, we will emphasize positive-energy representations in which the generator of rotation $L_0 \in \Lie(\Mob)_\bbC$ has finite-dimensional eigenspaces.

\subsection{Unitary M\"obius vertex algebras}\label{sec:vertex}

In this section, we define \textbf{(unitary) M\"obius vertex algebras}, which were introduced under the name quasi-vertex operator algebra in \cite{FHL93}. 
The reader is cautioned that other authors have used the term M\"obius vertex algebra to refer to a more general notion, in particular relaxing the requirement of finite-dimensionality of the weight spaces (e.g.\! \cite{BK08}).

If $\cV$ is a vector space, we denote by $\End(\cV)[[z^{\pm 1}]]$ the set of formal power series in $z^{\pm 1}$ with coefficients in $\End(\cV)$.
Given $v \in \cV$ and $A(z) =\sum_{n \in \bbZ} A_n z^n \in \End(\cV)[[z^{\pm 1}]]$, we can consider a formal series $A(z)v = \sum_n A_n v z^n$ in the variables $z^{\pm 1}$
with coefficients in $\cV$. Furthermore, for any $B \in \End(\cV)$ we have
$[A(z), B] = \sum_n [A_n, B]z^n \in \End(\cV)[[z^{\pm 1}]]$. If $B(w)$ is another formal series,
then the expression $[A(z), B(w)]$ makes sense as a formal series in $z^{\pm 1}$ and $w^{\pm 1}$.

\begin{defn}
An ($\bbN$-graded) \textbf{M\"obius vertex algebra} consists of a vector space $\cV$ equipped with a representation $\{L_{-1},L_0,L_1\}$ of $\Lie(\Mob)_\bbC$, a state-field correspondence $Y:\cV \to \End(\cV)[[z^{\pm 1}]]$,
and a choice of vector $\Omega \in \cV$ such that the following hold:
\begin{enumerate}[{(VA}1{)}]
\item $\cV = \bigoplus_{n=0}^\infty \cV(n)$, where $\cV(n) =  \ker(L_0 - n)$ and each $\cV(n)$ is finite-dimensional.
\item $\Omega$ is $\Lie(\Mob)$-invariant, i.e.\! $L_k \Omega = 0$ for $k=-1,0,1$, and $Y(\Omega,z) = \operatorname{Id}_\cV$.
\item $Y(v,z)\Omega$ has only non-negative powers of $z$ for all $v \in \cV$.
\item $[L_k, Y(v,z)] = \sum_{j=0}^{k+1} \binom{k+1}{j}z^{k+1-j} Y(L_{j-1}v,z)$ and $Y(L_{-1}v,z) = \frac{d}{dz} Y(v,z)$ for all $v \in \cV$ and $k = -1,0,1$.
\item $(z-w)^N[Y(v,z), Y(u,w)] = 0$ for $N$ sufficiently large.
\end{enumerate}
\end{defn}

For $v \in \cV$, the field $Y(v,z)$ associated with $v$ can be expanded as a formal series $Y(v,z) = \sum_{m \in \bbZ} v_{(m)}z^{-m-1}$, where $v_{(m)} \in \End(\cV)$.
A vector $v \in \cV$ is called \textbf{homogeneous (with conformal dimension $d$)} if it lies in $\cV(d)$. As a direct consequence of the axioms, when $v$ is homogeneous with conformal dimension $d$ we have
\begin{align*}
[L_0, Y(v,z)] &= zY(L_{-1}v,z) + Y(L_{0}v, z) = z\tfrac{d}{dz}Y(v,z) + dY(v,z) \\
&= \sum_{m \in \bbZ} (d-m-1) v_{(m)}z^{-m-1}.
\end{align*}
By comparison of formal series, one can deduce that $[L_0, v_{(m)}] = (d-m-1)v_{(m)}$. It follows that $v_{(m)}$ maps $\cV(n)$ into $\cV(n+d-m-1)$.

An (antilinear) \textbf{automorphism} of $\cV$ is an (anti)linear bijection $\Theta:\cV  \to \cV$ that fixes $\Omega$, commutes with the $L_m$, and satisfies
$\Theta(v_{(n)}u) = \Theta(v)_{(n)}\Theta(u)$ for all $v,u \in \cV$ and $n \in \mathbb{Z}$.
A vector $v \in \cV$ is called \textbf{quasi-primary} if it is homogeneous and $L_{1}v=0$. The field $Y(v,z)$ associated with a quasi-primary vector $v$ is called a quasi-primary field.

An invariant bilinear form $( \cdot , \cdot)$ on a M\"obius vertex algebra $\cV$ is a bilinear form such that 
\[
(Y(v,z)u,u') = (u, Y(e^{zL_1}(-z^{-2})^{L_0}v,z^{-1})u').
\]
\begin{defn}
A \textbf{unitary M\"obius vertex algebra} is a M\"obius vertex algebra $\cV$ equipped with an inner product $\ip{\cdot, \cdot}$
and an antilinear automorphism $\Theta$ such that the following hold:
\begin{enumerate}
\item $\ip{L_{-n}v,u} = \ip{v,L_{n}u}$ for all $v,u \in \cV$ and $n=-1,0,1$ (i.e. $\cV$ is a positive-energy representation of $\Lie(\Mob)_\bbC$).
\item $\norm{\Omega} = 1$.
\item $\ip{ \cdot, \Theta \cdot}$ is an invariant bilinear form
\end{enumerate}
\end{defn}
Note that we follow the convention that the inner product is linear in the first variable.
As in \cite[\S5]{CKLW18}, the operator $\Theta$ is automatically an involution, and a unitary M\"obius vertex algebra is simple (has no ideals) if and only if $\cV(0) = \bbC \Omega$.
Moreover if $\cV(0) = \bbC \Omega$, then there is at most one involution $\Theta$ making $\cV$ into a unitary M\"obius vertex algebra, and so in this case we can omit $\Theta$ from the extra structure needed to specify unitarity.

If $v \in \cV$ is homogeneous with conformal dimension $d$, then we write $v_{n}:=v_{(n+d-1)}$ for the degree-shifted mode,
and we extend this definition to arbitrary $v \in \cV$ by linearity.
The shifted modes have the property that $v_n \cV(m) \subset \cV(m-n)$.
For convenience of calculation, we will make use of both conventions throughout the paper.
As in \cite{CKLW18}, we call a quasi-primary field $Y(v,z)$ Hermitian if $\ip{v_n u,u'} = \ip{u,v_{-n}u'}$ for every $u,u' \in \cV$ and $n \in \bbZ$.
This last condition is equivalent to requiring $\Theta v = (-1)^{d_v} v$, where $d_v$ is the conformal dimension of $v$.

Let $\mathcal{S}\subset \cV$ be a set of homogeneous vectors. 
We say that the fields $\{Y(v,z) \, : \, v \in \mathcal{S}\}$ generate $\cV$
if
\[
\cV = \mathrm{Span}\{(v_1)_{(n_1)}\cdots (v_k)_{(n_k)} u \, : \, v_j,u \in \mathcal{S}, \, n_j\in \bbZ, \, k \in \bbZ_{\geq 0} \}.
\]
We have the following version of \cite[Proposition 5.17]{CKLW18} for M\"obius vertex algebras, proven exactly the same way as for vertex operator algebras in the reference.

\begin{prop}\label{prop: unitary iff Hermitian generating}
Let $\cV$ be a $\bbN$-graded M\"obius vertex algebra with $\cV(0) = \bbC \Omega$.
Suppose that $\cV$ is equipped with an inner product such that $\norm{\Omega} = 1$ and such that the representation of $\Lie(\Mob)_\bbC$ on $\cV$ is unitary.
Then $\cV$ is a unitary M\"obius vertex algebra if and only if it possess a generating family of Hermitian quasi-primary fields.
\end{prop}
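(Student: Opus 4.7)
My plan is to follow the proof of \cite[Proposition 5.17]{CKLW18} essentially verbatim; the M\"obius setting is in fact slightly simpler, as there is no Virasoro action to preserve.

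For the forward direction, assume $\cV$ is unitary. Finite-dimensional weight spaces and positive-energy unitarity of the $\mathfrak{sl}_2$-representation force $\cV$ to decompose as a direct sum of lowest-weight $\mathfrak{sl}_2$-modules, whose lowest-weight vectors are precisely the quasi-primary vectors. Combined with $L_{-1}v = v_{(-2)}\Omega$, this shows that the fields attached to quasi-primary vectors already generate $\cV$. To promote to \emph{Hermitian} generators, I would split each quasi-primary $v \in \cV(d)$ as $v = \tfrac12\bigl(v + (-1)^d \Theta v\bigr) + i \cdot \tfrac{1}{2i}\bigl(v - (-1)^d \Theta v\bigr)$: since $\Theta$ commutes with $L_0$ and $L_1$, each summand is again quasi-primary, and each is a $(-1)^d$-eigenvector of $\Theta$, hence Hermitian by the characterization stated in the text.

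For the converse, assume $\cS$ is a generating family of Hermitian quasi-primary vectors. The objective is to construct an antilinear involution $\Theta$ on $\cV$ such that $(u,u') := \ip{u,\Theta u'}$ is an invariant bilinear form, with $\Theta\Omega=\Omega$ and $\Theta v = (-1)^{d_v}v$ for $v \in \cS$. I would construct the bilinear form $(\cdot,\cdot)$ directly rather than $\Theta$ itself, then recover $\Theta$ via Riesz using non-degeneracy of $\ip{\cdot,\cdot}$ on each finite-dimensional $\cV(n)$. Starting from $(\Omega,\Omega) = 1$, extend $(\cdot,\cdot)$ inductively using the invariance identity, which for a quasi-primary $v\in\cS$ of dimension $d$ reduces (since $L_1 v = 0$ and $e^{zL_1}(-z^{-2})^{L_0}v = (-1)^d z^{-2d}v$) to the shifted-mode identity $(v_n u, u') = (-1)^d (u, v_{-n}u')$. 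Hermiticity of $v$ ensures that this identity matches the corresponding one for $\ip{\cdot,\cdot}$ exactly under the prescription $\Theta v = (-1)^d v$.

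The main obstacle is well-definedness of $(\cdot,\cdot)$: the same vector can be expressed as an iterated action of modes from $\cS$ on $\Omega$ in many ways, and the resulting scalar must be independent of the choice. I would handle this exactly as in \cite{CKLW18}: use locality, skew-symmetry, and the $L_{-1}$-translation identity to reduce any word in modes of $\cS$ to a canonical normal-ordered form, and verify consistency by pushing all modes to the right until they annihilate $\Omega$ (using positive-energy together with $\cV(0)=\bbC\Omega$). Once $(\cdot,\cdot)$ is well-defined and $\Theta$ is extracted, it is routine to verify that $\Theta$ commutes with $L_{-1},L_0,L_1$ (using $\mathfrak{sl}_2$-unitarity of $\ip{\cdot,\cdot}$), is an involution, and satisfies $\Theta(v_{(n)}u) = (\Theta v)_{(n)}\Theta u$ — the latter holds for $v \in \cS$ by construction and extends to all $v \in \cV$ because $\cS$ generates.
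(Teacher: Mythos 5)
Your overall strategy coincides with the paper's, which simply asserts that the proposition is ``proven exactly the same way'' as \cite[Proposition 5.17]{CKLW18}. Your forward direction --- complete reducibility of the unitary positive-energy $\mathfrak{sl}_2$-action into lowest-weight modules, the identity $L_{-1}v = v_{(-2)}\Omega$, and the splitting $v = \tfrac12\bigl(v+(-1)^d\Theta v\bigr) + i\cdot\tfrac{1}{2i}\bigl(v-(-1)^d\Theta v\bigr)$ into $(-1)^d$-eigenvectors of $\Theta$ --- is exactly the argument there, and the mode identities you quote (in particular $(v_n u,u') = (-1)^d(u,v_{-n}u')$ for quasi-primary $v$ of dimension $d$) all check out.

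The one place where the proposal is genuinely thinner than it needs to be is the existence and well-definedness of the invariant bilinear form in the converse direction. This is not handled in \cite{CKLW18} by the normal-ordering induction you describe; it is obtained by citing the Frenkel--Huang--Lepowsky/Li classification, which identifies the space of invariant bilinear forms on $\cV$ with $(\cV(0)/L_1\cV(1))^*$. To apply it one must check $L_1\cV(1)=0$, and this is precisely where $\mathfrak{sl}_2$-unitarity of the inner product enters: for $v\in\cV(1)$ one has $L_1v\in\cV(0)=\bbC\Omega$ and $\ip{L_1v,\Omega}=\ip{v,L_{-1}\Omega}=0$, hence $L_1v=0$. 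Your hand-built induction would break down at exactly this point if $L_1\cV(1)\neq 0$ (invariance would force $(L_1v,\Omega)=(v,L_{-1}\Omega)=0$, contradicting $(\Omega,\Omega)=1$), and more generally the consistency of the ``push all modes to the right'' prescription across the many presentations of a given vector is the entire content of Li's theorem --- it should be cited (after verifying it adapts to the M\"obius setting, which it does since only the $\mathfrak{sl}_2$-structure is used) rather than re-derived in a sketch. With that substitution, together with nondegeneracy of the resulting form (its radical is an ideal, and $\cV(0)=\bbC\Omega$ forces simplicity) so that $\Theta$ can be extracted on each finite-dimensional $\cV(n)$, the remainder of your outline --- verifying that $\Theta$ is an involutive antilinear automorphism commuting with the $L_k$ by using Hermiticity of the generators and the fact that $\cS$ generates --- is the argument of the reference.
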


\subsection{M\"obius-covariant Wightman fields}\label{sec:wightman}
We now give an alternate axiomatization of two-dimensional chiral CFTs, in terms of Wightman fields. 
The Wightman axioms are usually defined on the Minkowski space $\mathbb{R}^{d+1}$, but they can be adapted to chiral conformal field theory.

Let $\cH$ be a Hilbert space, and let $\cD \subseteq \cH$ be a dense subspace.
An \textbf{operator-valued distribution} on the unit circle $S^1$ with domain $\mathcal{D}$ is a continuous linear map 
\[
\varphi:C^\infty(S^1) \to \Hom(\cD,\cH),
\]
where $\Hom(\cD,\cH)$ is the space of linear maps $\cD \to \cH$, equipped with the topology of pointwise convergence.
We regard $\varphi(f)$ as an unbounded operator%
\footnote{Following standard references, the domain $\dom(A)$ of an unbounded operator $A$ is treated as part of the data of $A$.
If $B$ is the restriction of $A$ to a subspace $\dom(B)  \subset \dom(A)$,
then we write $B \subset A$. The adjoint $A^*$ of $A$ is defined on $\Psi \in \cH$ for which there exists a constant $C_A$
 such that $\|\ip{A\Phi, \Psi}\| \le C_A \|\Phi\|$  for all $\Phi \in \dom(A)$,
and it is characterized by $\ip{\Phi, A^*\Psi} = \ip{A\Phi, \Psi}$.
An unbounded operator is called closed if its graph is a closed subspace of $\cH \times \cH$, and called closable if the closure of its graph is the graph of some (closed) operator.
See e.g.\! \cite{Rudin91,Pedersen89} for basic notions of unbounded operators.
}
 on $\cH$ with domain $\cD$.

An operator-valued distribution $\varphi$ with domain $\cD$ is called \textbf{adjointable} (or an \textbf{adjointable field}) on $\cD$ if
there exists an operator-valued distribution $\varphi^\dagger$ with domain $\cD$ such that
$$
\ip{\varphi(f)\Phi,\Psi} = \ip{\Phi,\varphi^\dagger(\overline{f})\Psi}
$$
for all $\Phi,\Psi \in \cD$.
Equivalently, $\varphi$ is adjointable if $\cD \subseteq \dom(\varphi(f)^*)$ for every $f \in C^\infty(S^1)$, in which case the adjoint distribution is given by $\varphi^\dagger(f) := \varphi(\overline{f})^*\mid_\cD$.
Note that continuity of $f \mapsto \varphi^\dagger(f)\Phi$ is automatic, by the closed graph theorem.
If $\varphi$ is adjointable, then so is $\varphi^{\dagger}$ and $\varphi^{\dagger\dagger} = \varphi$.
It follows that when $\varphi$ is adjointable the operators $\varphi(f)$ and $\varphi^\dagger(f)$ are closable.
Note that an adjointable field with domain $\cD$ remains adjointable when restricted to a dense subspace $\cD^\prime \subset \cD$.

We now assume that our Hilbert space carries a positive-energy representation of $\Mob$.
Following \cite[\S 6]{CKLW18}, for $\gamma \in \Mob$ we denote by $X_\gamma \in C^\infty(S^1)$ the function
\[
X_\gamma(e^{i\theta}) = -i \frac{d}{d \theta} \log(\gamma(e^{i \theta})),
\]
which takes values in the positive real numbers since $\gamma$ is an orientation preserving diffeomorphism of $S^1$.
For $f \in C^\infty(S^1)$ and $d \in \bbZ_{\ge 0}$ we denote by $\beta_d(\gamma)f \in C^\infty(S^1)$ the function
\begin{equation}\label{eq:beta}
(\beta_d(\gamma)f)(z) = (X_\gamma(\gamma^{-1}(z)))^{d-1} f(\gamma^{-1}(z)).
\end{equation}

Now, let $\cD$ be a dense subspace of $\cH$. An operator-valued distribution with domain $\cD$ is called \textbf{M\"obius-covariant of degree $d$} (in the Wightman sense)
if for every $\gamma \in \Mob$ we have $U(\gamma)\cD = \cD$ and 
\[
U(\gamma)\varphi(f)U(\gamma)^* = \varphi(\beta_d(\gamma)f)
\]
for every $f \in C^\infty(S^1)$, where the equality is that of maps $\cD \to \cH$.

\begin{rem}\label{rem: tensor field}
The function $X_\gamma \in C^\infty(S^1)$ is characterized in terms of the pushforward of the vector field $\tfrac{d}{d\theta} \in \Gamma(TS^1)$ by the identity $\gamma_* \tfrac{d}{d\theta} = (X_\gamma \circ \gamma^{-1}) \tfrac{d}{d\theta}$.
It follows that when $d > 0$ the operation $\beta_d(\gamma)$ can be understood in terms of pushforwards of sections of the tensor product bundle $T^{\otimes d-1}S^1$:
\[
\gamma_* \left(f \tfrac{d}{d\theta}^{\otimes d-1}\right) = \left(\beta_d(\gamma)f\right) \tfrac{d}{d\theta}^{\otimes d-1}.
\]
Thus in some contexts the input to a Wightman field of degree $d$ is best understood as a tensor field $f \tfrac{d}{d\theta}^{\otimes d-1}$ of degree $d-1$ rather than a function $f$ (see \cite[Chapter 12]{Lee13SmoothManifolds}).
Indeed, the same holds when $d=0$, with $\tfrac{d}{d\theta}^{\otimes -1}$ interpreted as $d\theta$.
\end{rem}

A set $\mathcal{F}$ of operator-valued distributions on $S^1$ with a common dense domain $\mathcal{D} \subset \mathcal{H}$, together with a SOT-continuous unitary representation $U$ of $\Mob$ and a vector $\Omega \in \cD$, are called a \textbf{M\"obius-covariant Wightman field theory} on $S^1$ if they satisfy the following axioms:
\begin{enumerate}[{(W}1{)}]
\item \textbf{M\"obius covariance}: For every $\gamma \in \Mob$, $U(\gamma)$ preserves $\mathcal{D}$. 
For each $\varphi \in \mathcal{F}$ there is $d \in \mathbb{Z}_{\ge 0}$ such that $\varphi$ is M\"obius-covariant of degree $d$.\label{itm: W Mob covariance}
\item \textbf{Adjoint}: Each $\varphi \in \mathcal{F}$ is adjointable and $\cF$ is closed under the involution sending $\varphi$ to $\varphi^\dagger$.\label{itm: W adjoint}
\item \textbf{Locality}: If $f$ and $g$ have disjoint supports, then $\varphi_1(f)$ and $\varphi_2(g)$ commute
for any pair $\varphi_1, \varphi_2 \in \mathcal{F}$.
\label{itm: W locality}
\item \textbf{Spectrum condition}: The generator $L_0$ of rotations $U(R_\theta)=e^{i\theta L_0}$ is positive.
\label{itm: W spec}
\item \textbf{Vacuum}: The vector $\Omega$ is the unique (up to scalar) vector that is invariant under $U$, and $\Omega$ is cyclic for $\cF$ in the following sense:
whenever $\varphi_1, \ldots, \varphi_k \in \cF$ and $f_1,\ldots,f_k \in C^\infty(S^1)$ we have $\varphi_k(f_k) \cdots \varphi_1(f_1)\Omega \in \mathcal{D}$, and moreover $\mathcal{D}$ is spanned by vectors of this form.
\label{itm: W Vacuum}
\setcounter{wightman}{\value{enumi}}
\end{enumerate}
The vacuum axiom implies that $\cD$ is invariant under the operators $\varphi(f)$.

It follows automatically that expressions such as $\varphi_1(f_1) \cdots \varphi_k(f_k)\Omega$ are jointly continuous in the $f_j$.
Indeed, $f_j \mapsto \varphi(f_1) \ldots \varphi(f_k)\Omega$ has a closed graph, because
$f_j \mapsto \ip{\varphi(f_j)\Phi, \Psi}$ is continuous for any $\Phi, \Psi \in \mathcal{D}$ and $\mathcal{D}$ is dense. 
As $C^\infty(S^1)$ is a Fr\'echet space, the map is therefore continuous.
Hence the expression $\varphi(f_1) \ldots \varphi(f_k)\Omega$ is separately continuous in the $f_j$, and separately continuous maps from Fr\'echet spaces into a Fr\'echet space (the Hilbert space $\cH$) are jointly continuous
(see e.g.\! \cite[Theorem 2.17]{Rudin91}).

Let $\cV$ be the finite-energy vectors $\cV = \bigoplus_{n=0}^\infty \cV(n)$, where $\cV(n) = \operatorname{ker}(L_0-n)$ and the direct sum is the algebraic direct sum.
In this article we will emphasize theories that also satisfy:

\begin{enumerate}[{(W}1{)}]
\setcounter{enumi}{\value{wightman}}
\item \label{itm: W V(n) finite dim} The eigenspaces $\cV(n)=\mathrm{ker}(L_0-n)$ of $L_0$ are finite-dimensional.
\setcounter{wightman}{\value{enumi}}
\end{enumerate}
Examples of Wightman field theories that do not satisfy (W\ref{itm: W V(n) finite dim}) can be produced by the infinite tensor product construction, as in \cite[\S6]{CW05}.

We claim that if a Wightman field theory $\cF$ satisfies (W\ref{itm: W V(n) finite dim})  then $\cV \subset \cD$.
First, from the M\"obius covariance axiom we see that the subspace $\cW$ of $\cD$ given by
\[
\cW = \operatorname{span}\{ \, \varphi_k(f_k) \cdots \varphi_1(f_1)\Omega \,\, : \,\, f_j \in \bbC[z^{\pm 1}], \, \varphi_j \in \cF, \, k \in \bbZ_{\ge 0}\}
\]
is contained in $\cV$ and invariant under $L_0$.
Thus $\cW = \bigoplus_{n=0}^\infty \cW(n)$ with $\cW(n) \subseteq \cV(n)$.
Since $\cD$ is dense in $\cH$ and expressions of the form $\varphi_k(f_k) \cdots \varphi_1(f_1)\Omega$ are continuous in the $f_j$, $\cW$ is also dense.
But this means that $\cW(n)$ is dense in $\cV(n)$, and since $\cV(n)$ was assumed finite-dimensional we must have $\cW(n)=\cV(n)$.
We conclude that $\cV \subset \cD$ as claimed.

For $v \in \cV$, expressions of the form $\varphi_k(f_k) \cdots \varphi_1(f_1)v$ are jointly continuous in the $f_j$ by the same argument given above in the case $v = \Omega$.
The Wightman fields that correspond to unitary M\"obius vertex algebras satisfy an additional, apparently stronger, requirement that we now explain.
For $N \in \bbR_{\ge 0}$, the $N$-Sobolev norm on $C^\infty(S^1)$ is given by
\begin{equation}\label{eqn: sobolev norm}
\norm{f}_N = \left(\sum_{n \in \bbZ} \big|\hat f(n)\big|^2(1+n^2)^N\right)^{1/2}.
\end{equation}
The Hilbert space completion of $C^\infty(S^1)$ under this norm, denoted $H^N(S^1)$, can be identified with the subspace of $L^2(S^1)$ consisting of functions with finite $N$-Sobolev norm.
The topology on $C^\infty(S^1)$ is induced by the norms $\norm{\cdot}_N$, and a linear map from $C^\infty(S^1)$ to a Banach space $X$ is continuous precisely when it extends to a bounded linear map on some $H^N(S^1)$.
Thus the argument above for $C^\infty(S^1)$ shows that the expressions $\varphi_1(f_1) \cdots \varphi_k(f_k)v$
extend to (jointly) continuous multilinear maps $H^N(S^1)^k \to \cH$ for some $N$
that depends on $v$.

We consider the following additional property, asking that $N$ can be chosen independent of $v$:
\begin{enumerate}[{(W}1{)}]
\setcounter{enumi}{\value{wightman}}
\item \label{itm: W unif bdd order} \textbf{Uniformly bounded order}: For every collection $\varphi_1, \ldots, \varphi_k \in \cF$, there is a positive number $N$ such that for every $v \in \cV$ the map $(f_1, \ldots, f_k) \mapsto \varphi_k(f_k) \cdots \varphi_1(f_1)v$ extends continuously to $H^N(S^1)^k$.
\setcounter{wightman}{\value{enumi}}
\end{enumerate}

\begin{rem}\label{rm:conjecture}
We do not know if the condition of uniformly bounded order follows from the other axioms.
Since we will show that M\"obius-covariant Wightman field theories with uniformly bounded order correspond exactly to unitary M\"obius vertex algebras, the uniformly bounded order condition is automatic if and only if every M\"obius-covariant Wightman field theory corresponds to a vertex algebra.
The axioms of a vertex algebra essentially require the existence of operator product expansions (OPE),
while from the Wightman point of view a form of OPEs has been proved under additional analytic conditions (see e.g.\! \cite{Bostelmann05OPE}).
On the other hand, if we assume uniformly bounded order, then
we prove that the fields generate a M\"obius vertex algebra, and 
OPE $A(z)B(w) = \sum_{j=0}^{N-1} (i_{w,z}\frac1{(z-w)^{j+1}}c^j(w) + :A(z)B(w):)$ as formal series follow \cite[Theorem 2.3]{Kac98},
and this equality holds as distributions.
\end{rem}

\begin{rem}
The definition of uniformly bounded order given here implicitly assumes that $\cV \subset \cD$, which will usually only occur when the weight spaces $\cV(n)$ are finite-dimensional.
If the $\cV(n)$ are infinite-dimensional, then the condition (W\ref{itm: W unif bdd order}) should be modified to only require that the vectors $v$ lie in $\cV \cap \cD$.
\end{rem}

\subsection{Adjointable fields from unitary M\"obius vertex algebras}\label{sec:fieldsfromvertex}
Let $\cV$ be a unitary M\"obius vertex algebra.
In this section, we show that the formal distributions $Y(v,z)$ correspond to adjointable operator-valued distributions on $S^1$, and that these distributions can be repeatedly applied to the vacuum.
This is a first step in constructing a Wightman field theory from $\cV$.
Other conditions of Wightman fields, such as M\"obius covariance
and locality, will be studied in Section \ref{sec:vertexwightman}.

Recall that for $v \in \cV$ we have a formal series $Y(v,z) = \sum_n v_{(n)}z^{-n-1}$ with $v_{(n)} \in \mathrm{End}(\cV)$.
For $f \in C^\infty(S^1)$, we wish to define an unbounded operator $Y^0(v,f)$ on the Hilbert space completion $\cH$ of $\cV$ by the formula
\begin{equation}\label{eqn: smeared field Y 0 def}
Y^0(v,f)u = \sum_{n \in \bbZ} \hat f(n) v_{n}u, \quad u \in \cV.
\end{equation}
Here, $\hat f(n)$ are the coefficients of the Fourier series of $f$, and the degree-shifted mode $v_n$ is given by $v_n = v_{(n+d-1)}$ when $v$ is homogeneous with conformal dimension $d$, and otherwise extended linearly.
Formally, the map $Y^0(v,f)$ corresponds to the smeared field $\oint_{S^1} Y(z^{L_0}v,z)f(z)\,ds$, where $ds$ is normalized arclength measure.

We will see shortly that with this formula $Y^0(v,f)$ indeed maps $\cV$ into the Hilbert space completion $\cH$ of $\cV$, for any smooth $f$. 
The key estimate is given by the following lemma.

\begin{lem}\label{lem:polynbounds}
Let $\cV$ be a unitary M\"obius%
\footnote{For this lemma, the statement and proof go through equally well with $\cV$ a $\bbN$-graded vertex algebra 
and $u^\prime$ in the restricted dual $\bigoplus \cV(n)^*$.
}
vertex algebra. 
Then for all $v^1, \ldots, v^k,u,u' \in \cV$, there exists a polynomial $p$ such that 
$$
\abs{\langle v^1_{m_1} v^2_{m_2} \cdots v^k_{m_k}u, u^\prime \rangle} \le \abs{p(m_1, \ldots, m_k)}
$$
for all $(m_1, \ldots, m_k) \in \bbZ^k$.
The polynomial depends on the vectors $v^j,u,$ and $u^\prime$, but the degree of $p$ may be bounded independent of $u$ and $u'$.
\end{lem}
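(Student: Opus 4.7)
My approach is induction on $k$, after two preliminary reductions. By sesquilinearity I may take $u, u', v^1, \ldots, v^k$ to be homogeneous, with $L_0$-weights $d_u, d_{u'}, d^1, \ldots, d^k$; comparing $L_0$-eigenvalues forces the matrix element to vanish unless $m_1 + \cdots + m_k = d_u - d_{u'}$, so it suffices to bound it on this hyperplane. The base cases $k \in \{0, 1\}$ are immediate, since by grading the matrix element has support at a single point and so is a single scalar.

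For the inductive step ($k \geq 2$), the key tool is the vertex algebra commutator formula, a consequence of the locality axiom (VA5): in the degree-shifted convention,
\[
[v_m, u_n] = \sum_{j=0}^{J(v,u)-1}\binom{m + d_v - 1}{j}(v_{(j)}u)_{m + n},
\]
a \emph{finite} sum. Applying this repeatedly to commute $v^1_{m_1}$ past each of $v^2_{m_2}, \ldots, v^k_{m_k}$ yields
\[
\langle v^1_{m_1}\cdots v^k_{m_k} u, u'\rangle = \langle v^2_{m_2}\cdots v^k_{m_k} v^1_{m_1} u, u'\rangle + \sum_{l=2}^{k}\sum_{j=0}^{J_l - 1}\binom{m_1 + d^1 - 1}{j}\bigl\langle v^2_{m_2}\cdots(v^1_{(j)}v^l)_{m_1 + m_l}\cdots v^k_{m_k} u, u'\bigr\rangle,
\]
where $J_l = J(v^1, v^l)$. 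Each term of the double sum is a $(k-1)$-field matrix element, with $v^1$ and $v^l$ merged into the single vector $v^1_{(j)} v^l$, so the inductive hypothesis gives a polynomial bound in its modes of degree $\leq D_{k-1}$. Multiplication by $\binom{m_1 + d^1 - 1}{j}$ contributes polynomial-in-$m_1$ growth of degree $j \leq J := \max_l J_l$, giving a polynomial bound of degree $\leq D_{k-1} + J$ in $(m_1, \ldots, m_k)$ for the full double sum.

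For the rearranged first term $\langle v^2_{m_2}\cdots v^k_{m_k} v^1_{m_1} u, u'\rangle$, I split into three cases according to $m_1$. If $m_1 > d_u$, then $v^1_{m_1} u \in \cV(d_u - m_1) = 0$ by nonnegativity of the grading, so the term vanishes. If $m_1 < -d_{u'}$, then $\sum_{i \geq 2} m_i = (d_u - d_{u'}) - m_1 > d_u$ forces $v^2_{m_2}\cdots v^k_{m_k} u \in \cV(d_u - \sum_{i\geq 2} m_i) = 0$, so the \emph{original} matrix element vanishes, and hence the first term equals the negative of the already-bounded double sum. Finally, for $m_1$ in the finite range $[-d_{u'}, d_u]$, the first term is a $(k-1)$-field matrix element applied to the specific vector $v^1_{m_1} u$, and the inductive hypothesis supplies a polynomial bound in $(m_2, \ldots, m_k)$ of degree $\leq D_{k-1}$; summing the finitely many such bounds over $m_1$ in this range gives a polynomial bound in $(m_1, \ldots, m_k)$. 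Combining, $D_k \leq D_{k-1} + J$, and iteration gives $D_k \leq (k-1) J$, a degree bound depending only on the $v^j$'s (together with the vectors $v^1_{(j)} v^l$ generated during the recursion), not on $u, u'$. The main obstacle is the three-way case analysis for the rearranged term; once organized, each case is handled transparently using either $L_0$-grading or the inductive hypothesis on finitely many vectors.
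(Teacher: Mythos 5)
Your proof is correct and follows essentially the same route as the paper's: induction on $k$, splitting $\langle v^1_{m_1}\cdots v^k_{m_k}u,u'\rangle$ into the rearranged term plus commutators, bounding the commutator terms via the Borcherds formula and the inductive hypothesis, and reducing the rearranged term to finitely many values of $m_1$. The only (minor) difference is that where the paper dispatches the case $m_1<-d_{u'}$ by unitarity, you use the $L_0$-grading constraint $\sum_i m_i=d_u-d_{u'}$, which is equally valid and in fact consistent with the paper's footnote that the lemma holds without unitarity.
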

\begin{proof}
We proceed by induction on $k$.
The case $k=0$ is vacuous and the case $k=1$ is immediate as $\ip{v_m u,u'}$ is nonzero for only finitely many $m$.
Now suppose the result has been proven for a value $k-1$, with $k \ge 2$. We assume without loss of generality that each $v^j$ is homogeneous with conformal dimension $d_j$, and similarly for $u$ and $u'$.

First, observe that when $m_1 < -d_{u'}$ we have $ \langle v^1_{m_1} v^2_{m_2} \cdots v^k_{m_k} u, u' \rangle = 0$ by unitarity,
and thus we need only consider $m_1 \ge -d_{u'}$. We have
\begin{equation}\label{eqn: v1 moving to the right}
\langle v^1_{m_1} v^2_{m_2} \cdots v^{k}_{m_{k}}u, u^\prime \rangle
=
\langle  \big(v^2_{m_2} \cdots v^{k}_{m_{k}}\big)v^1_{m_1}u, u^\prime \rangle
+
\langle [ v^1_{m_1}, v^2_{m_2} \cdots v^{k}_{m_{k}}] u, u^\prime \rangle
.
\end{equation}

We show separately that the two terms on the right-hand side are bounded by polynomials in the $m_j$ (with degrees controlled by the $v^j$).
We begin with the first of these two terms.
Observe that when $m_1 > d_u$ we have $\langle  v^2_{m_2} \cdots v^{k}_{m_{k}}v^1_{m_1}u, u^\prime \rangle = 0$, and thus we are left with finitely many values of $m_1$ to consider.
For each fixed value of $m_1$ the inductive hypothesis asserts that $\big| \langle  v^2_{m_2} \cdots v^{k}_{m_{k}}\big(v^1_{m_1}u\big), u^\prime \rangle \big|$ is bounded by a suitable polynomial in $m_2, \ldots, m_k$, which completes the argument for the first term.

For the second term on the right-hand side of \eqref{eqn: v1 moving to the right}, we write
\[
\langle [ v^1_{m_1}, v^2_{m_2} \cdots v^{k}_{m_{k}}]u,u'\rangle = \sum_{\ell=2}^k \langle v_{m_2}^2 \cdots [v_{m_1}^1,v_{m_\ell}^\ell] \cdots v_{m_k}^k u, u' \rangle.
\]
As the sum on the right-hand side is finite, it suffices to bound each term separately by a suitable polynomial.
We will now apply the Borcherds commutator formula (see e.g.\! \cite[Section 4.8]{Kac98}), which is given in terms of degree-shifted modes by
\[
 [a_{m},b_{n}] = \sum_{s=0}^{d_a+d_b-1} \binom{m+d_a-1}{s} \big(a_{s-d_a+1} b\big)_{m+n}
\]
when $a$ and $b$ are homogeneous (with conformal dimensions $d_a$ and $d_b$, respectively).
The commutator formula yields
\begin{align} \label{eqn: expand commutator formula}
\langle v_{m_2}^2 \cdots &[v_{m_1}^1,v_{m_\ell}^\ell] \cdots v_{m_k}^k u, u' \rangle 
=\\
&=\sum_{s=0}^{d_1+d_\ell-1} \binom{m_1+d_1-1}{s}  
\langle v_{m_2}^2 \cdots \big(v^1_{s-d_1+1} v^\ell\big)_{m_1+m_\ell} \cdots v_{m_k}^k u, u' \rangle. \nonumber
\end{align}
By the inductive hypothesis, for each $s$ we may bound
\begin{align*}
\big| \langle v_{m_2}^2 \cdots \big(v^1_{s-d_1+1} v^\ell\big)_{m_1+m_\ell} \cdots v_{m_k}^k u, u' \rangle \big| &\le \abs{ \tilde p(m_1, \ldots, m_1+m_\ell, \ldots, m_k)}\\
 &\le \abs{p(m_1, \ldots, m_k)}
\end{align*}
for a polynomial $p$ in the $m_j$.
As $s$ takes only finitely many values we may choose a single $p$ that works for all $s$.
Note that the degree of $\tilde p$, and thus of $p$, may be bounded independent of $u$ and $u'$, by the inductive hypothesis.

Taking absolute values in \eqref{eqn: expand commutator formula}, we now have
\[
\big| \langle v_{m_2}^2 \cdots [v_{m_1}^1,v_{m_\ell}^\ell] \cdots v_{m_k}^k u, u' \rangle \big| \le \abs{p(m_1, \ldots, m_k)}  \sum_{s=0}^{d_1+d_\ell-1} \binom{m_1+d_1-1}{s}.
\]
Observe that $\binom{m_1+d_1-1}{s}$ is a polynomial in $m_1$ of degree $s$, and thus the above sum of binomial coefficients is a polynomial in $m_1$ of degree at most $d_1+d_\ell-1$.
This shows that $\big| \langle v_{m_2}^2 \cdots [v_{m_1}^1,v_{m_\ell}^\ell] \cdots v_{m_k}^k u, u' \rangle \big|$ is bounded by a suitable polynomial, which completes the proof.
\end{proof}

If $f$ is a Laurent polynomial (i.e. if $f$ has only finitely many nonzero Fourier coefficients), then the map $Y^0(v,f)$ defined in \eqref{eqn: smeared field Y 0 def} is a finite linear combination of $v_{(n)}$'s and thus $Y^0(v,f)$ defines a map $\cV \to \cV$.
We have the following estimate for these maps.

\begin{lem}\label{lem: finite energy norm estimate}
Let $\cV$ be a unitary M\"obius vertex algebra.
Let $v^1, \ldots, v^k, u \in \cV$, and let $f_1, \ldots, f_k \in \bbC[z^{\pm 1}]$ be Laurent polynomials.
Then there is a positive number $N$, which depends only on the $v^j$, such that
\[
\norm{Y^0(v^k,f_k) \cdots Y^0(v^1, f_1)u} \le C \norm{f_1}_N \cdots \norm{f_k}_N.
\]
The constant $C$ depends on the $v^j$ and $u$.
\end{lem}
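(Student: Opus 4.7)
The strategy is to turn $\|Y^0(v^k,f_k)\cdots Y^0(v^1,f_1)u\|^2$ into a sum of single matrix coefficients of the form $\ip{X\,u, u}$, where $X$ is a product of $2k$ smeared fields. Then Lemma~\ref{lem:polynbounds}, applied with both input and output vector equal to $u$, bounds the matrix entries by polynomials in the Fourier indices whose coefficients depend on $u$ and the $v^j$ but crucially not on the $f_j$. This uniformity in $f_j$ is what lets us extract Sobolev-norm control via Cauchy--Schwarz. By linearity we may assume each $v^j$ is homogeneous.

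The mechanism for folding the squared norm is the adjoint relation arising from the invariant bilinear form axiom. For $v$ homogeneous of dimension $d$, a direct computation starting from $(Y(v,z)a,b) = (a, Y(e^{zL_1}(-z^{-2})^{L_0}v,z^{-1})b)$ produces
\[
\ip{v_n a, b} = (-1)^d\sum_{\ell=0}^{d}\frac{1}{\ell!}\ip{a, (\Theta L_1^\ell v)_{-n}\,b}\qquad(a,b\in\cV),
\]
a finite sum since $L_1^\ell v = 0$ for $\ell > d$. Consequently on $\cV$ one has $Y^0(v,f)^* = (-1)^d\sum_{\ell=0}^{d}\frac{1}{\ell!}Y^0(\Theta L_1^\ell v,\bar f)$, with $\|\bar f\|_N=\|f\|_N$. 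Iterating this identity rewrites
\[
\|Y^0(v^k,f_k)\cdots Y^0(v^1,f_1)u\|^2 = \ip{u,\, Y^0(v^1,f_1)^*\cdots Y^0(v^k,f_k)^*\,Y^0(v^k,f_k)\cdots Y^0(v^1,f_1)\,u}
\]
as a finite linear combination, indexed by tuples $(\ell_1,\ldots,\ell_k)$ with $0\le\ell_j\le d_{v^j}$, of terms of the form
\[
\ip{u,\,Y^0(\Theta L_1^{\ell_1}v^1,\bar f_1)\cdots Y^0(\Theta L_1^{\ell_k}v^k,\bar f_k)\,Y^0(v^k,f_k)\cdots Y^0(v^1,f_1)\,u}.
\]

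Each such term is then expanded in Fourier modes to yield a sum over $2k$ integer indices. Lemma~\ref{lem:polynbounds}, applied with $u'=u$ and the sequence of $2k$ vectors $(\Theta L_1^{\ell_1}v^1,\ldots,\Theta L_1^{\ell_k}v^k,v^k,\ldots,v^1)$, bounds each matrix entry by $|P(n_1,\ldots,n_k,m_1,\ldots,m_k)|$ for a polynomial $P$ of total degree $D$ depending only on the $v^j$ and $\Theta$. Dominating $P$ by $C\prod_j(1+n_j^2)^{D/2}(1+m_j^2)^{D/2}$ factors the resulting Fourier sum; the elementary Cauchy--Schwarz bound $\sum_n|\hat f(n)|(1+n^2)^{D/2}\le\|f\|_N\bigl(\sum_n(1+n^2)^{D-N}\bigr)^{1/2}$ converges for any integer $N>D+\tfrac12$, and controls each factor by $C_N\|f_j\|_N$. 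Summing over the finitely many tuples $(\ell_1,\ldots,\ell_k)$ gives $\|Y^0(v^k,f_k)\cdots Y^0(v^1,f_1)u\|^2\le C'\prod_{j=1}^k\|f_j\|_N^2$; taking square roots finishes the proof.

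The main point requiring care is that a single Sobolev exponent $N$ must suffice independent of the output vector, which rests essentially on the uniform-degree clause of Lemma~\ref{lem:polynbounds} — namely, that the degree of $P$ is bounded in terms of the $v^j$ alone, not of the test vectors. Without this uniformity, different exponents would be forced for different grading components contributing to the sum, and no single Sobolev norm on the $f_j$ would control the total.
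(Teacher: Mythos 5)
Your proof is correct and follows essentially the same route as the paper: fold the squared norm using the adjoint relation $\ip{v_n u,u'} = \ip{u,(e^{L_1}(-1)^{L_0}\Theta v)_{-n}u'}$, apply Lemma~\ref{lem:polynbounds} with $u'=u$ to the resulting string of $2k$ modes (relying, as you correctly stress, on the degree of the bounding polynomial being independent of the test vectors), and finish with Cauchy--Schwarz against a Sobolev norm. The only cosmetic difference is that you expand $e^{L_1}$ into a finite sum indexed by $(\ell_1,\ldots,\ell_k)$, whereas the paper keeps the single (non-homogeneous) vector $\tilde v^j = e^{L_1}(-1)^{L_0}\Theta v^j$.
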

\begin{proof}
Recall that $\norm{f}_N$ refers to the Sobolev norm \eqref{eqn: sobolev norm}.
The definition of a unitary M\"obius vertex algebra implies that
\[
\ip{v_n u, u'} = \ip{u, \big(e^{L_1}(-1)^{L_0}\Theta v\big)_{-n} u'}
\]
for all $u,u' \in \cV$ (see e.g.\! the proof of \cite[Theorem 5.16]{CKLW18}).
Thus, setting $\tilde v^j = e^{L_1}(-1)^{L_0}\Theta v^j$ we have
\begin{align*}
\big\|Y^0(v^k, f_k) &\cdots Y^0(v^1,f^1)u\big\|^2
=\\
&= \sum_{\mathclap{\substack{n_{1},\ldots,n_{k}\\
                              m_{1},\ldots,m_{k}}}}
 \hat{f}_{1}(n_{1})\overline{\hat{f}_{1}(m_{1})} \cdots \hat{f}_{k}(n_{k})\overline{\hat{f}_{k}(m_{k})} \,\, \langle \tilde v^{1}_{-m_{1}} \cdots \tilde v^{k}_{-m_{k}} v^{k}_{n_{k}} \cdots v^{1}_{n_{1}}u, \; u \rangle.
\end{align*}
By Lemma \ref{lem:polynbounds} we may choose a polynomial $p$ such that
\[
\big|\langle \tilde v^{1}_{-m_{1}} \cdots \tilde v^{k}_{-m_{k}} v^{k}_{n_{k}} \cdots v^{1}_{n_{1}}u, \; u \rangle \big|
\le 
\abs{p(m_1, \ldots, m_k, n_1, \ldots, n_k)},
\]
for all $n_j$ and $m_j$, with the degree of $p$ bounded independent of $u$.
Choose $M$ sufficiently large that
\[
\abs{p(m_1, \ldots, m_k, n_1, \ldots, n_k)} \le C\prod_{j=1}^k (1+m_j^2)^{M} (1+n_j^2)^{M}
\]
for some constant $C$. Note that we may choose $M$ independent of $u$ (although the constant $C$ does depend on $u$).

We then have
\begin{align*}
\big\|Y^0(v^k, f_k) \cdots Y^0(v^1,f^1)u\big\|^2 &\le 
C \sum_{\substack{n_{1},\ldots,n_{k}\\
                              m_{1},\ldots,m_{k}}} 
\prod_{j=1}^k \big| \hat{f}_{j}(n_{j})\hat{f}_{j}(m_{j}) \big| (1+n_j^2)^{M} (1+m_j^2)^{M}\\
&= C \prod_{j=1}^k \left(\sum_{n} \big|\hat f_j(n) \big| (1+n^2)^{M}  \right)^2\\
&\le C' \norm{f_1}^2_{2M+1} \cdots \norm{f_k}_{2M+1}^2,
\end{align*}
where in the last inequality we apply the Cauchy-Schwarz inequality to the expression
$
\big|\hat f_j(n)\big|(1+n^2)^{M+1/2} \cdot (1+n^2)^{-1/2}.
$
Note that there is no issue of convergence, as the $f_j$ are Laurent polynomials and all sums are finite.
This establishes the desired estimate with $N=2M+1$.
\end{proof}

In particular we have $\norm{Y^0(v,f)u} \le C \norm{f}_N$  when $f$ is a Laurent polynomial.
Thus, if $f \in C^\infty(S^1)$ and $f_k$ is a sequence of Laurent polynomials that converges to $f$ in $C^\infty(S^1)$, we have 
\[
\lim_{k \to \infty} Y^0(v,f_k)u = \sum_{n \in \bbZ} \hat f(n) v_n u,
\]
with convergence in $\cH$.
We may therefore define $Y^0(v,f):\cV \to \cH$ as in \eqref{eqn: smeared field Y 0 def} for arbitrary $f \in C^\infty(S^1)$.
As noted in the proof of Lemma \ref{lem: finite energy norm estimate}, the definition of a unitary M\"obius vertex algebra implies that
\[
\ip{v_n u, u'} = \ip{u, \big(e^{L_1}(-1)^{L_0}\Theta v\big)_{-n} u'}
\]
for all $u,u' \in \cV$.
Extending by continuity, we have
\[
\langle Y^0(v,f)u, u' \rangle = \langle u, Y^0(e^{L_1}(-1)^{L_0}\Theta v, \overline{f})u' \rangle
\]
for $f \in C^\infty(S^1)$, and thus we have an extension of unbounded operators
\[
Y^0(e^{L_1}(-1)^{L_0}\Theta v, \overline{f}) \subset Y^0(v,f)^* .
\]
In particular $Y^0(v,f)^*$ is densely defined, and consequently the operators $Y^0(v,f)$ are closable.
We denote the closure of $Y^0(v,f)$ by $Y(v,f)$.
Taking closures in the above extension, we obtain
\begin{equation}\label{eqn: smeared field adjoint}
Y(e^{L_1}(-1)^{L_0}\Theta v, \overline{f}) \subset Y(v,f)^* .
\end{equation}

A variant of the above argument shows that smeared fields $Y(v,f)$ may be repeatedly applied to finite-energy vectors.

\begin{prop}\label{prop: VOA has unif bdd order}
Let $\cV$ be a unitary M\"obius vertex algebra, let $v^1, \ldots, v^k \in \cV$, and let $f_1, \ldots, f_k \in C^\infty(S^1)$.
Then $\cV$ is contained in the domain of $Y(v^k,f_k) \cdots Y(v^1,f_1)$.
Moreover, there exists a number $N$ such that for every $u \in \cV$ we have
\[
\norm{Y(v^k,f_k) \cdots Y(v^1,f_1)u} \le C \norm{f_1}_N \cdots \norm{f_k}_N,
\]
for some constant $C$ (that depends on $u$).
\end{prop}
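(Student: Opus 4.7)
My plan is to proceed by induction on $k$, using Lemma~\ref{lem: finite energy norm estimate} as the principal estimate at every level, combined with closability of the operators $Y^0(v^j, f_j)$. For the induction to go through cleanly, the inductive hypothesis should carry three pieces of information: (a) the operator $Y(v^{k-1},f_{k-1}) \cdots Y(v^1,f_1)$ is defined on $\cV$; (b) the stated norm bound holds with some $N$ that is independent of $u$; and (c) for any sequence of Laurent polynomial approximants $f_j^{(n)} \to f_j$ in $C^\infty(S^1)$, the vectors $\xi_n^{(k-1)} := Y^0(v^{k-1}, f_{k-1}^{(n)}) \cdots Y^0(v^1, f_1^{(n)}) u \in \cV$ converge in $\cH$ to $Y(v^{k-1}, f_{k-1}) \cdots Y(v^1, f_1) u$. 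The base case $k=1$ is already essentially established in the paragraph preceding the proposition, with (b) following from Lemma~\ref{lem: finite energy norm estimate} and (c) from the very definition of $Y^0(v^1, f_1)$ as a limit.

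For the inductive step, fix approximants $f_j^{(n)} \to f_j$, so by (c) of the inductive hypothesis $\xi_n^{(k-1)} \to \eta := Y(v^{k-1}, f_{k-1}) \cdots Y(v^1, f_1) u$ in $\cH$. The key computation is to show that $Y^0(v^k, f_k) \xi_n^{(k-1)}$ is Cauchy in $\cH$. Decomposing $\xi_n^{(k-1)} - \xi_{n'}^{(k-1)}$ as a telescoping sum of $k-1$ terms, each of which differs only in the $j$-th $Y^0$ factor (with $f_j^{(n)}$ replaced by $f_j^{(n)} - f_j^{(n')}$), I apply Lemma~\ref{lem: finite energy norm estimate} to $Y^0(v^k, f_k^{(m)}) \cdot [\text{term}]$ and then take $m \to \infty$ using the base case. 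This gives an estimate of the form
\[
\|Y^0(v^k, f_k)(\xi_n^{(k-1)} - \xi_{n'}^{(k-1)})\| \le C \|f_k\|_N \sum_{j=1}^{k-1} \|f_j^{(n)} - f_j^{(n')}\|_N \prod_{\ell \ne j} \max\!\left(\|f_\ell^{(n)}\|_N, \|f_\ell^{(n')}\|_N\right),
\]
which tends to $0$ as $n, n' \to \infty$ since $\|f_j^{(n)} - f_j^{(n')}\|_N \to 0$ while the other Sobolev norms stay bounded. By closability of $Y^0(v^k, f_k)$, we conclude $\eta \in \dom(Y(v^k, f_k))$ and $Y(v^k, f_k)\eta = \lim_n Y^0(v^k, f_k) \xi_n^{(k-1)}$, establishing (a) at level $k$. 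Taking the limit of the analogous single-$n$ bound $\|Y^0(v^k, f_k) \xi_n^{(k-1)}\| \le C \|f_k\|_N \prod_{j<k} \|f_j^{(n)}\|_N$ gives (b). Property (c) at level $k$ follows from a parallel approximation argument, controlling $\xi_n^{(k)} - Y^0(v^k, f_k) \xi_n^{(k-1)} = Y^0(v^k, f_k^{(n)} - f_k) \xi_n^{(k-1)}$ by a Laurent polynomial approximation of $f_k^{(n)} - f_k$ together with Lemma~\ref{lem: finite energy norm estimate}.

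The main obstacle is the interlocking nature of the limits: one has Laurent polynomial approximations in the inputs $f_j$, a separate limit defining each closure $Y(v^j, f_j)$, and the iterated composition of operators. The argument hinges on the fact that Lemma~\ref{lem: finite energy norm estimate} supplies Sobolev-norm bounds that are uniform in all approximation parameters, so every telescoping estimate closes up. The decisive feature making this work is that the exponent $N$ in Lemma~\ref{lem: finite energy norm estimate} depends only on the $v^j$ and not on $u$; I can therefore fix a single $N$, slightly enlarged at each inductive step, that governs every Cauchy estimate in the induction and ends up as the $N$ in the statement of the proposition.
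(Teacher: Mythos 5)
Your proposal is correct and follows essentially the same route as the paper: induction on $k$ with Lemma~\ref{lem: finite energy norm estimate} supplying a $u$-independent Sobolev exponent $N$, followed by an appeal to closedness of $Y(v^k,f_k)$ to place the inductively constructed vector in its domain. The only difference is presentational: the paper packages your telescoping Cauchy estimates into a single continuous multilinear extension $X\colon H^{N}(S^1)^k\to\cH$ of $(f_1,\dots,f_k)\mapsto Y^0(v^k,f_k)\cdots Y^0(v^1,f_1)u$ and then invokes its continuity, which is established by exactly the density argument you carry out by hand.
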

\begin{proof}
We proceed by induction on $k$.
Suppose that $k \ge 1$ and that the result holds for $k-1$ (with the base case $k=0$ being vacuous).
By the inductive hypothesis we may choose $N$ such that 
\[
\norm{Y(v^{k-1},f_{k-1}) \cdots Y(v^1,f_1)u} \le C \norm{f_1}_N \cdots \norm{f_{k-1}}_N.
\]

 By Lemma \ref{lem: finite energy norm estimate} and the following discussion, there is a continuous multilinear map $X: H^{N'}(S^1)^k \to \cH$ such that when $f_1, \ldots, f_{k-1}$ are Laurent polynomials
 \[
 X(f_1, \ldots, f_k) = Y(v^k,f_k)Y(v^{k-1},f_{k-1}) \cdots Y(v^1,f_1)u.
 \]
 Making $N$ or $N'$ larger if necessary, we assume without loss of generality that $N=N'$.

Now let $f_1, \ldots, f_k \in C^\infty(S^1)$, and for $j=1, \ldots, k-1$ choose sequences of Laurent polynomials $f_{j,n}$ such that $f_{j,n} \to f_j$ in $H^N(S^1)$.
Then, by the inductive hypothesis, we have
 \[
\lim_{n \to \infty} Y(v^{k-1}f_{k-1,n}) \cdots Y(v^1,f_{1,n})u = Y(v^{k-1},f_{k-1}) \cdots Y(v^1,f_{1})u
 \]
 in $\cH$.
 Moreover, since $X$ is continuous we have that
 \begin{align*}
 \lim_{n \to \infty} Y(v^k,f_k) Y(v^{k-1},f_{k-1,n}) \cdots Y(v^1,f_{1,n})u &= \lim_{n \to \infty} X(f_{1,n}, \ldots, f_{k-1,n}, f_k)\\ 
 &=  X(f_1, \ldots, f_k).
 \end{align*}
Since $Y(v^k,f_k)$ is closed, it follows that $Y(v^{k-1}f_{k-1}) \cdots Y(v^1,f_{1})u$ lies in the domain of $Y(v^k,f_k)$, and that
\begin{equation}\label{eqn: prod of Y equals Phi}
Y(v^k,f_k)Y(v^{k-1}f_{k-1}) \cdots Y(v^1,f_{1})u = X(f_1, \ldots, f_k).
\end{equation}
 Since $u \in \cV$ was arbitrary, we have that $\cV$ is contained in the domain of the product $Y(v^k,f_k) \cdots Y(v^1,f_1)$.
The desired estimate on $\norm{Y(v^k,f_k) \cdots Y(v^1,f_1)u}$ follows immediately from \eqref{eqn: prod of Y equals Phi} and the continuity of $X$, completing the proof.
\end{proof}

\begin{rem}
The natural analog of Proposition \ref{prop: VOA has unif bdd order} with the vertex operators $Y(v,z)$ replaced by a unitary module action $Y^M(v,z)$ also holds, with the same proof.
However we will not discuss modules for vertex algebras in this article.
\end{rem}

\begin{cor}\label{cor: VOA gives adjointable distributions}
Let $\cV$ be a unitary M\"obius vertex algebra and let $v \in \cV$.
Let 
\[
\cD = \operatorname{Span} \{ Y(v^k, f_k) \cdots Y(v^1,f_1)\Omega \,\, : \,\, v^j \in \cV, \, f_j \in C^\infty(S^1), k \in \bbZ_{\ge 0}\}.
\]
Then the assignment $\varphi_v(f) = Y(v,f)\mid_\cD$ is an adjointable operator-valued distribution on $\cD$.
If $\tilde v = e^{L_1}(-1)^{L_0}\Theta v$, then $\varphi_v^\dagger = \varphi_{\tilde v}$.
\end{cor}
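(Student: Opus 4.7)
The plan is to assemble the statement from Proposition \ref{prop: VOA has unif bdd order} together with the single-field adjoint relation \eqref{eqn: smeared field adjoint}. Four things need to be checked: (i) $\cD$ is dense in $\cH$; (ii) $\varphi_v(f) = Y(v,f)\mid_\cD$ is a well-defined map $\cD \to \cH$; (iii) the map $f \mapsto \varphi_v(f)\Phi$ is continuous for each $\Phi \in \cD$; and (iv) $\varphi_v$ is adjointable with $\varphi_v^\dagger = \varphi_{\tilde v}$.

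For (i), I would first show $\cV \subset \cD$. Given a homogeneous $v \in \cV$ of conformal weight $d$, the creation property gives $Y^0(v, z^{-d})\Omega = v_{-d}\Omega = v_{(-1)}\Omega = v$, so $v \in \cD$; linearity extends this to all of $\cV$, and density of $\cV$ in $\cH$ then gives density of $\cD$. For (ii) and (iii), any $\Phi \in \cD$ is a finite linear combination of vectors of the form $\Phi_i = Y(v^{k_i},g_{k_i}) \cdots Y(v^1,g_1)\Omega$ with $v^j \in \cV$, $g_j \in C^\infty(S^1)$, and $\Omega \in \cV$. Applying $Y(v,f)$ on the left produces an expression of the same form with one additional factor; by Proposition \ref{prop: VOA has unif bdd order} (applied to $u = \Omega$), this is well-defined in $\cH$ and satisfies
\[
\norm{Y(v,f)\Phi_i} \le C_i \norm{f}_{N_i}\norm{g_1}_{N_i} \cdots \norm{g_{k_i}}_{N_i}.
\]
Since the $g_j$ are fixed, taking $N = \max_i N_i$ yields an estimate $\norm{Y(v,f)\Phi} \le C_\Phi \norm{f}_N$, establishing both that $\varphi_v(f)$ maps $\cD$ into $\cH$ and that $f \mapsto \varphi_v(f)\Phi$ is continuous on $C^\infty(S^1)$. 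As a byproduct, $Y(v,f)\cD \subset \cD$.

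For (iv), the same reasoning applied to $\tilde v = e^{L_1}(-1)^{L_0}\Theta v$ shows that $\varphi_{\tilde v}$ is an operator-valued distribution on $\cD$. Relation \eqref{eqn: smeared field adjoint} asserts $Y(\tilde v, \overline f) \subset Y(v,f)^*$. Since by (ii) both $\Phi, \Psi \in \cD$ lie in $\dom(Y(v,f)) \cap \dom(Y(\tilde v, \overline f))$, we obtain
\[
\ip{\varphi_v(f)\Phi, \Psi} = \ip{Y(v,f)\Phi, \Psi} = \ip{\Phi, Y(\tilde v, \overline f)\Psi} = \ip{\Phi, \varphi_{\tilde v}(\overline f)\Psi},
\]
which is exactly the adjointability condition together with $\varphi_v^\dagger = \varphi_{\tilde v}$.

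There is no substantive obstacle beyond chaining together the earlier results; the step requiring the most care is the density in (i), which rests on the creation property coupled with the density of finite-energy vectors. Everything else amounts to observing that Proposition \ref{prop: VOA has unif bdd order}, applied to expressions with one extra smeared field on the left, delivers simultaneously the domain statement, the continuity needed for a distribution, and the invariance $Y(v,f)\cD \subset \cD$.
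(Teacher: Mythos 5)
Your proposal is correct and follows exactly the route the paper takes: the paper's own proof simply cites Proposition \ref{prop: VOA has unif bdd order} for the fact that $\varphi_v$ and $\varphi_{\tilde v}$ are well-defined operator-valued distributions on $\cD$, and Equation \eqref{eqn: smeared field adjoint} for adjointability and the formula $\varphi_v^\dagger = \varphi_{\tilde v}$, which is precisely what you spell out in steps (ii)--(iv). Your extra care with density of $\cD$ via the creation property $v_{(-1)}\Omega = v$ (a standard consequence of the axioms, handled in the paper via the generating-set remark in Theorem \ref{thm: VA to Wightman}) is a reasonable elaboration rather than a deviation.
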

\begin{proof}
The fact that $\varphi_v$ and $\varphi_{\tilde v}$ are operator-valued distributions on $\cD$ follows immediately from Proposition \ref{prop: VOA has unif bdd order}.
Adjointability and the formula for $\varphi_v^\dagger$ now follow immediately from Equation \eqref{eqn: smeared field adjoint}.
\end{proof}

Note that Corollary~\ref{cor: VOA gives adjointable distributions} applies to an arbitrary unitary M\"obius vertex algebra $\cV$, in contrast to earlier results (e.g. in \cite{CKLW18}) which rely on specialized analytic assumptions on $\cV$, such as polynomial energy bounds.

\section{Vertex operators and M\"obius-covariant Wightman fields}\label{sec:vertexwightman}

In this section, we will show that the distributions described in Corollary \ref{cor: VOA gives adjointable distributions} satisfy the axioms of Wightman field theory.
We will also establish a converse result, that the space of finite-energy vectors in a Wightman field theory with uniformly bounded order can be equipped with a structure of M\"obius vertex algebra.

One of the technical challenges in establishing the Wightman axioms is that all fields act on a common domain $\cD$ which is in turn determined collectively by all of the fields (as it is spanned by expressions of the form $\varphi_k(f_k) \cdots \varphi_1(f_1)\Omega$).
It is easier to consider properties that can be studied for each field separately, which we will do by restricting our Wightman fields to the smaller domain of finite-energy vectors $\cV$.
In this section, we will formulate a notion of ``quasi-Wightman field theory'' which corresponds to a family of operator-valued distributions with domain $\cV$ that satisfy analogs of the axioms of Wightman fields, and we will show that, under the assumption of uniformly bounded order, Wightman field theories and quasi-Wightman field theories are both equivalent to unitary M\"obius vertex algebras.

\subsection{Regularity of adjointable fields on \texorpdfstring{$\cV$}{V}}\label{sec:regularity}

As before, fix a Hilbert space $\cH$ carrying a positive-energy representation $U:\Mob \to \cU(\cH)$ of $\Mob$, and denote by $\cV = \bigoplus_{n \in \bbZ_{\ge 0}} \cV(n)$ the dense subspace of finite-energy vectors. We assume that each $\cV(n)=\mathrm{ker}(L_0-n)$ is finite-dimensional and that $\dim \cV(0) = 1$.

 As discussed above, it will be a useful technical tool to consider a variation of the Wightman axioms in which the operator-valued distributions have domain $\cV$, and not a larger invariant domain.
 Working with such theories, which we call \textbf{quasi-Wightman field theories}, involve several technical trade-offs.
For our purposes, a key benefit is that quasi-Wightman field theories are easier to construct than Wightman field theories, since almost all of the imposed conditions are phrased in terms of properties of individual fields as opposed to interactions of families of fields.

We will now study general properties of operator-valued distributions defined on the domain $\cV$.
Write $\widehat \cV = \prod_{n \in \bbZ_{\ge 0}} \cV(n)$ for the algebraic completion of $\cV$, namely, the set of all sequences $(v_1, v_2,\cdots )$ with $v_n \in \cV(n)$, equipped with the product topology.
We have inclusions $\cV \subset \cH \subset \widehat \cV$ as linear spaces.
The inner product on $\cV$ gives an identification of $\widehat \cV$ with $\overline{\cV}^*$, the algebraic dual space of the complex conjugate of $\cV$.
We will generally use $\ip{ \, \cdot \, , \, \cdot \,}$ to denote both the inner product on $\cH$ and the induced sesquilinear pairing of $\cV$ with $\widehat \cV$, adding subscripts to clarify when necessary.
The topology on $\widehat \cV$ is induced by the family of linear functionals $\ip{ \, \cdot \, , u}$, for $u \in \cV$.

\begin{lem}\label{lem: extension of Wightman fields}
Let $\varphi$ be an operator-valued distribution with domain $\cV$.
Then the following are equivalent:
\begin{enumerate}
\item $\varphi$ is adjointable.\label{key:ovdist}
\item The map $\varphi:C^\infty(S^1) \times \cV \to \cH$ extends to a continuous map
$C^\infty(S^1) \times \cH \to \widehat \cV$.
\label{key:bicontinuous}
\end{enumerate}
\end{lem}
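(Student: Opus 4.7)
The topology on $\widehat\cV = \prod_n \cV(n)$ is generated by the continuous linear functionals $\ip{\cdot, u}$ for $u \in \cV$, and the embedding $\cH \hookrightarrow \widehat\cV$ sends $\Psi$ to the element $\iota(\Psi)$ determined by $\ip{\iota(\Psi),u} = \ip{\Psi,u}_\cH$. The plan is to use this description of $\widehat\cV$ to convert each condition into an assertion about the sesquilinear pairings $\ip{\varphi(f)\Phi, u}$, and then move between (1) and (2) via Riesz representation.

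For (\ref{key:ovdist})$\Rightarrow$(\ref{key:bicontinuous}), I would use the adjoint $\varphi^\dagger$ to directly define the extension: set
\[
\ip{\tilde\varphi(f)\Phi, u} \;:=\; \ip{\Phi, \varphi^\dagger(\overline{f})u}, \qquad \Phi \in \cH,\; u \in \cV.
\]
Since $\varphi^\dagger(\overline f)u$ lies in $\cH$, the right-hand side is a well-defined conjugate-linear functional of $u \in \cV$, hence specifies an element of $\widehat\cV$. For $\Phi \in \cV$ it agrees with $\iota(\varphi(f)\Phi)$ by the defining property of adjointability, so $\tilde\varphi$ extends $\varphi$. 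Joint continuity is then immediate from the topology on $\widehat\cV$: for fixed $u$, the map $f \mapsto \varphi^\dagger(\overline f)u$ is continuous $C^\infty(S^1)\to \cH$ by definition of an operator-valued distribution on $\cV$, and the inner product on $\cH$ is jointly continuous, so $(f,\Phi)\mapsto \ip{\Phi, \varphi^\dagger(\overline f)u}$ is jointly continuous for each $u$.

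For (\ref{key:bicontinuous})$\Rightarrow$(\ref{key:ovdist}), I would recover the adjoint from the extension via Riesz. Fix $f \in C^\infty(S^1)$ and $\Psi \in \cV$. Continuity of $\tilde\varphi$ in the $\cH$-variable makes $\Phi \mapsto \tilde\varphi(f)\Phi$ a continuous map $\cH \to \widehat\cV$; composing with the continuous functional $\ip{\cdot,\Psi}$ on $\widehat\cV$ yields a bounded linear functional $\Phi \mapsto \ip{\tilde\varphi(f)\Phi,\Psi}$ on $\cH$. By Riesz there is a unique $w \in \cH$ with $\ip{\Phi, w} = \ip{\tilde\varphi(f)\Phi, \Psi}$ for all $\Phi \in \cH$; define $\varphi^\dagger(\overline f)\Psi := w$. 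Restricting to $\Phi \in \cV$, where $\tilde\varphi(f)\Phi = \iota(\varphi(f)\Phi)$, gives the adjointability identity $\ip{\varphi(f)\Phi,\Psi} = \ip{\Phi, \varphi^\dagger(\overline f)\Psi}$, and linearity in $f$ and $\Psi$ is clear. As already observed in the excerpt, the required continuity of $f \mapsto \varphi^\dagger(f)\Psi$ is then automatic from the closed graph theorem, so $\varphi^\dagger$ is a genuine operator-valued distribution and $\varphi$ is adjointable.

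The main subtlety is keeping the different topologies straight: $\widehat\cV$ carries only the topology of pointwise pairing with $\cV$ (so convergence there is very weak), and one must recognize that this is precisely strong enough, via Riesz on $\cH$ together with the hypothesis that $\Psi$ lies in $\cV$ (not just $\cH$), to produce the adjoint as a genuine $\cH$-valued operator. No deeper analytic input (e.g. Banach--Steinhaus or polynomial energy bounds) should be required, beyond the automatic continuity provided by the closed graph theorem.
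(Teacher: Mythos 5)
Your proof is correct and follows essentially the same route as the paper: the forward direction defines the extension by the pairing $\ip{\tilde\varphi(f)\Phi,u}=\ip{\Phi,\varphi^\dagger(\overline f)u}$, and the converse recovers $\varphi^\dagger$ from the Hilbert-space adjoint (your Riesz step) and gets continuity in $f$ from the closed graph theorem. The only cosmetic difference is that you obtain joint continuity directly by composing with the inner product, where the paper notes separate continuity and invokes Banach--Steinhaus; both are fine.
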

\begin{proof}

Note that joint and separate continuity of maps $C^\infty(S^1) \times \cH \to \widehat \cV$ are equivalent by the Banach-Steinhaus theorem
for Fr\'echet spaces (see e.g.\! \cite[Theorem 2.17]{Rudin91}).

First assume that $\varphi$ is an adjointable field.
For each $f \in C^\infty(S^1)$ we extend $\varphi(f)$ to a map $\cH \to \widehat \cV$ 
as follows.
For $\xi \in \cH$, we define $\varphi(f)\xi \in \widehat \cV \cong \overline{\cV}^*$ by requiring
\[
\ip{\varphi(f)\xi,v}_{\widehat \cV, \cV} = \ip{\xi, \varphi^\dagger(\overline{f})v}_{\cH}.
\]
The latter expression is evidently continuous in $\xi$ and $f$ separately, hence jointly.

Conversely, assume that $\varphi$ extends to a continuous map $C^\infty(S^1) \times \cH \to \widehat \cV$.
We use the symbol $\varphi(f)$ to denote both the map $\cV \to \cH$ as well as the extension $\cH \to \widehat \cV$.
For each $f \in C^\infty(S^1)$ and $v \in \cV$ the map $\cH \to \bbC$ given by $\xi \mapsto \ip{\varphi(f)\xi,v}$ is continuous by assumption.
Restricting to $\xi \in \cV$ we have $v \in \dom(\varphi(f)^*)$ for every $f$.
Define $\varphi^\dagger:C^\infty(S^1) \times \cV \to \cH$ by the formula $\varphi^\dagger(f)v = \varphi(\overline{f})^*v$.
By construction we have 
\[
\ip{\xi, \varphi^\dagger(\overline{f})v}_{\cH} = \ip{\varphi(\overline{f})\xi,v}_{\widehat \cV,\cV},
\]
for $\xi,v \in \cV$, and by continuity this extends to all $\xi \in \cH$.
By assumption the right-hand expression is continuous in $f$.
Thus $f\mapsto \varphi^\dagger(f)v$ is continuous as a map into $\cH$ equipped with the weak topology.
This then implies that this map is continuous into $\cH$ with the norm topology by the closed graph theorem (see e.g.\! \cite[Theorem 5.1]{Rudin91}),
and thus $\varphi^\dagger$ is an operator-valued distribution.
\end{proof}

As a consequence of Lemma~\ref{lem: extension of Wightman fields}, when $\varphi$ is an adjointable field with domain $\cV$
we will implicitly extend $\varphi(f)$ to a continuous map $\varphi(f):\cH \to \widehat \cV$.
We will generally consider compound expressions which define vectors in $\widehat \cV$ or some subspace.
For example, if $\varphi$ and $\psi$ are adjointable fields and $u \in \cV$, then we have $\varphi(f)\psi(g)u \in \widehat \cV$.

As described in Section \ref{sec: two formulations}, Wightman fields arising from vertex operators are not only adjointable; they also satisfy a certain uniformity condition on the orders of the distributions. We describe this condition below in the context of a single operator-valued distribution with domain $\cV$.
This is an analog of the condition (W\ref{itm: W unif bdd order}) of uniformly bounded order for a Wightman field theory, which is phrased for a collection of Wightman fields acting on an invariant domain.
This condition will eventually play a crucial role in comparing locality of fields in the Wightman sense with locality in the sense of vertex operators.

\begin{defn}\label{defn: uniformly bounded order}
An adjointable operator-valued distribution $\varphi$ with domain $\cV$ is said to have \textbf{uniformly bounded order}
if there is a positive number $N$ such that the assignment $f \mapsto \varphi(f)v$ extends to a continuous map $H^N(S^1) \to \cH$ for every $v \in \cV$.
\end{defn}

Note that for each $v$ we are guaranteed that the map $f \mapsto \varphi(f)v$ extends continuously to $H^N(S^1)$ for some $N$ by the definition of the topology on $C^\infty(S^1)$, but in Definition \ref{defn: uniformly bounded order} we insist on choosing $N$ independent of $u$.

We have the following analog of Lemma~\ref{lem: extension of Wightman fields}.
\begin{lem}\label{lem: uniformly bounded order of extension}
Let $\varphi$ be an adjointable field with domain $\cV$.
Then the following are equivalent:
\begin{enumerate}
\item $\varphi^\dagger$ has uniformly bounded order.
\item For some $N>0$ the map $\varphi:C^\infty(S^1) \times \cH \to \widehat \cV$ extends to a continuous map $H^N(S^1) \times \cH \to \widehat \cV$.
\end{enumerate}
\end{lem}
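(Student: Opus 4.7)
The plan is to pass between $\varphi$ and $\varphi^\dagger$ using the extension formula produced in Lemma \ref{lem: extension of Wightman fields}, namely $\ip{\varphi(f)\xi, v}_{\widehat\cV,\cV} = \ip{\xi, \varphi^\dagger(\overline f)v}_\cH$ for $\xi \in \cH$, $v \in \cV$, $f \in C^\infty(S^1)$. Since the topology on $\widehat\cV$ is the product topology (equivalently, the weak topology induced by the linear functionals $\ip{\cdot,v}$ for $v \in \cV$), a map into $\widehat\cV$ is continuous if and only if each pairing against a vector in $\cV$ is continuous. The whole lemma therefore reduces to showing that for each $v \in \cV$, the scalar-valued bilinear form
\[
B_v(f,\xi) \;=\; \ip{\xi, \varphi^\dagger(\overline f)v}_\cH
\]
extends continuously from $C^\infty(S^1) \times \cV$ to $H^N(S^1) \times \cH$ if and only if $f \mapsto \varphi^\dagger(f)v$ extends continuously from $C^\infty(S^1)$ to $H^N(S^1) \to \cH$.

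For the direction $(1) \Rightarrow (2)$, I would assume $\varphi^\dagger$ has uniformly bounded order with exponent $N$. Because complex conjugation is an isometry of $H^N(S^1)$, the map $f \mapsto \varphi^\dagger(\overline f)v$ also extends continuously from $H^N(S^1)$ to $\cH$. Consequently $B_v$ is separately continuous on $H^N(S^1) \times \cH$, and since both factors are Banach spaces, separate continuity of a bilinear form implies joint continuity by the Banach--Steinhaus theorem. Assembling these extensions for every $v \in \cV$ yields joint continuity of the extended $\varphi:H^N(S^1) \times \cH \to \widehat\cV$ (with the product topology on the target).

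For the reverse direction $(2) \Rightarrow (1)$, I would start from the hypothesis that $\varphi$ extends jointly continuously to $H^N(S^1) \times \cH \to \widehat\cV$. Pairing against a fixed $v \in \cV$ gives a jointly continuous bilinear map $(f,\xi) \mapsto \ip{\varphi(\overline f)\xi,v}_{\widehat\cV,\cV}$ from $H^N(S^1) \times \cH$ to $\bbC$, which is then automatically bounded: there exists $C_v$ with $|\ip{\varphi(\overline f)\xi,v}| \le C_v\,\norm{f}_N\norm{\xi}_\cH$. Invoking the extension formula for $\xi \in \cV$ (which identifies this with $\ip{\xi,\varphi^\dagger(f)v}_\cH$) and then extending to $\xi \in \cH$ by continuity, I can take the supremum over $\norm{\xi}_\cH = 1$ to obtain $\norm{\varphi^\dagger(f)v}_\cH \le C_v\,\norm{f}_N$ for all $f \in C^\infty(S^1)$. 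Density of $C^\infty(S^1)$ in $H^N(S^1)$ then lets me extend $f \mapsto \varphi^\dagger(f)v$ continuously, establishing uniformly bounded order of $\varphi^\dagger$.

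I expect the only real care is bookkeeping: keeping track of which pairing (the Hilbert inner product on $\cH$ versus the sesquilinear duality pairing between $\cV$ and $\widehat\cV$) is being used, and ensuring the complex conjugations on test functions are tracked correctly so that the identification of the two sides of the extension formula matches on $\cV$ before extending. No additional analytic input beyond Lemma \ref{lem: extension of Wightman fields}, the Banach--Steinhaus theorem on Banach spaces, and density of $C^\infty(S^1)$ in $H^N(S^1)$ should be needed.
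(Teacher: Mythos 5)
Your proposal is correct and is essentially the paper's proof: the paper simply says to repeat the argument of Lemma \ref{lem: extension of Wightman fields} with $C^\infty(S^1)$ replaced by $H^N(S^1)$, and that is exactly what you carry out, pairing against vectors $v\in\cV$ and using the identity $\ip{\varphi(f)\xi,v}_{\widehat\cV,\cV}=\ip{\xi,\varphi^\dagger(\overline f)v}_\cH$ together with Banach--Steinhaus and the invariance of $\norm{\cdot}_N$ under conjugation. The only (harmless) cosmetic difference is that in the converse direction you extract the norm bound $\norm{\varphi^\dagger(f)v}\le C_v\norm{f}_N$ directly from boundedness of the jointly continuous bilinear form on Banach spaces, rather than invoking the closed graph theorem as the Fr\'echet-space version of the argument does.
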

\begin{proof}
This can be proved in a parallel way as in Lemma~\ref{lem: extension of Wightman fields},
replacing the Fr\'echet space $C^\infty(S^1)$ by the Hilbert space $H^N(S^1)$.
\end{proof}

\subsection{M\"obius covariance for adjointable fields}
As above, let $\cH$ be a Hilbert space carrying a positive-energy representation $U:\Mob \to \cU(\cH)$, and let $\cV$ be the subspace of finite-energy vectors, with finite-dimensional weight spaces $\cV(n) = \mathrm{ker}(L_0 - n)$ and $\dim \cV(0) = 1$.
As described in Section \ref{sec: PERs}, $U$ is the exponentiation of a positive-energy representation of $\Lie(\Mob)$.
Given a complexified vector field $g \frac{d}{d\theta} \in \Lie(\Mob)_\bbC$, we write $\pi(g \tfrac{d}{d\theta}) \in \spann_\bbC \{L_{-1},L_0,L_1\}$ for the corresponding operator, which we may regard as a map of $\cV$ into $\cV$ or as a map of $\widehat \cV$ into $\widehat \cV$.
Thus, if $\varphi$ is an adjointable field, we may consider expressions $\pi(g \tfrac{d}{d \theta}) \varphi(f)u \in \widehat \cV$ and $\varphi(f)\pi(g \tfrac{d}{d\theta})u \in \cH$.
This allows us to state transformation rules for $\varphi$ under the action of $\Lie(\Mob)$ in terms of familiar commutation relations.

\begin{rem}\label{rem: gddtheta domain}
If $g\tfrac{d}{d\theta} \in \Lie(\Mob)$, then $i\pi(g \tfrac{d}{d\theta})$ is a linear combination of $L_1$, $L_0$ and $L_{-1}$, and so it
is essentially skew-adjoint on $\cV$ by the commutator theorem \cite[Theorem X.37]{RSII} and the explicit estimates in \cite[Section 1.5]{Longo08}.
Hence the domain of its closure on $\cV$ coincides with the domain of $(\pi(g \tfrac{d}{d\theta})|_\cV)^*$.
From this point of view, the operator $\pi(g \tfrac{d}{d\theta})$ can be seen as a map from $\cH$ to $\widehat\cV$ and the domain of the closure of $\pi(g \tfrac{d}{d\theta})\!\mid_\cV$ consists of the vectors $\xi \in \cH$ such that $\pi(g \tfrac{d}{d\theta})\xi \in \cH$.
\end{rem}

An adjointable field $\varphi$ on $\cV$ is called \textbf{M\"obius-covariant} (or \textbf{quasi-primary}) \textbf{with degree $d$}, \textbf{in the Wightman sense}
if, for every $f \in C^\infty(S^1)$, $u \in \cV$, and $\gamma \in \Mob$, we have\footnote{While \textit{a priori} $\varphi(f)U(\gamma)^*u \in \widehat \cV$ we require that it is, in fact, in $\cH$} $\varphi(f)U(\gamma)^*u \in \cH$,
and moreover $\varphi$ satisfies the covariance condition
\[
U(\gamma)\varphi(f)U(\gamma)^*u = \varphi(\beta_d(\gamma)f)u,
\]
where $\beta_d$ is defined in \eqref{eq:beta}.
Similarly, an adjointable field $\varphi$ on $\cV$ is called \textbf{M\"obius-covariant} (or \textbf{quasi-primary})
\textbf{with degree $d$ in the vertex algebra sense}
if it satisfies the infinitesimal covariance condition with degree $d$:
\begin{align}\label{eq:infcomm}
[\pi(g \tfrac{d}{d \theta}), \varphi(f)]u = \varphi((d-1)\tfrac{dg}{d\theta} f - g \tfrac{d f}{d\theta})u
\end{align}
for all $g \tfrac{d}{d \theta} \in \Lie(\Mob)_\bbC$, $u \in \cV$, and $f \in C^\infty(S^1)$.
For a quasi-primary field $\varphi$ with degree $d$, we have a formal series expansion
$\hat \varphi(z) = \sum_n \varphi(e_n) z^{-n-d}$, where $e_n(e^{i\theta}) = e^{in\theta}$.
By taking $g(e^{i\theta}) = -ie^{ik\theta}$, we see that M\"obius covariance in the vertex algebra sense is equivalent to the familiar condition
\begin{align}\label{eq:quasi-primary}
 [L_{k}, \hat \varphi(z)] = (z^{k+1} \tfrac{d}{dz} + (k+1)z^k d) \hat \varphi(z), \qquad k=-1,0,1.
\end{align}

\begin{lem}\label{lem: Mobius covariance conditions}
Let $\varphi$ be an adjointable operator-valued distribution with domain $\cV$ and let $d$ be a non-negative integer.
Then $\varphi$ is M\"obius-covariant with degree $d$ in the Wightman sense 
if and only if it is so in the vertex algebra sense.
\end{lem}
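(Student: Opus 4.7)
The strategy is to reduce the equivalence to real one-parameter subgroups $\gamma_t = \exp(tg\tfrac{d}{d\theta}) \subset \Mob$: since $\Mob$ is connected and generated by images of such subgroups, and since both covariance conditions depend $\bbC$-linearly on $g\tfrac{d}{d\theta}$, it suffices to compare them along one-parameter subgroups generated by real vector fields. A short differentiation of the defining formula for $\beta_d$ yields $\tfrac{d}{dt}\big|_{t=0}\beta_d(\gamma_t)f = (d-1)\tfrac{dg}{d\theta}f - g\tfrac{df}{d\theta}$, identifying the two conditions at the infinitesimal level.

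For the direction (W) $\Rightarrow$ (VA), I fix $u, w \in \cV$ and differentiate the scalar identity
\[
\ip{\varphi(\beta_d(\gamma_t)f)u, w}_\cH = \ip{\varphi(f)U(\gamma_t)^*u, U(\gamma_t)^*w}_\cH
\]
at $t = 0$. The left side is differentiable by continuity of $\varphi$ and smoothness of $t \mapsto \beta_d(\gamma_t)f$ in $C^\infty(S^1)$; the right side because $u, w \in \cV$ are smooth vectors for $U$. Writing $A = \pi(g\tfrac{d}{d\theta})$ and using $A^* = -A$ on $\cV$, equating derivatives gives $\ip{\varphi((d-1)g'f - gf')u, w} = \ip{[A, \varphi(f)]u, w}$. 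Varying $w \in \cV$ then yields the infinitesimal identity as vectors in $\widehat \cV$.

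The converse is more delicate. First, a matrix-element calculation using adjointability, the VA covariance of $\varphi$, and $A^* = -A$ on $\cV$ shows that $\varphi^\dagger$ also satisfies the infinitesimal covariance with the same degree $d$; explicitly, for $w \in \cV$ and real $A$,
\[
\bar A\varphi^\dagger(\bar f)w = \varphi^\dagger(\bar f)Aw + \varphi^\dagger(\overline{(d-1)g'f - gf'})w,
\]
in particular $\varphi^\dagger(\bar f)w \in \dom \bar A$. Next, fix $u, w \in \cV$ and $f \in C^\infty(S^1)$, and set
\[
H_u(t) := \ip{U(\gamma_t)^*\varphi(\beta_d(\gamma_t)f)u, w}_\cH - \ip{U(\gamma_t)^*u, \varphi^\dagger(\bar f)w}_\cH,
\]
so $H_u(0) = 0$. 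Differentiating, using $\tfrac{d}{dt}\beta_d(\gamma_t)f = B\beta_d(\gamma_t)f$ (where $B$ is the infinitesimal action), the VA covariance of $\varphi$ and $\varphi^\dagger$, skew-adjointness of $\bar A$, and the commutation $\bar A U(\gamma_t) = U(\gamma_t)\bar A$ on $\dom \bar A$, one obtains after telescoping the recursion $H_u'(t) = -H_{Au}(t)$. Iterating yields $H_u^{(n)}(0) = (-1)^n H_{A^n u}(0) = 0$ for every $n \geq 0$.

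To upgrade vanishing of all derivatives at $0$ to $H_u \equiv 0$, I first restrict to Laurent polynomials $f$: then $(t,\theta) \mapsto \beta_d(\gamma_t)f(e^{i\theta})$ is globally real-analytic on $\bbR \times S^1$, so $t \mapsto \varphi(\beta_d(\gamma_t)f)u$ is real-analytic in $\cH$; combined with the standard fact that finite-energy vectors of a positive-energy $\Mob$-representation are analytic vectors for $U$ (so $t \mapsto U(\gamma_t)^*u$ and $t \mapsto U(\gamma_t)w$ are globally real-analytic curves), $H_u(t)$ is real-analytic on $\bbR$, forcing $H_u \equiv 0$. Continuity of $H_u$ in $f$ and density of $\bbC[z^{\pm 1}]$ in $C^\infty(S^1)$ then extend the conclusion to arbitrary $f$. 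Unwinding, $H_u \equiv 0$ is exactly $\varphi(f)U(\gamma_t)^*u = U(\gamma_t)^*\varphi(\beta_d(\gamma_t)f)u \in \cH$, the Wightman covariance along $\gamma_t$, which extends to all of $\Mob$ by the reduction above. The main technical obstacle is the real-analyticity step, which rests on combining joint real-analyticity of $(t,\theta) \mapsto \gamma_t(e^{i\theta})$ with analyticity of finite-energy vectors under the $\Mob$-action; the rest of the argument is bookkeeping of domains and adjoints.
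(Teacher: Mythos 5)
Your proposal is correct and follows essentially the same route as the paper's proof: reduce to one-parameter subgroups $\gamma_t=\exp(tg\tfrac{d}{d\theta})$, restrict to Laurent polynomial test functions to get real-analyticity in $t$, match all Taylor coefficients at $t=0$ via a telescoping derivative computation that uses skew-adjointness of $\pi(g\tfrac{d}{d\theta})$ together with the infinitesimal covariance (which is exactly what places $\varphi(\beta_d(\gamma_t)f)u$ in the domain of the closed generator), and conclude by analyticity and density. Your scalar recursion $H_u'(t)=-H_{Au}(t)$ is just a repackaging of the paper's computation of $u_1^{(n)}(0)$ and $u_2^{(n)}(0)$ for the two $\widehat\cV$-valued curves, so there is nothing substantively different to compare.
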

\begin{proof}
First assume that $\varphi$ satisfies the infinitesimal covariance condition \eqref{eq:infcomm}.
Let $g \frac{d}{d \theta} \in \Lie(\Mob)$ be a real vector field, and let $\gamma_t = \exp(t g \frac{d}{d \theta}) \in \Mob$ be the corresponding one-parameter group.
Note that the exponential map from $\Lie(\Mob)$ to $\Mob$ is surjective, so it suffices to prove  covariance for transformations of the form $\gamma_t$.
The operator $\pi(g \frac{d}{d\theta})$ is skew-adjoint (we work with the closure of the operator defined on finite-energy vectors) and we have $U(\gamma_t) = e^{t \pi(g \frac{d}{d\theta})}$
(see \cite[Section 1.5]{Longo08}).
Fix $u \in \cV$, and let 
$$
u_1(t) = \varphi(f)U(\gamma_{-t})u,
\quad \mbox{ and } \quad
u_2(t) = U(\gamma_{-t})\varphi(\beta_d(\gamma_{t})f)u.
$$
Note that $u_1$ naturally takes values in $\widehat \cV$ and $u_2$ naturally takes values in $\cH$.
We would now like to show that $u_1 = u_2$.
As both $u_1(t)$ and $u_2(t)$ are continuous in $f$ (as maps into $\widehat \cV$), we may assume without loss of generality that $f$ is a Laurent polynomial.

We first argue that both $u_1$ and $u_2$ extend to holomorphic functions on a neighborhood of $\bbR$.
Indeed, by \cite[Proposition A.2.9]{WeinerThesis}, the function $U(\gamma_{-t})u = U(\gamma_{-t_0})U(\gamma_{-(t-t_0)})u$ extends to a holomorphic function into $\cH$ defined on a neighborhood of an arbitrary $t_0 \in \bbR$.
Composing with the continuous linear map $\varphi(f): \cH \to \widehat \cV$ yields a $\widehat \cV$-valued holomorphic extension of $u_1$
(see \cite[Definition 3.30]{Rudin91}).

We now consider $u_2$.
As each $\gamma_t$ maps $S^1$ onto itself, the map $t \mapsto \gamma_t$ has a holomorphic extension to a neighborhood of $\bbR$, taking values in M\"obius transformations that map $S^1$ into $\bbC^\times$.
Since $f$ is a Laurent polynomial, the map $(t,z) \mapsto (\beta_d(\gamma_t)f)(z)$ has an extension to a neighborhood of $\bbR \times S^1$
which is (separately, and therefore jointly) holomorphic.
Thus the function $t \mapsto \beta_d(\gamma_t)f$ extends to a holomorphic function from a neighborhood of $\bbR$ into $C^\infty(S^1)$,
since the $z$-derivatives of $(\beta_d(\gamma_t)f)(z)$ are uniformly continuous in a compact neighborhood of $S^1$.
Composing with the continuous linear map $\varphi$ shows that $t \mapsto \varphi(\beta_d(\gamma_t)f)u$ has a holomorphic extension as a map into $\cH$.
Hence,
\[
\ip{u_2(t), v} = \ip{\varphi(\beta_d(\gamma_t)f)u, U(\gamma_{-t})^*v}
\]
has a holomorphic extension for every $v \in \cV$.
We conclude that $u_2$ extends to a holomorphic map from a neighborhood of $\bbR$ into $\widehat \cV$.

As both $u_1$ and $u_2$ are analytic, it suffices to show that their Taylor series centered at $t=0$ coincide.
For $u_1$, it is straightforward to compute the derivatives 
\[
\frac{d}{dt} \ip{u_1(t),v} = 
\frac{d}{dt} \ip{U(\gamma_{-t})u,\varphi^{\dagger}(\overline{f})v} = 
\ip{\varphi(f) U(\gamma_{-t})\pi(-g \tfrac{d}{d\theta}) u, v}.
\]
Repeatedly differentiating (while noting that $\pi(-g\tfrac{d}{d\theta})$ maps $\cV$ into itself) we obtain 
\begin{equation}\label{eqn: u1 derivatives}
u_1^{(n)}(0) = \varphi(f) \pi(-g \tfrac{d}{d\theta})^n u.
\end{equation}

We now turn our attention to $u_2$.
Recall from Remark \ref{rem: tensor field} that $\beta_d(\gamma_t)f$ is given by the pushforward
\[
{\gamma_t}_* \left(f \tfrac{d}{d\theta}^{\otimes d-1}\right) = \beta_d(\gamma_t)f \tfrac{d}{d\theta}^{\otimes d-1},
\]
where in the case $d=0$ we formally set $\tfrac{d}{d\theta}^{\otimes -1} := d\theta$.
We thus have
\begin{align}\label{eqn: ddt betad gammat}
\frac{d}{dt} \, \beta_d(\gamma_t)f \tfrac{d}{d\theta}^{\otimes d-1} &=
\frac{d}{dt} {\gamma_t}_* \left(f \tfrac{d}{d\theta}^{\otimes d-1}\right)
=
-\left.\frac{d}{ds}\right|_{s=0} {\gamma_{-s}}_* \left({\gamma_t}_* \left( f \tfrac{d}{d\theta}^{\otimes d-1}\right)\right)\\
& =
-\cL_{g \tfrac{d}{d\theta}} \left( \beta_d(\gamma_t)f\tfrac{d}{d\theta}^{\otimes d-1}\right) \nonumber\\
&=\left((d-1)\tfrac{dg}{d\theta} \beta_d(\gamma_t)f - g \tfrac{d}{d\theta}\!\left[\beta_d(\gamma_t)f\right]\right)\tfrac{d}{d\theta}^{\otimes d-1} , \nonumber
\end{align}
where $\cL$ is the Lie derivative of tensor fields (see \cite[Chapter 12]{Lee13SmoothManifolds}).
As $f \mapsto \varphi(f)u$ is continuous, we can plug in the formula for $\tfrac{d}{dt} \beta_d(\gamma_t)f$ from \eqref{eqn: ddt betad gammat} and compute
\begin{align*}
\frac{d}{dt} \ip{u_2(t), v} &= \frac{d}{dt} \ip{ \varphi(\beta_d(\gamma_t)f)u, e^{t \pi(g \tfrac{d}{d\theta})}v}\\
&= \ip{ \varphi\big((d-1)\tfrac{dg}{d\theta} \beta_d(\gamma_t)f - g \tfrac{d}{d\theta}\!\left[\beta_d(\gamma_t)f\right]\big)u, e^{t \pi(g \tfrac{d}{d\theta})}v}\\
&\phantom{=}+ \ip{ \varphi(\beta_d(\gamma_t)f)u, \pi(g \tfrac{d}{d\theta})e^{t \pi(g \tfrac{d}{d\theta})}v}.
\end{align*}

By the infinitesimal commutation relation \eqref{eq:infcomm}, $\pi(g \tfrac{d}{d\theta})\varphi(\beta_d(\gamma_t)f)$ is a linear combination of vectors in $\cH$, and thus lies in $\cH$ as well.
By Remark \ref{rem: gddtheta domain}, 
$\varphi(\beta_d(\gamma_t)f)u$ lies in the domain of the skew-adjoint unbounded operator $\pi(g \tfrac{d}{d\theta})$.
We can now continue the above calculation:
\begin{align*}
\frac{d}{dt} \ip{u_2(t), v} &=
\ip{[\pi(g\tfrac{d}{d\theta}), \varphi(\beta_d(\gamma_t)f)]u, e^{t \pi(g \tfrac{d}{d\theta})}v}
-
\ip{\pi(g\tfrac{d}{d\theta}) \varphi(\beta_d(\gamma_t)f)u, e^{t \pi(g \tfrac{d}{d\theta})}v}\\
&= \ip{\varphi(\beta_d(\gamma_t)f)\pi(-g\tfrac{d}{d\theta})u, e^{t \pi(g \tfrac{d}{d\theta})}v}\\
&= \ip{e^{-t \pi(g \tfrac{d}{d\theta})}\varphi(\beta_d(\gamma_t)f)\pi(-g\tfrac{d}{d\theta})u, v}.
\end{align*}
We now repeatedly apply this computation to conclude that
\[
{\frac{d}{dt}}^n \ip{u_2(t), v} = \ip{e^{-t \pi(g \tfrac{d}{d\theta})}\varphi(\beta_d(\gamma_t)f)\pi(-g\tfrac{d}{d\theta})^n u, v}.
\]
Combining this with \eqref{eqn: u1 derivatives}, we obtain
\[
u_2^{(n)}(0) = \varphi(f) \pi(-g \tfrac{d}{d\theta})^n u = u_1^{(n)}(0)
\]
for all $n\ge0$.
As $u_1$ and $u_2$ are analytic, we conclude that $u_1 = u_2$.
The map $u_2$ takes values in $\cH$, so it follows that $u_1$ does as well.
Hence $\varphi(f)U(\gamma_{-t})u \in \cH$, and multiplying $u_1$ and $u_2$ by $U(\gamma_t)$ we get
\[
U(\gamma_t)\varphi(f)U(\gamma_{-t})u = \varphi(\beta_d(\gamma_t)f)u,
\]
completing one direction of the proof.

Now assume conversely that $\varphi$ satisfies the Wightman version of M\"obius covariance.
Without loss of generality we take $f$ to be a Laurent polynomial as above, as the infinitesimal covariance condition is continuous in $f$.
It suffices to consider $g \frac{d}{d\theta} \in \Lie(\Mob)$, and we take $\gamma_t$ as above.
The functions $u_1$ and $u_2$, defined as above, are analytic, and this time by assumption we have $u_1 = u_2$.
Evaluating the derivatives as before, we have
\[
u_1^\prime(0) = -\varphi(f)\pi(g \tfrac{d}{d\theta})u
\] 
and 
\[
u_2^\prime(0) = -\pi(g \tfrac{d}{d\theta})\varphi(f)u + \varphi((d-1)\tfrac{dg}{d\theta} f - g \tfrac{d f}{d\theta})u.
\]
The infinitesimal covariance condition follows immediately from the equality of these two quantities.
\end{proof}

\subsection{Locality for adjointable fields}
We continue with our Hilbert space $\cH$ carrying a positive-energy representation $U$ of $\Mob$.
We have finite-energy vectors $\cV = \bigoplus_{n =0}^\infty \cV(n)$, with each $\cV(n)$ finite-dimensional and $\dim \cV(0) = 1$.
We now turn to discussing the notion of locality for a pair of adjointable fields $\varphi$ and $\psi$ with domain $\cV$.
Using the extension of adjointable fields to maps $\cH \to \widehat \cV$ from Lemma~\ref{lem: extension of Wightman fields}, we may consider products like $\varphi(f)\psi(g)$ as maps $\cV \to \widehat \cV$, and we can define locality in terms of these products.

Our goal is to relate adjointable fields to point-like formal distributions as studied in the context of vertex algebras.
Let $\varphi$ be an adjointable field with domain $\cV$ that is M\"obius-covariant with degree $d$.
Let $\varphi_n = \varphi(e_n)$, where $e_n(z) = z^n$, and observe that $\varphi_n \in \End(\cV)$ by rotation covariance.
The associated formal distribution $\hat \varphi(z) \in \End(\cV)[[z^{\pm 1}]]$ is given by 
\begin{align}\label{eqn: formal distribution}
\hat \varphi(z) = \sum_{n \in \bbZ} \varphi_n z^{-n-d}.
\end{align}

Now let $\varphi$ and $\psi$ be a pair of M\"obius-covariant adjointable fields with domain $\cV$, and let $\hat \varphi$ and $\hat \psi$ be the associated formal distributions.
We say that $\varphi$ and $\psi$ are \textbf{relatively local in the vertex algebra sense} if 
for $N$ sufficiently large we have an equality of formal series $(z-w)^N [\hat \varphi(z), \hat \psi(w)] = 0$.
On the other hand, $\varphi$ and $\psi$ are called \textbf{relatively local in the Wightman sense}
if whenever $f,g \in C^\infty(S^1)$ have disjoint support we have $\varphi(f)\psi(g) = \psi(g)\varphi(f)$ as maps  $\cV \to \widehat \cV$.

The equivalence of Wightman and vertex algebra locality was established in \cite[Appendix A]{CKLW18}
under the hypothesis of \emph{polynomial energy bounds}.
It is not known whether every unitary VOA satisfies this property.
Below, we observe that the argument from \cite{CKLW18} goes through under the weaker hypothesis of uniformly bounded order (Definition \ref{defn: uniformly bounded order}), which we have shown holds in every unitary M\"obius vertex algebra (see Proposition~\ref{prop: VOA has unif bdd order}).

First, we need a technical result.
Recall that a distribution on $S^1 \times S^1$ with order $N$ is a distribution that
can be written as $\sum_\alpha \partial^\alpha f_\alpha$, where $f_\alpha$ are continuous functions on $S^1\times S^1$
and $\alpha$ are multi-indices with $|\alpha| \le N$.

\begin{lem}\label{lem: distribution on diagonal}
 Suppose $\Lambda$ is a distribution on $S^1\times S^1$ supported on the diagonal $z=w$ which has order $N$.
 Then there are distributions $c_j$ on $S^1$, for $j=0, \ldots, N$,  such that $\Lambda(z,w) = \sum_{j=0}^N \delta^{(j)}(z-w)c_j(w)$.
 In particular, $\Lambda(z,w)(z-w)^{N+1} = 0$.
\end{lem}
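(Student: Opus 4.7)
The plan is to reduce this to the classical Schwartz structure theorem for distributions supported on a submanifold, which states: if $T$ is a distribution of finite order $N$ on $\bbR^2$ supported in the hyperplane $\{u = 0\}$, then $T$ is uniquely of the form $\sum_{j=0}^N \delta^{(j)}(u) \otimes \tilde c_j(v)$, with each $\tilde c_j$ a distribution in the remaining variable $v$ (see e.g.\! H\"ormander, \emph{The Analysis of Linear Partial Differential Operators I}, Theorem 2.3.5). To apply it here, I would cover a tubular neighborhood of the diagonal in $S^1 \times S^1$ by finitely many angular charts in which $(\theta,\phi)$ are well-defined with $\theta - \phi$ ranging over an open interval containing $0$; the change of coordinates $u = \theta - \phi$, $v = \phi$ places the diagonal at $\{u = 0\}$, and the theorem then produces a local decomposition $\Lambda = \sum_{j=0}^N \delta^{(j)}(u) \tilde c_j(v)$.

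Next I would translate this decomposition back into the variables $(z,w)$. Since $z - w = e^{i\phi}(e^{i(\theta-\phi)}-1) = ie^{i\phi}(\theta - \phi) + O((\theta - \phi)^2)$ near the diagonal, one has $z - w = h(u,v)\,u$ for a smooth function $h$ with $h(0,v) = ie^{iv}$ nowhere zero. Using the distributional Taylor identity
\[
f(u)\, \delta^{(j)}(u) = \sum_{k=0}^{j} (-1)^k \binom{j}{k} f^{(k)}(0)\, \delta^{(j-k)}(u),
\]
each $\delta^{(j)}(u)$ can be rewritten as a finite linear combination of $\delta^{(k)}(z-w)$ with $k \le j$, whose coefficients are smooth functions of $v$, hence of $w$. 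Absorbing these smooth factors into the $\tilde c_j$ and patching the local expressions by a partition of unity along the compact diagonal yields the desired global decomposition $\Lambda = \sum_{j=0}^N \delta^{(j)}(z-w)\, c_j(w)$. The ``in particular'' statement is then immediate from the distributional identity $u^{N+1}\delta^{(j)}(u) = 0$ for $0 \le j \le N$, applied with $u = z - w$.

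The main obstacle will be the third step: verifying that the patched $c_j$ are genuinely well-defined distributions on $S^1$, independent of the choice of local chart and of the auxiliary coordinate $\theta - \phi$. This compatibility rests on the uniqueness of coefficients in the Schwartz decomposition together with the non-vanishing of the Jacobian factor $h(0,v)$ on the diagonal, but requires some bookkeeping on chart overlaps. A cleaner alternative is to prove the conclusion $(z-w)^{N+1}\Lambda = 0$ first, locally via the identity above, and then to observe that this vanishing is a global algebraic statement whose local verification already suffices; the existence of the $c_j$ can afterwards be extracted from standard division arguments for distributions annihilated by $(z-w)^{N+1}$.
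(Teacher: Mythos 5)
Your argument is correct, but it takes a genuinely different route from the paper's. You localize in angular charts, invoke the two-dimensional structure theorem for finite-order distributions supported in a hyperplane (H\"ormander, Theorem 2.3.5), convert the transversal coordinate $u=\theta-\phi$ into $z-w$, and patch with a partition of unity. The paper instead stays global and one-dimensional: it uses only the elementary fact that a finite-order distribution supported at a single point is a finite combination of derivatives of $\delta$, applied to the map $f \mapsto \Lambda(F_{f,g})$ with $F_{f,g}(z,w)=f(z-w)g(w)$ for each fixed $g$ --- the group structure of $S^1$ makes the difference coordinate globally defined, so no charts are needed. It then checks that the coefficients $c_{j,g}$ depend on $g$ as distributions (by testing with an $f$ whose only nonvanishing derivative at $0$ among orders $0,\dots,N$ is the $j$-th), and concludes by density of the span of such products in $C^\infty(S^1\times S^1)$ (Tr\`eves, Theorem 39.2). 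Your approach buys an off-the-shelf tensor decomposition, including the distributional nature of the coefficients, at the cost of the bookkeeping you flag; note, however, that for \emph{existence} of the $c_j$ (which is all the lemma asserts) the patching is harmless --- you simply sum the local contributions over the partition of unity, so no chart-independence needs to be verified. Two minor points: rewriting $\delta^{(j)}(u)$ as a combination of $\delta^{(k)}(z-w)$ requires the change-of-variables formula for derivatives of $\delta$ under $u \mapsto h(u,v)u$ (an invertible triangular system), not merely the multiplication identity you quote; and your fallback of proving $(z-w)^{N+1}\Lambda=0$ directly would indeed suffice, since that is the only consequence of the lemma used later in the paper.
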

\begin{proof}
For a distribution $\Lambda$ on $\bbR$ supported on a point $p$ with order $N$ there are constants $c_\alpha$ such that $\Lambda = \sum_{|\alpha|\le N} c_\alpha D^\alpha \delta_p$ (see \cite[Theorem 6.25]{Rudin91}).
 The same holds for a distribution $\Lambda$ on $S^1$, because this property is determined in a neighbourhood of $p$.

 Let $f,g$ be any smooth function on $S^1$, and let $F_{f,g}(z,w) = f(z-w)g(w)$. Then the map
 $f \longmapsto \Lambda(F_{f,g})$ is a distribution on $S^1$ of order $N$.
 As this is supported at $0$, it follows that $\Lambda(F_{f,g}) = \sum_{j=0}^n c_{j,g}\delta^{(j)}(f)$,
 where $\delta = \delta_0$ is the delta-distribution supported at $0$,
 with coefficients $c_{j,g}$ that depend on $g$.
 
 We claim that the map $g\longmapsto c_{j,g}$ is a distribution on $S^1$. Indeed,
if we choose a test function $f$ such that only the $j$-th derivative does not vanish at $0$ (among $0$-th to $N$-th derivatives),
 the map $g\longmapsto \Lambda(F_{f,g}) = \sum_{j=0}^n c_{j,g}\delta^{(j)}(f) = c_{j,g}f^{(j)}(0)$
 is a distribution. That is, we found distributions $c_j(g) = c_{j,g}$.
 
For any pair of test functions $f$ and $g$ we have
 $\Lambda(F_{f,g}) = \sum_{j=0}^N c_j(g) f^{(j)}(0)$, which is to say that $\Lambda$ coincides with
 $\sum c_j(w)\delta^{(j)}(z-w)$ on such pairs.
 As $\Lambda$ is a distribution, this equality extends to any two-dimensional test function
 \cite[Theorem 39.2]{Treves67}.
\end{proof}

We can now establish the equivalence of Wightman and vertex algebra locality under the hypothesis of uniformly bounded order.

\begin{lem}\label{lem: wightman locality vs vertex locality}
Let $\varphi$ and $\psi$ be adjointable fields with domain $\cV$.
Suppose that $\varphi$, $\psi$, $\varphi^\dagger$, and $\psi^\dagger$ have uniformly bounded order.
Then $\varphi$ and $\psi$ are local in the Wightman sense if and only if they are local in the vertex algebra sense.
\end{lem}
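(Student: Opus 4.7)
The plan is to package both notions of locality into statements about a single family of bilinear forms on $C^\infty(S^1) \times C^\infty(S^1)$ and then apply the diagonal-distribution lemma (Lemma~\ref{lem: distribution on diagonal}). For each $u, u' \in \cV$ set
\[
\Lambda_{u,u'}(f, g) := \ip{[\varphi(f), \psi(g)] u, u'},
\]
where the commutator makes sense using the extensions $\varphi(f), \psi(g): \cH \to \widehat\cV$ from Lemma~\ref{lem: extension of Wightman fields}. Using uniformly bounded order of $\varphi, \psi, \varphi^\dagger, \psi^\dagger$ together with Lemma~\ref{lem: uniformly bounded order of extension}, I would first show that $\Lambda_{u,u'}$ extends to a jointly continuous bilinear form on $H^N(S^1) \times H^N(S^1)$ for some $N$ depending only on $\varphi$ and $\psi$: the identity $\ip{\varphi(f)\psi(g)u, u'} = \ip{\psi(g)u, \varphi^\dagger(\overline{f})u'}$ writes it as a pairing of two Sobolev-continuous expressions, and similarly with $\varphi, \psi$ swapped. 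Hence $\Lambda_{u,u'}$ is a distribution on $S^1 \times S^1$ of finite order $N'$, with $N'$ uniform in $u, u'$. A direct Fourier-coefficient comparison shows $\Lambda_{u,u'}(e_{-n}, e_{-m}) = \ip{[\varphi_n, \psi_m]u, u'}$, so the formal series $\Lambda_{u,u'}(z,w)$ agrees with $z^{d_\varphi} w^{d_\psi} \ip{[\hat\varphi(z), \hat\psi(w)] u, u'}$; thus a factor $(z-w)^M$ annihilates one iff it annihilates the other.

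For the direction Wightman $\Rightarrow$ vertex algebra, Wightman locality means $\Lambda_{u,u'}(f, g) = 0$ whenever $\supp f \cap \supp g = \emptyset$, so the distribution $\Lambda_{u,u'}$ is supported on the diagonal. Lemma~\ref{lem: distribution on diagonal} then yields $(z-w)^{N'+1} \Lambda_{u,u'}(z, w) = 0$. Since $N'$ does not depend on $u, u'$, translating back to formal series gives $(z-w)^{N'+1} [\hat\varphi(z), \hat\psi(w)] = 0$ after arbitrary evaluation on $u, u'$, which is vertex algebra locality.

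For the converse, assume $(z-w)^N [\hat\varphi(z), \hat\psi(w)] = 0$. The standard expansion lemma for formal distributions in two variables (e.g.\ Kac's Theorem 2.3) produces $c_j(w) \in \End(\cV)[[w^{\pm 1}]]$ with
\[
[\hat\varphi(z), \hat\psi(w)] = \sum_{j=0}^{N-1} \tfrac{1}{j!} \partial_w^j \delta(z-w) \, c_j(w)
\]
as formal distributions. Pairing with $u, u'$ computes the Fourier coefficients of $\Lambda_{u,u'}$, which by the first paragraph is already a finite-order distribution on $S^1 \times S^1$. Two finite-order distributions that agree on the Fourier basis $e_n \otimes e_m$ must coincide, so $\Lambda_{u,u'}$ equals the above finite sum of delta-derivatives as an actual distribution. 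Any such distribution vanishes on test functions $f \otimes g$ with $\supp f \cap \supp g = \emptyset$, yielding Wightman locality.

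The main technical obstacle is the passage between formal-series identities, in which $\delta^{(j)}(z-w)$ is purely symbolic, and genuine distributional identities on $S^1 \times S^1$, where one may legitimately evaluate on test functions with disjoint support. Uniformly bounded order is precisely the hypothesis that bridges this gap: it forces the formal commutator, evaluated on $u, u'$ and smeared, to produce a finite-order distribution with order uniform in $u, u'$, which is exactly what is required both to invoke Lemma~\ref{lem: distribution on diagonal} and to identify formal delta-function expansions with honest distributions.
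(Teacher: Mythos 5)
Your proposal is correct and follows essentially the same route as the paper: you form the bilinear commutator functional $\ip{[\varphi(f),\psi(g)]u,u'}$, use uniformly bounded order (via Lemma~\ref{lem: uniformly bounded order of extension} and, implicitly, Lemma~\ref{lem:sobolevfunctional}) to realize it as a distribution on $S^1\times S^1$ of order uniform in $u,u'$, apply Lemma~\ref{lem: distribution on diagonal} for the Wightman-to-vertex direction, and use Kac's decomposition of local formal distributions for the converse. The paper condenses your converse argument into a citation of \cite[Corollary 2.2]{Kac98}, but the content is the same.
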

\begin{proof}
Since the fields have uniformly bounded order, we may invoke Lemma \ref{lem: uniformly bounded order of extension} to see that the assignments $(f, g) \mapsto \varphi(f)\psi(g)u$ and  $(f, g) \mapsto \psi(g)\varphi(f)u$ give continuous bilinear maps $H^N(S^1) \times H^N(S^1) \to \widehat \cV$
for some number $N$ which does not depend on $u$.
By Lemma \ref{lem:sobolevfunctional}, for any $u,v \in \cV$ we have a continuous functional $X_{u,v}: H^M(S^1 \times S^1) \to \bbC$ determined by 
\[
X_{u,v}(f(z)g(w)) = \ip{[\varphi(f),\psi(g)]u,v}
\]
with $M = 2N+2$. This is a distribution of order $M+2$ as we noted in Appendix \ref{sec:sobolevdist}.
Furthermore, $M$ is independent of $u$ and $v$. From here, we proceed as in \cite[Proposition A.1]{CKLW18}.

First assume that $\varphi$ and $\psi$ are local in the Wightman sense. Then
the distribution $X_{u,v}$ is supported on $z=w$, and by Lemma \ref{lem: distribution on diagonal}
we infer that $(z-w)^{M+3} X_{u,v} = 0$.
Specializing $f$ and $g$ to functions of the form $z^k$ and $w^\ell$, we get the same identity at the level of formal distributions:
\[
(z-w)^{M+3} \ip{[\hat \varphi(z), \hat \psi(w)]u,v} = 0.
\]
As $u$ and $v$ were arbitrary and $M$ did not depend on these vectors, we obtain the vertex algebra locality condition
$(z-w)^{M+3} [\hat \varphi(z), \hat \psi(w)]=0$.

Conversely, suppose that $\hat \varphi$ and $\hat \psi$ are local in the sense of vertex algebras.
By \cite[Corollary 2.2]{Kac98}, we have that $X_{u,v}$ is supported on the diagonal and thus $[\varphi(f),\psi(g)]u = 0$ when $f$ and $g$ have disjoint support.
\end{proof}

We note that the property of uniformly bounded order was essential in going from Wightman locality to vertex locality in the proof of Lemma \ref{lem: wightman locality vs vertex locality}, allowing us to conclude that $M$ was independent of $u$.

\subsection{From vertex algebras to Wightman fields and back}\label{subsec: VA to W and back}

Let us prove the first of our main results, namely that
Wightman fields can be constructed from unitary M\"obius vertex algebras $\cV$.
Recall from Section \ref{sec:fieldsfromvertex} that we define closed unbounded operators $Y(u,f)$ on the Hilbert space completion $\cH$ of $\cV$, and Proposition \ref{prop: VOA has unif bdd order} says that the domain of $Y(u,f)$ contains expressions of the form $Y(v^k,f_k) \cdots Y(v^1,f_1)\Omega$.
Let $\cD = \operatorname{span}\{Y(v^k, f_k) \cdots Y(v^1,f_1)\Omega: v_j \in \cV, f_j \in C^\infty(S^1)\}$.
By Corollary \ref{cor: VOA gives adjointable distributions}, we have adjointable distributions $\varphi_u:C^\infty(S^1)\times \cD \to \cH$ given by 
\begin{equation}\label{eqn: def varphiu}
\varphi_u(f) = Y(u,f)\mid_\cD,
\end{equation}
and if $u$ is quasi-primary then $\varphi_u^\dagger = (-1)^{d_u} \varphi_{\Theta u}$.

\begin{thm}\label{thm: VA to Wightman}
Let $\cV$ be a unitary M\"obius vertex algebra with $\dim \cV(0)=1$, and let $\cH$ be the Hilbert space completion of $\cV$.
Let $\cS \subset \cV$ be a generating set for $\cV$ consisting of quasi-primary vectors, and assume that $\cS= (-1)^{L_0}\Theta \cS$.
Let $\cF = \{ \varphi_u \, : \, u \in \cS\}$, for $\varphi_u$ as in \eqref{eqn: def varphiu}.
Then $\cF$ is a M\"obius-covariant Wightman field theory on $\cH$ with uniformly bounded order.
\end{thm}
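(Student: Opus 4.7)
The plan is to: (i) define the common invariant domain $\cD$ and observe $\cV\subseteq\cD$; (ii) verify each Wightman axiom first for the fields acting on the smaller domain $\cV$, where Lemmas \ref{lem: Mobius covariance conditions} and \ref{lem: wightman locality vs vertex locality} apply directly; and (iii) extend each axiom from $\cV$ to $\cD$ by approximation, using the joint continuity provided by Proposition \ref{prop: VOA has unif bdd order} to pass to the limit.

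Concretely, set $\cD = \operatorname{span}\{\varphi_{u_k}(f_k) \cdots \varphi_{u_1}(f_1)\Omega : u_j \in \cS, \, f_j \in C^\infty(S^1), \, k \geq 0\}$. Proposition \ref{prop: VOA has unif bdd order} shows these iterated products lie in $\cH$ and depend jointly continuously on the $f_j$ as maps into $\cH$. The vertex-algebra creation property gives $u = \varphi_u(e_{-d_u})\Omega$ for each $u \in \cS$, so the hypothesis that $\cS$ generates $\cV$ yields $\cV \subseteq \cD$.

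At the quasi-Wightman level (fields with domain $\cV$): Corollary \ref{cor: VOA gives adjointable distributions} gives adjointability of each $\varphi_u$ with $\varphi_u^\dagger = \varphi_{(-1)^{L_0}\Theta u}$ (since $L_1 u = 0$ for quasi-primary $u$), and the hypothesis $\cS = (-1)^{L_0}\Theta\cS$ ensures $\varphi_u^\dagger \in \cF$. Axiom (VA4) applied to a quasi-primary $u$ reduces exactly to the infinitesimal covariance relation \eqref{eq:quasi-primary} of degree $d_u$, so Lemma \ref{lem: Mobius covariance conditions} yields Wightman covariance of degree $d_u$ on $\cV$. Vertex-algebra locality (VA5) combined with Lemma \ref{lem: wightman locality vs vertex locality} then gives Wightman locality on $\cV$; the uniformly bounded order hypothesis required by the lemma (for both fields and their adjoints) is supplied by Proposition \ref{prop: VOA has unif bdd order}. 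The spectrum condition is automatic from positivity of the representation, and uniqueness of $\Omega$ among $U$-invariant vectors follows from $\ker L_0 = \cV(0) = \bbC\Omega$.

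To transport these properties to $\cD$, I approximate: for $\xi = \varphi_{u_k}(f_k) \cdots \varphi_{u_1}(f_1)\Omega$ and Laurent polynomials $f_j^{(n)} \to f_j$ in $C^\infty(S^1)$, the vectors $\xi^{(n)}$ obtained by substituting $f_j^{(n)}$ lie in $\cV$ (since smearing against Laurent polynomials produces finite combinations of degree-shifted modes) and converge to $\xi$ in $\cH$. The covariance identity $U(\gamma)\xi = \xi^\gamma$ (with $f_j$ replaced by $\beta_{d_j}(\gamma)f_j$) is proved by induction on $k$: for $\xi^{(n)}$ the inductive hypothesis together with the $\cV$-level covariance applied to $\xi_{k-1}^{(n)} \in \cV$ yields the identity, and taking $n \to \infty$ closes it using continuity of $U(\gamma)$, continuity of $\beta_{d_j}(\gamma)$ on $C^\infty(S^1)$, and the joint continuity of iterated smeared products. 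In particular $U(\gamma)\cD\subseteq\cD$. Locality extends to $\cD$ by the same approximation scheme. Adjointability on $\cD$ is immediate from \eqref{eqn: smeared field adjoint} once one observes that $\cD$ is contained in the domain of every $Y(v,f)$ by Proposition \ref{prop: VOA has unif bdd order}. Finally, uniformly bounded order for $\cF$ is a verbatim restatement of that proposition. The main obstacle is precisely the extension from $\cV$ to $\cD$: the analyticity argument underlying Lemma \ref{lem: Mobius covariance conditions} does not apply directly to vectors in $\cD \setminus \cV$, and Proposition \ref{prop: VOA has unif bdd order} provides exactly the uniform continuity needed to approximate by $\cV$-vectors and pass each axiom to the limit.
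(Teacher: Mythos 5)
Your proposal is correct and follows essentially the same route as the paper: adjointability, spectrum, vacuum, and uniformly bounded order are read off from Corollary \ref{cor: VOA gives adjointable distributions} and Proposition \ref{prop: VOA has unif bdd order}, covariance and locality are established on $\cV$ via Lemmas \ref{lem: Mobius covariance conditions} and \ref{lem: wightman locality vs vertex locality}, and both are then transported to $\cD$ by approximating the smearing functions by Laurent polynomials and passing to the limit using the joint continuity from Proposition \ref{prop: VOA has unif bdd order}. The paper phrases the extension step as applying the $\cV$-level identities repeatedly to Laurent-polynomial-smeared vectors (which lie in $\cV$) and then invoking continuity in all the $f_j$, which is the same argument as your induction on $k$.
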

\begin{proof}
We verify that $\cF$ satisfies the axioms (W\ref{itm: W Mob covariance})-(W\ref{itm: W unif bdd order}) of a Wightman field theory, leaving the nontrivial arguments for (W\ref{itm: W Mob covariance}) and (W\ref{itm: W locality}) for the end.

By Corollary \ref{cor: VOA gives adjointable distributions}, the $\varphi_u$ are adjointable distributions and if $\varphi_u \in \cF$, then $\varphi_u^\dagger = \varphi_{(-1)^{L_0}\Theta u} \in \cF$ as well.
Hence, $\cF$ satisfies the adjoint axiom (W\ref{itm: W adjoint}).
As described in Section \ref{sec: PERs}, we may exponentiate the positive-energy representation of $\Lie(\Mob)_\bbC$ spanned by $\{L_{-1},L_0,L_1\}$ to a positive-energy representation of $\Mob$.
As (the closure of) $L_{0}$ is the generator of rotations, the spectrum condition axiom (W\ref{itm: W spec}) is satisfied by the definition of a unitary M\"obius vertex algebra. 
The vacuum axiom (W\ref{itm: W Vacuum}) is satisfied by construction, and we note that $\cD$ contains $\cV$ since $\cS$ is a generating set, and thus $\cD$ is dense.
Finite-dimensionality of the $L_0$-eigenspaces (W\ref{itm: W V(n) finite dim}) is part of the definition of M\"obius vertex algebra.
The fields $\varphi_u \in \cF$ satisfy the uniformly bounded order axiom (W\ref{itm: W unif bdd order}) by Proposition \ref{prop: VOA has unif bdd order}.
It remains to check M\"obius covariance (W\ref{itm: W Mob covariance}) and locality (W\ref{itm: W locality}).

We first consider M\"obius covariance.
From the definition of a M\"obius vertex algebra, if $u$ is quasi-primary we have $[L_k,Y(u,z)] = \left(z^{k+1} \tfrac{d}{dz} + (k+1)z^k d_u \right)Y(u,z)$ for $k=-1,0,1$, which implies that $\varphi_u$ is M\"obius-covariant in the vertex algebra sense.
By Lemma \ref{lem: Mobius covariance conditions}, it follows that if $v \in \cV$ and $\gamma \in \Mob$, we have 
$$
U(\gamma)\varphi_u(f)v = \varphi_u(\beta_d(\gamma)f) U(\gamma)v.
$$
Applying this relation repeatedly, we see that when $f_1, \ldots, f_m$ are Laurent polynomials and $u_1, \ldots, u_m \in \cS$  we have
$$
U(\gamma)\varphi_{u_1}(f_1) \cdots \varphi_{u_m}(f_m)\Omega
=
\varphi_{u_1}(\beta_{d_1}(\gamma)f_1) \cdots  \varphi_{u_m}(\beta_{d_m}(\gamma)f_m)\Omega.
$$
Since both sides are continuous in the $f_j$ by Proposition \ref{prop: VOA has unif bdd order}, the relation extends to all $f_j \in C^\infty(S^1)$, and we see that 
$U(\gamma)\varphi_u(f) U(\gamma)^* = \varphi_u(\beta_d(\gamma) f)$ as operators on $\cD$, which verifies the M\"obius covariance axiom.

We give a similar proof of the locality axiom.
Suppose that $f_1$ and $f_2$ are supported in disjoint intervals.
By Lemma \ref{lem: wightman locality vs vertex locality}, $\varphi_{u_1}(f_1)\varphi_{u_2}(f_2)$ and $\varphi_{u_2}(f_2)\varphi_{u_1}(f_1)$ agree as maps $\cV \to \widehat \cV$.
In fact, since $\cV$ lies inside the domain $\cD$, the two agree as maps into the Hilbert space.
Thus if $g_1, \ldots, g_m$ are Laurent polynomials and $v_1, \ldots, v_m \in \cS$ we have
$$
 \varphi_{u_1}(f_1)\varphi_{u_2}(f_2)\big[ \varphi_{v_1}(g_1) \cdots \varphi_{v_m}(g_m)\Omega \big]
=
\varphi_{u_2}(f_2)\varphi_{u_1}(f_1)\big[\varphi_{v_1}(g_1) \cdots \varphi_{v_m}(g_m)\Omega\big].
$$
Since such expressions are continuous in the $g_j$, by the vacuum axiom we see that $\varphi_{u_1}(f_1)\varphi_{u_2}(f_2) = \varphi_{u_2}(f_2)\varphi_{u_1}(f_1)$ as operators on $\cD$, completing the proof.
\end{proof}

For the other direction, it is convenient to first introduce a weaker notion of Wightman fields, which will be sufficient to construct unitary vertex algebras.

\begin{defn}\label{def: quasiwightman CFT}
Let $\cH$ be a Hilbert space carrying a SOT-continuous unitary representation $U$ of $\Mob$.
Let $\cV = \bigoplus \cV(n)$ be the finite-energy vectors, where $\cV(n) = \mathrm{ker}(L_0-n)$, and assume that each $\cV(n)$ is finite-dimensional.
A \textbf{quasi-Wightman field theory} on $\cV$ is given by
a family $\cF$ of adjointable operator-valued distributions on $S^1$ with domain $\cV$, along with the representation $U$ and a vector $\Omega \in \cV$, such that the following hold: 
\begin{enumerate}[{(QW}1{)}]
\item \textbf{M\"obius covariance}: Each $\varphi \in \mathcal{F}$ is M\"obius-covariant in the Wightman sense.
\item \textbf{Adjoint}: $\cF$ is closed under $\varphi \mapsto \varphi^\dagger$.
\item \textbf{Locality}: If $f,g$ have disjoint supports, then then $\varphi_1(f)\varphi_2(g) = \varphi_2(g)\varphi_1(f)$ as maps $\cV \to \widehat \cV$.
\item \textbf{Spectrum condition}: The generator $L_0$ of rotations $U(R_\theta)=e^{i\theta L_0}$ is positive.
\item \textbf{Vacuum}: The vector $\Omega$ is the unique (up to scalar) vector in $\cV$ that is invariant under $U$,
and $\cV$ is spanned by expressions of the form%
\footnote{%
Note that $\varphi_j(z^{n_j}) \in \End(\cV)$ by M\"obius covariance.%
} 
$\varphi_1(z^{n_1}) \cdots \varphi_k(z^{n_k})\Omega$ where $\varphi_j \in \cF$ and $n_j \in \bbZ$.
\end{enumerate}
\end{defn}
Note that a family of fields satisfying all of the conditions of Definition \ref{def: quasiwightman CFT} except for the cyclicity of the vacuum yields a quasi-Wightman field theory on an appropriate subspace of $\cH$.

The difference between quasi-Wightman field theories and Wightman field theories is that in quasi-Wightman field theories we do not assume that the operators $\varphi(f)$ preserve the domain in general.
However one of our main results (Theorem \ref{thm: wightman CFT iff va}) implies that this property is automatic under the additional condition
of uniformly bounded order.

We say that a quasi-Wightman field theory $\cF$ has \textbf{uniformly bounded order}
if every $\varphi \in \cF$ has uniformly bounded order (Definition \ref{defn: uniformly bounded order}).
The following is almost immediate.
\begin{prop}\label{prop: Wightman to quasi Wightman}
Let $\cH$ be a Hilbert space carrying a positive-energy representation of $\Mob$, let $\cV$ be the finite-energy vectors, and suppose that each weight space $\cV(n)$ is finite-dimensional.
Let $\cF$ be a Wightman field theory on $\cH$.
Then $\tilde \cF = \{ \varphi|_{\cV} \, : \, \varphi \in \cF\}$ is a quasi-Wightman field theory on $\cV$.
If $\cF$ has uniformly bounded order then so does $\tilde \cF$.
\end{prop}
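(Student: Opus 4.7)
The plan is to verify each axiom (QW1)--(QW5) for $\tilde \cF = \{\varphi|_\cV : \varphi \in \cF\}$ by direct restriction from $\cF$. The key underlying observation, already established in the discussion after the statement of (W\ref{itm: W V(n) finite dim}), is that finite-dimensionality of the $\cV(n)$ forces $\cV \subseteq \cD$. Adjointability of each $\tilde \varphi$ with $\tilde \varphi^\dagger = \varphi^\dagger|_\cV$ is automatic, since adjointability passes to dense restrictions of the domain as noted in Section \ref{sec:wightman}; this also gives the adjoint axiom (QW2), since the involution $\varphi \mapsto \varphi^\dagger$ on $\cF$ descends to the involution $\tilde \varphi \mapsto \widetilde{\varphi^\dagger}$ on $\tilde \cF$.

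For M\"obius covariance (QW1), fix $\varphi \in \cF$ of Wightman degree $d$. Because $U(\gamma)\cD = \cD$ by (W\ref{itm: W Mob covariance}), for any $u \in \cV \subseteq \cD$ the vector $U(\gamma)^* u$ lies in $\cD$ and hence $\varphi(f)U(\gamma)^*u \in \cH$; the Wightman covariance identity on $\cD$ then restricts to the required identity on $\cV$, so $\tilde \varphi$ is M\"obius-covariant in the Wightman sense in the sense of Section~\ref{sec:regularity}. The locality axiom (QW3) is equally immediate: when $\supp(f) \cap \supp(g) = \emptyset$, the identity $\varphi_1(f)\varphi_2(g) = \varphi_2(g)\varphi_1(f)$ on $\cD$ restricts to an identity of maps $\cV \to \cH$, which is in particular an equality of maps $\cV \to \widehat \cV$ via the inclusion $\cH \subset \widehat \cV$. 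The spectrum condition (QW4) is unchanged, since the generator of rotations is the same $L_0$ in both settings.

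For the vacuum axiom (QW5), $\Omega \in \cV$ is $U$-invariant, and is uniquely so inside $\cV$ because it is uniquely so in $\cH$ by (W\ref{itm: W Vacuum}). The cyclicity statement is a repackaging of the argument run in the paper immediately after the statement of (W\ref{itm: W V(n) finite dim}): letting $\cW \subseteq \cV$ be the span of expressions $\varphi_1(z^{n_1}) \cdots \varphi_k(z^{n_k})\Omega$ with $\varphi_j \in \cF$ and $n_j \in \bbZ$, rotation covariance shows each $\varphi_j(z^{n_j})$ lies in $\End(\cV)$ and that $\cW$ is $L_0$-graded, while joint continuity of $(f_1,\ldots,f_k) \mapsto \varphi_1(f_1) \cdots \varphi_k(f_k)\Omega$ in the $f_j$ shows each $\cW(n)$ is dense in the finite-dimensional space $\cV(n)$, whence $\cW(n) = \cV(n)$ and thus $\cW = \cV$.

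Finally, uniformly bounded order of $\tilde \cF$ under the hypothesis that $\cF$ satisfies (W\ref{itm: W unif bdd order}) is obtained by specializing the latter to $k=1$ and a single field $\varphi \in \cF$, which is exactly the condition required in Definition~\ref{defn: uniformly bounded order} for $\tilde \varphi$. I do not anticipate any genuine obstacle here; the proposition is essentially a bookkeeping consequence of the inclusion $\cV \subseteq \cD$ together with the Wightman axioms, and the only mildly subtle point is the translation of locality on $\cD$ into locality as a statement about maps into $\widehat \cV$, which is immediate from $\cH \subset \widehat \cV$.
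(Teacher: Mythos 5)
Your proof is correct and follows essentially the same route as the paper's (much terser) argument: the inclusion $\cV \subseteq \cD$ from finite-dimensionality of the $\cV(n)$, the fact that adjointability survives restriction to a dense subdomain, and the observation that each quasi-Wightman axiom is the direct restriction of the corresponding Wightman axiom. Your axiom-by-axiom verification, including the identification of the cyclicity statement with the $\cW = \cV$ argument already given after (W\ref{itm: W V(n) finite dim}), simply spells out what the paper leaves implicit.
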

\begin{proof}
Observe that $\cV \subset \cD$ since $\dim \cV(n) < \infty$, as described in Section \ref{sec:wightman}.
An adjointable field remains adjointable when restricted to a dense subspace of the domain, so in particular each $\varphi|_\cV$ is adjointable.
Each condition of the definition of quasi-Wightman field theory follows directly from the corresponding property of a Wightman field theory.
\end{proof}

We now give a converse to Theorem \ref{thm: VA to Wightman} by constructing a unitary M\"obius vertex algebra from a (quasi-)Wightman field theory.

\begin{thm}\label{thm: quasi Wightman to VA}
Let $\cH$ be a Hilbert space carrying a positive-energy representation of $\Mob$, with finite-energy vectors $\cV$.
Assume that the weight spaces $\cV(n)=\mathrm{ker}(L_0-n)$ are finite-dimensional and that $\dim V(0) = 1$.
Let $\cF$ be a quasi-Wightman field theory on $\cV$ with uniformly bounded order.
Then there is a unique structure of a unitary M\"obius vertex algebra on $\cV$ such that for every $\varphi \in \cF$ there is a quasi-primary $v \in \cV$ such that $\varphi(f) = Y(v,f)|_\cV$.
\end{thm}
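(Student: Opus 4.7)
The plan is to recover the vertex algebra structure on $\cV$ from a Borcherds-Kac style reconstruction theorem, using the quasi-Wightman axioms as analytic input. For each $\varphi \in \cF$ of degree $d$, the modes $\varphi_n := \varphi(e_n)$ lie in $\End(\cV)$ by rotation covariance, satisfying $[L_0, \varphi_n] = -n\varphi_n$, and assemble into a formal distribution $\hat\varphi(z) = \sum_n \varphi_n z^{-n-d}$. By Lemma \ref{lem: Mobius covariance conditions}, the Wightman M\"obius covariance of $\varphi$ is equivalent to the vertex-algebra covariance relations \eqref{eq:quasi-primary} for $\hat\varphi$; and the assumption of uniformly bounded order combined with Lemma \ref{lem: wightman locality vs vertex locality} converts Wightman locality of $\varphi, \psi \in \cF$ into vertex-algebra locality of $\hat\varphi$ and $\hat\psi$.

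The key preliminary step is to associate to each $\varphi$ a quasi-primary state $v_\varphi \in \cV(d)$ such that $\hat\varphi(z)\Omega$ has only non-negative powers of $z$ with $z^0$-coefficient $v_\varphi$. Positive energy and rotation covariance immediately give $\varphi_n\Omega = 0$ for $n > 0$ and place $\varphi_0\Omega$ in $\cV(0) = \bbC\Omega$, say $\varphi_0\Omega = c\,\Omega$. To show $c = 0$ when $d > 0$, I would compute $\ip{\varphi(f)\Omega, \Omega} = c\,\hat f(0)$, equate it to $\ip{\varphi(\beta_d(\gamma)f)\Omega,\Omega}$ using M\"obius invariance of $\Omega$, and apply the change-of-variable formula $\hat{\beta_d(\gamma)f}(0) = \int X_\gamma^d f \,d\theta/2\pi$; since $X_\gamma^d \not\equiv 1$ for $d > 0$, this forces $c = 0$. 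A short induction using $L_{-1}\varphi_{-k}\Omega = (1+k-d)\varphi_{-k-1}\Omega$ together with $L_{-1}\Omega = 0$ then propagates to $\varphi_n\Omega = 0$ for every $-d < n \le 0$. Setting $v_\varphi := \varphi_{-d}\Omega$, the identity $L_1 v_\varphi = (2d-1)\varphi_{-d+1}\Omega = 0$ shows $v_\varphi$ is quasi-primary. The degenerate case $d=0$ is handled by noting that any $v_\varphi \in \cV(0) = \bbC\Omega$ is automatically quasi-primary.

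With the generating states $\{v_\varphi\}_{\varphi \in \cF}$ in hand, a standard existence/uniqueness theorem for vertex algebras (in the style of the reconstruction theorems in \cite[\S 4]{Kac98}) supplies a unique vertex algebra structure on $\cV$ with $Y(v_\varphi, z) = \hat\varphi(z)$: its hypotheses---pairwise locality, translation covariance $[L_{-1},\hat\varphi(z)] = \partial_z\hat\varphi(z)$, regularity of $\hat\varphi(z)\Omega$ at $z=0$ with value $v_\varphi$, and that $\cV$ is spanned by modes acting on $\Omega$---have all been established above or are exactly axiom (QW5). To upgrade to a M\"obius vertex algebra, axiom (VA4) holds on the quasi-primary generators $v_\varphi$ directly by \eqref{eq:quasi-primary}, and extends to all of $\cV$ by decomposing the positive-energy $\mathfrak{sl}_2$-module $\cV$ as $\bigoplus_k L_{-1}^k \cV^{\mathrm{qp}}$ and propagating via $[L_j, L_{-1}] = (j+1)L_{j-1}$ together with the $k=-1$ part of (VA4).

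For unitarity, the inner product and the unitary $\mathfrak{sl}_2$-representation are inherited from the Wightman data and $\norm{\Omega}=1$. Closure of $\cF$ under $\dagger$, combined with the identity $\varphi_n^*|_\cV = (\varphi^\dagger)_{-n}$, implies that the Hermitian combinations $\tfrac{1}{2}(v_\varphi + v_{\varphi^\dagger})$ and $\tfrac{1}{2i}(v_\varphi - v_{\varphi^\dagger})$ form a generating family of Hermitian quasi-primary vectors, so Proposition \ref{prop: unitary iff Hermitian generating} yields unitarity. Uniqueness of the resulting M\"obius vertex algebra structure is immediate from the uniqueness clause of the reconstruction theorem, since it is fully determined by the requirement $Y(v_\varphi,z) = \hat\varphi(z)$ on a generating family. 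The main obstacle I anticipate is the vanishing argument for $\varphi_n\Omega$ in the intermediate range $-d < n \le 0$: the vertex-algebra regularity axiom (VA3) is not an a priori consequence of the spectrum condition alone and must be extracted by exploiting the full M\"obius symmetry together with uniqueness of the vacuum.
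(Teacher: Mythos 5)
Your overall architecture matches the paper's: pass to the formal distributions $\hat\varphi(z)=\sum_n\varphi_nz^{-n-d}$, convert covariance and locality via Lemmas \ref{lem: Mobius covariance conditions} and \ref{lem: wightman locality vs vertex locality}, prove regularity of $\hat\varphi(z)\Omega$ at $z=0$, feed everything into a reconstruction theorem (the paper's Theorem \ref{thm: existence for Mobius}), and get unitarity from Hermitian combinations of $\hat\varphi$ and $\widehat{\varphi^\dagger}$ via Proposition \ref{prop: unitary iff Hermitian generating}. Your treatment of the vacuum regularity is correct, though more elaborate than necessary: the integral/change-of-variables argument for $\varphi_0\Omega=0$ is valid, but the same recursion $L_{-1}\varphi_{-k}\Omega=(1+k-d)\varphi_{-k-1}\Omega$ that you use for the intermediate range already gives $\varphi_0\Omega=\tfrac{1}{-d}L_{-1}\varphi_1\Omega=0$ directly from $\varphi_1\Omega=0$ and positivity of $L_0$, which is what the paper does.

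There is, however, a genuine gap in your extension of the $L_1$-covariance axiom from the generating fields to all of $\cV$. You propose to decompose $\cV=\bigoplus_kL_{-1}^k\cV^{\mathrm{qp}}$ and propagate the $k=1$ commutation relation from quasi-primary vectors to their $L_{-1}$-descendants. The propagation step is fine, but the base case is not: the decomposition involves \emph{all} quasi-primary vectors of $\cV$, and for a quasi-primary $w$ that is not in the linear span of the generators $v_\varphi$ (e.g.\! a quasi-primary vector occurring inside some iterated product $(v_\varphi)_{(n)}v_\psi$), the statement $L_1w=0$ is a condition on the vector, whereas what you need is the field identity $[L_1,Y(w,z)]=(z^2\partial_z+2d_wz)Y(w,z)$ --- and the implication from the former to the latter is precisely the instance of (VA4) you are trying to prove. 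This is exactly why the standard reconstruction theorems (e.g.\! \cite[Theorem 4.5]{Kac98}) do not deliver a M\"obius vertex algebra, and why the paper proves the dedicated Lemma \ref{lem: L1 closed under Borcherds products}: the class of vectors whose fields satisfy the $L_1$-commutation relation is closed under Borcherds $(n)$-products, so the relation propagates from the generators through the operations that actually build $\cV$. Your argument needs this lemma (or an equivalent, such as an argument via the invariant bilinear form or via adjoints and unitarity) to close.
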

\begin{proof}
We note that the uniqueness of such a vertex algebra structure is guaranteed by the fact that the condition $\varphi(f) = Y(v,f)|_{\cV}$ determines the modes of the generating fields.

Recall from \eqref{eqn: formal distribution} that for a field $\varphi \in \cF$
which is M\"obius-covariant of degree $d$, the associated point-like field is $\hat \varphi(z) = \sum_{n \in \bbZ} \varphi_n z^{-n-d}$, where $\varphi_n = \varphi(z^n)$.
Let 
\[
\hat \cF = \{ \hat \varphi(z) : \varphi \in \cF\}
\]
 be the family of point-like fields associated to $\cF$.

First, we construct an $\bbN$-graded M\"obius vertex algebra structure on $\cV$ by showing that $\hat \cF$ satisfies the six hypotheses of Theorem \ref{thm: existence for Mobius}.
The first two hypotheses of Theorem \ref{thm: existence for Mobius} hold by assumption.
By Lemma \ref{lem: Mobius covariance conditions}, each field of $\cF$ is M\"obius-covariant in the vertex algebra sense, and thus the fourth hypothesis holds for $\hat \cF$.
By assumption, any pair of fields in $\cF$ is relatively local in the sense of Wightman fields.
It follows that the point-like fields in $\hat \cF$ are relatively local in the sense of vertex operators by Lemma \ref{lem: wightman locality vs vertex locality} and the uniformly bounded orders of the fields.
This shows that the fifth condition of Theorem \ref{thm: existence for Mobius} is satisfied.
The sixth condition, that $\cV$ is generated by modes of fields from $\hat \cF$, is immediate from the definition of quasi-Wightman field theory.

It thus remains to show the third condition, namely that $\hat \varphi(z)\Omega$ has a removable singularity at $z=0$.
That is, we must show that $\varphi_{-n} \Omega = 0$ for $n\le d-1$.
When $n < 0$ the positivity of $L_0$ implies that $\varphi_{-n}\Omega = 0$. 
It remains to consider the cases $n=0, 1, \ldots, d-1$.
For such $n$, the M\"obius covariance of $\hat \varphi$ \eqref{eq:quasi-primary} and the M\"obius invariance of $\Omega$ yield that
\[
\varphi_{-n}\Omega = \tfrac{1}{n-d}[L_{-1}, \varphi_{-(n-1)}]\Omega = \tfrac{1}{n-d} L_{-1}\varphi_{-(n-1)}\Omega.
\]
Iteratively applying this identity, starting with $n=0$, yields
\[
0 = \varphi_0\Omega = \varphi_{-1}\Omega = \cdots = \varphi_{-(d-1)}\Omega,
\]
as required.

By Theorem \ref{thm: existence for Mobius}, we conclude that there exists a $\bbN$-graded M\"obius vertex algebra structure on $\cV$ such that for all $\varphi \in \cF$ we have $\hat \varphi(z) = Y(v_\varphi,z)$, where $v_\varphi = \lim_{z \to 0} \hat \varphi(z)\Omega$.
It follows that $\varphi(f)u = Y(v_\varphi,f)u$ for $f$ a Laurent polynomial, and indeed for all $f \in C^\infty(S^1)$ if both sides are regarded as elements of $\widehat \cV$.
The vector $v_\varphi$ is quasi-primary as
$$
L_1 v_\varphi = \lim_{z \to 0} L_1 Y(v_\varphi,z)\Omega = \lim_{z \to 0} [L_1, Y(v_\varphi,z)]\Omega=
\lim_{z \to 0} (z^2 \tfrac{d}{dz} + 2zd_\varphi)Y(v_\varphi,z)\Omega = 0.
$$

It remains to show that $\cV$ is unitary.
The assumption that $\cF$ is invariant under $\dagger$ shows that $\tfrac{1}{2}(\hat \varphi + \widehat{\varphi^\dagger})$ and $\tfrac{1}{2i}(\hat \varphi - \widehat{\varphi^\dagger})$ are fields of the vertex algebra which are Hermitian.
Such fields generate $\cV$, and thus $\cV$ is unitary by Proposition \ref{prop: unitary iff Hermitian generating}.
\end{proof}

In summary, we have given constructions which relate the notions of unitary M\"obius vertex algebra and M\"obius-covariant (quasi-)Wightman field theories with uniformly bounded order.
All of these constructions start with a Hilbert space $\cH$ carrying a positive-energy representation of $\Mob$.
In all of the constructions, we require that the weight spaces $\cV(n)=\mathrm{ker}(L_0 - n)$ are finite-dimensional and that $\dim \cV(0) = 1$.
We write $\cV = \bigoplus_{n=0}^\infty \cV(n)$ for the space of finite-energy vectors.
The constructions are then given by the following steps:
\begin{enumerate}
\item Starting with a unitary M\"obius vertex algebra $\cV$ and a $(-1)^{L_0}\Theta$-invariant generating set of quasi-primary vectors $\cS$, Theorem \ref{thm: VA to Wightman} says that $\cF = \{\varphi_v : v \in \cS\}$ is a M\"obius-covariant Wightman field theory.
Here $\varphi_v(f)$ is the restriction of $Y(v,f)$ to the invariant domain $\cD$ generated from the vacuum by all of the smeared fields.
The resulting Wightman field theory has uniformly bounded order.
\item Given a M\"obius-covariant Wightman field theory on a domain $\cD$, one may restrict all of the distributions to the smaller domain $\cV$ of finite-energy vectors, and the result is a quasi-Wightman field theory by Proposition \ref{prop: Wightman to quasi Wightman}.
\item From a M\"obius-covariant quasi-Wightman field theory $\cF$ with uniformly bounded order, Theorem \ref{thm: quasi Wightman to VA} asserts the existence of a unique unitary M\"obius vertex algebra structure on $\cV$ such that for every $\varphi \in \cF$ there is a $v_\varphi \in \cV$ such that $\varphi(f) = Y(v_\varphi, f)|_{\cV}$.
There is a canonical $(-1)^{L_0}\Theta$-invariant generating set $\cS = \{v_\varphi : \varphi \in \cF\}$.
\end{enumerate}

One may begin with a vertex algebra or (quasi-)Wightman field theory and cycle through the above steps, returning to the original object.
We have therefore proven the following.

\begin{thm}\label{thm: wightman CFT iff va}
Let $\cH$ be a Hilbert space carrying a positive-energy representation of $\Mob$, and let $\cV = \bigoplus_{n=0}^\infty \cV(n)$ be the finite-energy vectors.
Assume that $\cV(n)$ is finite-dimensional and $\dim \cV(0) = 1$.
Then the constructions of Theorem \ref{thm: VA to Wightman}, Proposition \ref{prop: Wightman to quasi Wightman}, and Theorem \ref{thm: quasi Wightman to VA} provide bijections between the following notions:
\begin{enumerate}
\item\label{it:voa} Unitary M\"obius vertex algebra structures on $\cV$, equipped with a $(-1)^{L_0}\Theta$-invariant set of quasi-primary vectors that generate $\cV$.
\item\label{it:wightman} Wightman field theories on $\cH$ with uniformly bounded order.
\item\label{it:qwightman} Quasi-Wightman field theories on $\cV$ with uniformly bounded order.
\end{enumerate}
\end{thm}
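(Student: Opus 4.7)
The plan is essentially to verify that the three constructions $A\colon (\ref{it:voa}) \to (\ref{it:wightman})$ (Theorem \ref{thm: VA to Wightman}), $B\colon (\ref{it:wightman}) \to (\ref{it:qwightman})$ (Proposition \ref{prop: Wightman to quasi Wightman}), and $C\colon (\ref{it:qwightman}) \to (\ref{it:voa}) $ (Theorem \ref{thm: quasi Wightman to VA}) are mutually inverse, thereby producing bijections. The task reduces to checking that the three round-trips $C \circ B \circ A$, $B \circ A \circ C$, and $A \circ C \circ B$ each return the identity on the appropriate set of objects.

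For $C \circ B \circ A = \id_{(\ref{it:voa})}$: starting from a unitary M\"obius vertex algebra $\cV$ with quasi-primary generating set $\cS$, the construction produces quasi-Wightman fields $\varphi_v|_\cV$ ($v \in \cS$) with formal distributions $\hat \varphi_v(z) = Y(v,z)$. The reconstruction in Theorem \ref{thm: quasi Wightman to VA} sets $v_\varphi = \lim_{z\to 0}\hat \varphi(z)\Omega$, and the vacuum axiom of vertex algebras yields $v_{\varphi_v} = v$. Hence the reconstructed generating set coincides with $\cS$, and since the modes of $Y$ on a generating set determine the entire vertex algebra structure, the reconstructed structure matches the original.

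For $B \circ A \circ C = \id_{(\ref{it:qwightman})}$: starting from a quasi-Wightman field theory $\cF$, Theorem \ref{thm: quasi Wightman to VA} produces a vertex algebra structure with $\varphi(f) = Y(v_\varphi, f)|_\cV$, and the generating set $\cS = \{v_\varphi : \varphi \in \cF\}$ is automatically $(-1)^{L_0}\Theta$-invariant because $\cF$ is closed under the adjoint. Applying $A$ produces Wightman fields $\varphi_{v_\varphi}(f) = Y(v_\varphi, f)|_{\cD'}$, whose restriction to $\cV$ via $B$ returns the original $\cF$.

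The substantive round-trip is $A \circ C \circ B = \id_{(\ref{it:wightman})}$. Here one begins with a Wightman theory $\cF$ on domain $\cD$, restricts to $\cV$, reconstructs the vertex algebra, and then builds a new Wightman theory $\cF'$ with domain $\cD'$; one must show $\cF' = \cF$ and $\cD' = \cD$. Both $\cD$ and $\cD'$ are by construction the span of vectors of the form $\varphi_k(f_k) \cdots \varphi_1(f_1)\Omega$ built from the same family of fields applied to the vacuum, so $\cD = \cD'$. The key issue, and the main potential obstacle, is that the closed operator $Y(v_\varphi, f)$ must equal $\varphi(f)$ on all of $\cD$, not merely on $\cV$. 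Both are closed extensions of the common restriction $\varphi(f)|_\cV$, so it suffices to argue that this restriction has a unique closed extension inside $\cF$; this rests on the uniformly bounded order hypothesis together with the closability arguments of Sections \ref{sec:fieldsfromvertex} and \ref{sec:regularity}, which ensure that the algebraic structure on $\cV$ controls the full Wightman theory. Once this identification is made, all three compositions equal the identity and the stated bijections follow.
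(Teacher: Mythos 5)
Your decomposition into the three round-trips $C\circ B\circ A$, $B\circ A\circ C$, and $A\circ C\circ B$ is exactly the structure of the paper's argument (the paper itself offers little more than the assertion that cycling through the constructions returns the original object), and your treatment of the first two round-trips is correct: the creation axiom gives $v_{\varphi_v}=v$, and the uniqueness clauses of Theorems \ref{thm: quasi Wightman to VA} and \ref{thm: VA to Wightman} do the rest.

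Your handling of the substantive round-trip $A\circ C\circ B$ has two soft spots. First, $\cD=\cD'$ is \emph{not} ``by construction'': at that stage $\varphi(f)$ and $Y(v_\varphi,f)$ are only known to agree on $\cV$, so the iterated expressions $\varphi_k(f_k)\cdots\varphi_1(f_1)\Omega$ and $Y(v_{\varphi_k},f_k)\cdots Y(v_{\varphi_1},f_1)\Omega$ are not literally the same by definition. One must note that they coincide when the $f_j$ are Laurent polynomials (both reduce to finite sums of modes acting within $\cV$) and then invoke joint continuity in the $f_j$ --- on the Wightman side from the discussion in Section \ref{sec:wightman}, on the vertex-algebra side from Proposition \ref{prop: VOA has unif bdd order} --- to pass to general smooth $f_j$. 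Second, the phrase ``it suffices to argue that this restriction has a unique closed extension'' is not the right mechanism: a closable operator generally has many closed extensions. What is actually true is that $Y(v_\varphi,f)=\overline{\varphi(f)|_\cV}\subseteq\overline{\varphi(f)}$, so both $\varphi(f)$ and $Y(v_\varphi,f)|_{\cD'}$ are restrictions of the single closed operator $\overline{\varphi(f)}$; once $\cD=\cD'$ is established they therefore agree. These are repairable with tools already in the paper, but as written the key step is asserted rather than proved.
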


We do not have an example of a (quasi)-Wightman field theory that does not have uniformly bounded order, and it is possible that this property is automatic.
The smeared fields arising from vertex algebras have uniformly bounded order, so one approach to demonstrating that this property is automatic for Wightman field theories would be to give a proof of Theorem \ref{thm: quasi Wightman to VA} which does not use the uniformly bounded order hypothesis.
In particular, the only place where this hypothesis is invoked is in the proof of Lemma \ref{lem: wightman locality vs vertex locality} regarding locality, so it would suffice to show that the fields of a Wightman field theory were automatically local in the vertex algebra sense.
There is a sketch of an argument demonstrating this in \cite[\S1.2]{Kac98}, but it does not seem complete%
\footnote{On page
11 the operator product expansion of two fields $[\Phi_a(t), \Phi_b(t')] = \sum_{j\ge 0} \delta^{(j)}(t-t')\Psi^j(t')$ is stated
and it is claimed that for $\Psi^j$ the Wightman axioms still hold, without proof. We are unable to justify this.
Cf.\! \cite{Bostelmann05OPE} which proves a form of OPE under additional assumptions.}%
.

\section{The VOA associated to a conformal net}\label{sec: conformal nets}
In this section, we introduce an operator-algebraic formulation of M\"obius-covariant field theory.
A \textbf{M\"obius-covariant (Haag-Kastler) net} on $S^1$ is a triple $(\cA, U, \Omega)$, where
$\cA$ associates to each open nondense nonempty connected interval $I$ of $S^1$ a von Neumann algebra $\cA(I)$ on $\cH$,
$U$ is a SOT-continuous unitary representation of $\Mob$ on $\cH$, and $\Omega \in \cH$ such that the following hold:
\begin{enumerate}[{(HK}1{)}]
\item \textbf{Isotony}: If $I_1 \subset I_2$, then $\cA(I_1) \subset \cA(I_2)$.
\item \textbf{Locality}: If $I_1 \cap I_2 = \emptyset$, then $\cA(I_1)$ and $\cA(I_2)$ commute.
\item \textbf{M\"obius covariance}: For $\gamma \in \Mob$, $U(\gamma)\Omega = \Omega$ and
$U(\gamma)\cA(I) U(\gamma)^* = \cA(\gamma I)$ for each interval $I$.
\item \textbf{Spectrum condition}: The generator $L_0$ of rotations $U(R_\theta)=e^{i\theta L_0}$ is positive.
\item \textbf{Vacuum}: $\Omega$ is the unique (up to a scalar) vector in $\cH$ that is invariant under $U$,
and it is cyclic for $\bigvee_{I\Subset S^1}\cA(I)$.
\end{enumerate}

While we will primarily be concerned with M\"obius-covariant nets, we will also briefly discuss \textbf{diffeomorphism covariant} conformal nets, which have the additional property that $U$ extends to a projective unitary representation of orientation-preserving diffeomorphisms $\mathrm{Diff}(S^1)$.
This representation is required to satisfy $U(\gamma)\cA(I)U(\gamma)^* = \cA(\gamma I)$ and moreover $U(\gamma) \in \cA(I)$ when $\gamma$ is supported in $I$.

\begin{defn}\label{def:affiliated}
Let $\cA$ be a M\"obius-covariant net with vacuum Hilbert space $\cH$, and let $\cV = \bigoplus \cV(n)$
be the space of finite-energy vectors.
An adjointable distribution $\varphi$ with domain $\cV$ is \textbf{affiliated with $\cA$} if whenever $\supp(f) \subset I$ there is an intermediate extension $\varphi(f) \subseteq \varphi^{\text{aff}}(f) \subseteq \varphi^\dagger(\overline{f})^*$ such that $\varphi^{\text{aff}}(f)$ is affiliated with $\cA(I)$.
\end{defn}

For a discussion on unbounded operators affiliated with a von Neumann algebra, see \cite[\S5.2]{Pedersen89}.
We note that the requirement that $\varphi^\dagger(\overline{f})^*$ is an extension of $\varphi^{\text{aff}}(f)$ is the same as requiring that the domain of $\varphi^{\text{aff}}(f)^*$ contains $\cV$.

We will show that the M\"obius-covariant fields affiliated with $\cA$ satisfy all of the axioms of a quasi-Wightman field theory except perhaps cyclicity of the vacuum (which in turn produces a quasi-Wightman field theory on a subspace of the finite-energy vectors).
The main step is to check the locality axiom, for which we will use the following basic lemma.

\begin{lem}\label{lem: approximating affiliated operators}
Let $X$ be a densely defined closed operator on a Hilbert space $\cH$, and suppose that $X$ is affiliated with a von Neumann algebra $\cM \subset \cB(\cH)$.
Then there exists a sequence $x_n \in \cM$ such that $x_n\xi \to X\xi$ for all $\xi \in \dom(X)$ and $x_n^*\xi \to X^*\xi$ for all $\xi \in \dom(X^*)$.
\end{lem}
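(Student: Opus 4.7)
The plan is to use the polar decomposition of $X$ together with spectral truncation to produce a bounded approximating sequence in $\cM$.

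First, I would write $X = V\abs{X}$, where $\abs{X} = (X^*X)^{1/2}$ is the positive self-adjoint modulus of $X$ and $V$ is the partial isometry with initial space $\overline{\Ran(\abs{X})}$ and final space $\overline{\Ran(X)}$. A standard result in the theory of von Neumann algebras (see e.g.\ \cite{Pedersen89}) states that since $X$ is a closed densely defined operator affiliated with $\cM$, both $V$ and every spectral projection of $\abs{X}$ belong to $\cM$. In particular, writing $\abs{X} = \int_0^\infty \lambda \, dE_\lambda$ for its spectral resolution, the projections $e_n := E([0,n])$ lie in $\cM$.

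Next, define
\[
x_n := V\abs{X}e_n = V \int_0^n \lambda\, dE_\lambda.
\]
Since $\abs{X}e_n$ is a bounded self-adjoint operator (with $\norm{\abs{X}e_n} \le n$) and $V$ is a partial isometry, $x_n$ is a bounded operator. Moreover $x_n \in \cM$ since both factors lie in $\cM$. Its adjoint is $x_n^* = e_n \abs{X} V^*$, which also lies in $\cM$.

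For convergence on $\dom(X) = \dom(\abs{X})$: using that $e_n$ and $\abs{X}$ commute (as $e_n$ is a spectral projection of $\abs{X}$), for $\xi \in \dom(\abs{X})$ we have $x_n\xi = V e_n \abs{X}\xi$. Since $\abs{X}\xi \in \cH$ and $e_n \to I$ strongly, and $V$ is bounded, it follows that $x_n\xi \to V\abs{X}\xi = X\xi$. For convergence on $\dom(X^*)$: the polar decomposition gives $X^* = \abs{X}V^*$, so $\dom(X^*) = \{\xi : V^*\xi \in \dom(\abs{X})\}$. For such $\xi$, we have $x_n^*\xi = e_n\abs{X}V^*\xi = e_n X^*\xi \to X^*\xi$ since $X^*\xi \in \cH$ and $e_n \to I$ strongly.

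The only nontrivial input is the fact that affiliation of a closed densely defined operator $X$ with $\cM$ ensures $V \in \cM$ and all spectral projections of $\abs{X}$ are in $\cM$; this is the main content drawn from general von Neumann algebra theory. Everything else is a routine spectral truncation argument, so this is not really an obstacle but rather the essential tool enabling the construction.
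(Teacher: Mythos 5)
Your proof is correct and follows essentially the same route as the paper: polar decomposition $X = V\abs{X}$ with $V$ and the bounded functional calculus of $\abs{X}$ landing in $\cM$, followed by a bounded truncation of $\abs{X}$ and the identity $X^* = \abs{X}V^*$. The only (immaterial) difference is that you truncate via the spectral projection $e_n = E([0,n])$, i.e.\ the function $t\chi_{[0,n]}(t)$, whereas the paper uses $f_n(t) = \min(n,t)$.
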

\begin{proof}
Let $f_n:[0,\infty) \to [0,\infty)$ be the bounded function $f_n(t) = \min(n,t)$.
Let $X = v\abs{X}$ be the polar decomposition, with $v$ supported on $\overline{\Ran(|X|)}$, and set $x_n = v f_n(\abs{X}) \in \cM$.
For any $\xi \in \dom(X) = \dom(\abs{X})$ we have $f_n(\abs{X})\xi \to \abs{X}\xi$, and thus $x_n\xi \to X\xi$.
We have $X^* = (v\abs{X})^* = \abs{X}v^*$ since $v$ is bounded, and thus $v^*$ maps $\dom(X^*)$ into $\dom(\abs{X})$.
Thus if $\xi \in \dom(X^*)$ we have 
$$
x_n^*\xi = f_n(\abs{X})v^*\xi \to \abs{X}v^*\xi = X^*\xi
$$
as desired.
\end{proof}

\begin{thm}\label{thm: affiliated fields are Wightman fields}
Let $(\cA,U,\Omega)$ be a M\"obius-covariant net and let $\cF$ be the family of M\"obius-covariant adjointable distributions with domain $\cV$ that are affiliated with $\cA$.
Then $\cF$ satisfies all of the axioms of a quasi-Wightman field theory except perhaps for the cyclicity of the vacuum.
\end{thm}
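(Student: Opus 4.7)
My plan is to go through the quasi-Wightman axioms (QW1)--(QW5) one by one, noting that (QW1), (QW4), and the uniqueness portion of (QW5) are immediate. Indeed, (QW1) is the defining condition on $\cF$; the spectrum condition (QW4) is built into the definition of a M\"obius-covariant net; and uniqueness of $\Omega$ as a $U$-invariant vector is part of (HK5). So the real content is verifying the adjoint axiom (QW2) and the locality axiom (QW3), with the potential failure of cyclicity being the one allowed exception.

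For (QW2), given $\varphi \in \cF$ I need $\varphi^\dagger$ to be M\"obius-covariant (straightforward: the covariance relation for $\varphi$ of degree $d$ passes to $\varphi^\dagger$ of the same degree by taking adjoints of both sides) and affiliated with $\cA$. For the latter, if $\supp(f) \subset I$ I will set $(\varphi^\dagger)^{\mathrm{aff}}(f) := \varphi^{\mathrm{aff}}(\overline{f})^*$. Adjoints of operators affiliated with a von Neumann algebra remain affiliated, so this lies with $\cA(I)$. Taking adjoints of the chain $\varphi(\overline{f}) \subseteq \varphi^{\mathrm{aff}}(\overline{f}) \subseteq \varphi^\dagger(f)^*$ and remembering that $\varphi^\dagger(f)^{**}$ is the closure of $\varphi^\dagger(f)$, I obtain $\varphi^\dagger(f) \subseteq \varphi^{\mathrm{aff}}(\overline{f})^* \subseteq \varphi(\overline{f})^*$ as required by Definition \ref{def:affiliated}.

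The main work is (QW3). Let $\varphi_1,\varphi_2 \in \cF$ and $f,g \in C^\infty(S^1)$ with disjoint supports. Choose disjoint intervals $I_1,I_2$ with $\supp(f)\subset I_1$ and $\supp(g)\subset I_2$, and set $A=\varphi_1^{\mathrm{aff}}(f)$, $B=\varphi_2^{\mathrm{aff}}(g)$. Then $A$ is affiliated with $\cA(I_1)$ and $B$ with $\cA(I_2)$, and by locality of the net these algebras commute. The analysis in (QW2) shows $\cV \subseteq \dom(A)\cap\dom(A^*)\cap\dom(B)\cap\dom(B^*)$, with $Au=\varphi_1(f)u$, $A^*v = \varphi_1^\dagger(\overline{f})v$ for $u,v\in\cV$ (and similarly for $B$). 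Applying Lemma \ref{lem: approximating affiliated operators}, I pick sequences $a_n\in \cA(I_1)$ with $a_n\xi\to A\xi$ on $\dom(A)$ and $a_n^*\eta\to A^*\eta$ on $\dom(A^*)$, and analogously $b_n\in\cA(I_2)$. Commutativity of the algebras gives $a_nb_n=b_na_n$, so for $u,v\in\cV$
\[
\ip{b_n u, a_n^* v} = \ip{a_n b_n u, v} = \ip{b_n a_n u, v} = \ip{a_n u, b_n^* v},
\]
and passing to the limit yields $\ip{Bu, A^*v} = \ip{Au, B^*v}$.

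Finally, I translate this identity back to the quasi-Wightman language. Using Lemma \ref{lem: extension of Wightman fields} and the identifications above, the left-hand side equals $\ip{\varphi_2(g)u, \varphi_1^\dagger(\overline{f})v}_\cH = \ip{\varphi_1(f)\varphi_2(g)u, v}_{\widehat\cV,\cV}$, and the right-hand side equals $\ip{\varphi_2(g)\varphi_1(f)u, v}_{\widehat\cV,\cV}$. Since $v\in\cV$ was arbitrary, $\varphi_1(f)\varphi_2(g)u=\varphi_2(g)\varphi_1(f)u$ as elements of $\widehat\cV$. The main obstacle is this last translation: one must be careful that all the relevant domain conditions are satisfied and that the approximating commutativity in $\cB(\cH)$ really does match the $\widehat\cV$-valued pairing required by Definition \ref{def: quasiwightman CFT}, which is precisely what the bridge between $\varphi(f)$, $\varphi^{\mathrm{aff}}(f)$ and $\varphi^\dagger(\overline{f})^*$ provides.
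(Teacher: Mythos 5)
Your proof follows essentially the same route as the paper: the adjoint axiom via taking adjoints of the chain $\varphi(\overline{f}) \subseteq \varphi^{\mathrm{aff}}(\overline{f}) \subseteq \varphi^\dagger(f)^*$, and locality via Lemma \ref{lem: approximating affiliated operators} and passing to the limit in matrix elements, exactly as in the text. The one sub-step you gloss over is M\"obius covariance of $\varphi^\dagger$: ``taking adjoints of both sides'' of the group-level Wightman covariance relation is delicate because it silently requires showing $\varphi^\dagger(f)U(\gamma)^*u \in \cH$ (the a priori value lies only in $\widehat\cV$), and adjoints of products of unbounded operators do not simply reverse; the paper avoids this by instead verifying the infinitesimal covariance condition for Laurent polynomial $f$ (where all operators are endomorphisms of $\cV$) and invoking Lemma \ref{lem: Mobius covariance conditions} — you should either do the same or supply the $\widehat\cV$-pairing argument explicitly.
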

\begin{proof}
We must check that $\cF$ satisfies Definition \ref{def: quasiwightman CFT}.
All of the conditions are immediate except closure under $\dagger$ and the locality axiom.

Let $\varphi \in \cF$.
We must show that $\varphi^\dagger$ is affiliated with $\cA$ and that it is M\"obius-covariant.
Fix a function $f \in C^\infty(S^1)$ that is supported in $I$.
Since $\varphi$ is affiliated with $\cA$ we have extensions
$$
\varphi(\overline{f}) \subseteq \varphi^{\text{aff}}(\overline{f}) \subseteq \varphi^\dagger(f)^*
$$
with $\varphi^{\text{aff}}(\overline{f})$ affiliated with $\cA(I)$.
Taking adjoints (and forgetting the closure of $\varphi^\dagger(f)$) we have
$$
\varphi^\dagger(f) \subseteq \varphi^{\text{aff}}(\overline{f})^* \subseteq \varphi(\overline{f})^*.
$$
Since the adjoint $\varphi^{\text{aff}}(\overline{f})^*$ is an extension of $\varphi^\dagger(f)$ and is affiliated with $\cA(I)$,
we have shown that $\varphi^\dagger$ is again affiliated with $\cA$ (see Definition \ref{def:affiliated}).

We now check M\"obius covariance of $\varphi^\dagger$.
By Lemma \ref{lem: Mobius covariance conditions} we may check that $\varphi^\dagger$ is M\"obius-covariant in the vertex algebra sense, namely that for all $g \tfrac{d}{d\theta} \in \Lie(\Mob)_\bbC$ we have
$$
[\pi(g \tfrac{d}{d \theta}), \varphi^\dagger(f)]u = \varphi^\dagger((d-1)\tfrac{dg}{d\theta} f - g \tfrac{d f}{d\theta})u,
$$
where $\pi$ is the representation of $\Lie(\Mob)_\bbC$ that integrates to $U$.
Both sides of this condition depend continuously on $f$ (in the topology of $\widehat \cV$) and thus it suffices to consider when $f$ is a Laurent polynomial.
By M\"obius covariance for $\varphi$ (applied to $\overline{f}$ and $\overline{g}$) we have
$$
[\pi(\overline{g} \tfrac{d}{d \theta}), \varphi(\overline{f})]u = \varphi(\overline{(d-1)\tfrac{dg}{d\theta} f - g \tfrac{d f}{d\theta}})u.
$$
Taking adjoints as endomorphisms of $\cV$, and applying the fact that $\pi(\overline{g} \tfrac{d}{d \theta})^* = -\pi(g \tfrac{d}{d \theta})$, we obtain the M\"obius covariance condition for $\varphi^\dagger$, which completes the proof that $\cF$ is $\dagger$-invariant.

We now show that $\cF$ satisfies locality.
Let $\varphi,\psi \in \cF$, let $I_1$ and $I_2$ be disjoint intervals, and suppose that $f$ is supported in $I_1$ and $g$ is supported in $I_2$.
Choose intermediate extensions
$$
\varphi(f) \subseteq \varphi^{\text{aff}}(f) \subseteq \varphi^\dagger(\overline{f})^*, \qquad
\psi(g) \subseteq \psi^{\text{aff}}(g) \subseteq \psi^\dagger(\overline{g})^*
$$
with $\varphi^{\text{aff}}(f)$ and $\psi^{\text{aff}}(g)$ affiliated with $\cA(I)$ and $\cA(J)$, respectively.
Applying Lemma \ref{lem: approximating affiliated operators}, we may choose a sequence $x_n \in \cA(I)$ such that for every $u \in \cV$ we have $x_n u \to \varphi(f)u$ and and $x_n^*u \to \varphi(f)^*u = \varphi^\dagger(\overline{f})u$.
We also choose $y_n \in \cA(J)$ that approximate $\psi(g)$ in the same manner, and observe that the $x_n$ and $y_n$ commute by the locality axiom for $\cA$.
Then for $u,v \in \cV$ we have
\begin{align*}
\ip{&\varphi(f)\psi(g)u,v}
=
\ip{\psi(g)u,\varphi^\dagger(\overline{f})v}
=
\lim_{n \to \infty} \ip{y_n u, x_n^* v} =\\
&=
\lim_{n \to \infty} \ip{x_n u, y_n^* v}
=
\ip{\varphi(f)u,\psi^\dagger(\overline{g})v}
=
\ip{\psi(g)\varphi(f)u,v}
\end{align*}
which establishes locality, and completes the proof.
\end{proof}

Note that any $\dagger$-invariant subset of a quasi-Wightman field theory (perhaps missing cyclicity of the vacuum) is again a quasi-Wightman field theory (perhaps missing cyclicity of the vacuum).
In order to apply Theorem \ref{thm: quasi Wightman to VA}, we are particularly interested in the sub-theory of uniformly bounded distributions.

\begin{defn}\label{def: quasi Wightman fields defined from net}
Let $\cA$ be a M\"obius-covariant net on a Hilbert space $\cH$, and let $\cV \subset \cH$ be the space of finite-energy vectors.
We denote by  $\cF_\cA$ the collection of all adjointable M\"obius-covariant distributions $\varphi$ with domain $\cV$ that are affiliated with $\cA$ and have the property that both $\varphi$ and $\varphi^\dagger$ have uniformly bounded order.
\end{defn}

By Theorem \ref{thm: affiliated fields are Wightman fields}, $\cF_\cA$ satisfies all of the axioms of a quasi-Wightman field theory except perhaps cyclicity of the vacuum, and thus yields a quasi-Wightman field theory on 
$$
\cV_\cA = \spann \set{\varphi_1(z^{n_1}) \cdots \varphi_k(z^{n_k})\Omega}{ \varphi_i \in \cF_\cA, n_i \in \bbZ} \subseteq \cV,
$$
provided that the $L_0$-eigenspaces are finite-dimensional.
This quasi-Wightman field theory has uniformly bounded order by construction, and so by Theorem \ref{thm: wightman CFT iff va} there is a unitary M\"obius vertex algebra structure on $\cV_\cA$.
In general, we cannot guarantee that $\cV_\cA$ is not the trivial vertex algebra $\{\Omega\}$.
However, if $\cA$ is a conformal net (i.e.\! if $\cA$ is diffeomorphism covariant), then the associated stress-energy tensor $T(z) = \sum_{n \in \bbZ} L_n z^{-n-2}$ induces a quasi-Wightman field affiliated with $\cA$ \cite[Appendix]{Carpi04}.

\begin{cor}\label{cor: conformal net to uvoa}
Let $\cA$ be a diffeomorphism covariant conformal net, and let $T(z) = \sum_{n \in \bbZ} L_n z^{-n-2}$ be the associated stress-energy tensor.
Suppose that the eigenspaces of $L_0$ are finite-dimensional.
Then $\cV_\cA$ is a unitary vertex operator algebra with conformal vector $\nu = L_{-2}\Omega$ and $Y(\nu,z) = T(z)$.
\end{cor}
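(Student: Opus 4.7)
The plan is to apply Theorem \ref{thm: quasi Wightman to VA} to the quasi-Wightman field theory $\cF_\cA$ to obtain a unitary M\"obius vertex algebra structure on $\cV_\cA$, identify $\nu = L_{-2}\Omega$ as the quasi-primary vector associated with the stress-energy tensor $T$, and then upgrade to a full unitary VOA using the Virasoro action furnished by the diffeomorphism covariance of $\cA$.

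First, I would show that $T \in \cF_\cA$. By \cite[Appendix]{Carpi04}, the stress-energy tensor of a diffeomorphism covariant conformal net defines an adjointable operator-valued distribution of degree $2$ with domain $\cV$, satisfying $T^\dagger = T$, M\"obius covariant, and affiliated with $\cA$. The uniformly bounded order of $T$ follows from the standard energy bounds on the Virasoro modes $L_n$ available in any diffeomorphism covariant net. Since the $L_0$-eigenspaces are assumed finite-dimensional, the discussion preceding the corollary applies, and Theorem \ref{thm: quasi Wightman to VA} produces a unique unitary M\"obius vertex algebra structure on $\cV_\cA$ such that for each $\varphi \in \cF_\cA$ there is a quasi-primary vector $v_\varphi \in \cV_\cA$ with $\varphi(f) = Y(v_\varphi, f)|_{\cV_\cA}$.

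Let $v_T \in \cV_\cA$ denote the vector corresponding to $T$. Following the construction in the proof of Theorem \ref{thm: quasi Wightman to VA}, $v_T = \lim_{z \to 0} T(z)\Omega$. Using the expansion $T(z) = \sum_n L_n z^{-n-2}$ together with the fact that $L_n \Omega = 0$ for $n \geq -1$ (from M\"obius invariance of $\Omega$ and positivity of $L_0$), we obtain $T(z)\Omega = L_{-2}\Omega + L_{-3}\Omega \, z + \cdots$, so $v_T = L_{-2}\Omega = \nu$ and consequently $Y(\nu, z) = T(z)$. To upgrade to a unitary VOA, observe that the modes of $Y(\nu, z)$ are by construction the Virasoro generators $L_n$ arising from the projective representation of $\mathrm{Diff}(S^1)$, which satisfy the Virasoro commutation relations for some central charge $c$. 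The generators $L_{-1}, L_0, L_1$ from this representation agree with those of the M\"obius vertex algebra structure since $U|_{\Mob}$ exponentiates the latter. The translation axiom $[L_{-1}, Y(v,z)] = \tfrac{d}{dz}Y(v,z)$ and the integer $L_0$-grading are inherited from the M\"obius vertex algebra structure, and $\cV_\cA(0) = \bbC\Omega$ by uniqueness of the vacuum, so $\nu$ is a genuine conformal vector and $\cV_\cA$ is a unitary VOA as claimed.

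The main technical point is establishing that $T$ and $T^\dagger = T$ have uniformly bounded order in the sense of Definition \ref{defn: uniformly bounded order}; everything else is bookkeeping that reconciles the various incarnations of the operators $L_n$ (as Virasoro generators in the representation, as M\"obius generators, and as modes of the vertex operator $Y(\nu,z)$). Once $T \in \cF_\cA$ is secured, the vertex algebra structure and the identification of $\nu$ follow directly from the equivalence in Theorem \ref{thm: wightman CFT iff va}, and the promotion from M\"obius vertex algebra to VOA is a consequence of the diffeomorphism covariance hypothesis.
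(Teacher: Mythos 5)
Your proposal is correct and follows essentially the same route as the paper, which derives the corollary from the discussion immediately preceding it: the stress-energy tensor of a diffeomorphism covariant net is an adjointable, M\"obius-covariant field of degree $2$ affiliated with $\cA$ with uniformly bounded order (via the linear energy bounds from \cite[Appendix]{Carpi04}), hence $T \in \cF_\cA$, and the general machinery of Theorem \ref{thm: quasi Wightman to VA} together with $T(z)\Omega|_{z=0} = L_{-2}\Omega$ yields the conformal vector. Your identification of the uniformly bounded order of $T$ as the one genuine technical point, and your bookkeeping reconciling the various incarnations of the $L_n$, match the intended argument.
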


The following proposition shows that if the net $\cA$ ``comes from'' a M\"obius vertex algebra (in a relatively weak sense), then that vertex algebra is the one we have constructed.

\begin{prop}\label{prop: universality of VA from net}
Let $\cA$ be a M\"obius-covariant net defined on a Hilbert space $\cH$ with finite-energy vectors $\cV$.
Suppose that there is a structure of a unitary M\"obius vertex algebra $Y^{\cV}$ on $\cV$, with the same vacuum vector and representation of $\Mob$ as $\cA$.
Suppose that there is a generating family of quasi-primary vectors $u$ for which the operator-valued distributions $f \mapsto Y^{\cV}(u,f)|_\cV$ are affiliated with $\cA$ in the sense of Definition \ref{def:affiliated}.
Then $\cV = \cV_\cA$ as unitary M\"obius vertex algebras.
\end{prop}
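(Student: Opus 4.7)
The plan is to verify that the generating quasi-primary fields of $Y^\cV$ lie in $\cF_\cA$, use them to identify $\cV_\cA$ with $\cV$ as subspaces of $\cH$, and then invoke the uniqueness clause of Theorem \ref{thm: quasi Wightman to VA} to force the two vertex algebra structures to coincide.

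First, for each $u$ in the given generating family $\cS$ of quasi-primary vectors, I would show $\varphi_u := Y^\cV(u,\cdot)|_\cV \in \cF_\cA$. Adjointability and the formula $\varphi_u^\dagger = \varphi_{(-1)^{L_0}\Theta u}$ come from Corollary \ref{cor: VOA gives adjointable distributions}. M\"obius covariance in the Wightman sense follows from quasi-primariness of $u$, axiom (VA4), and Lemma \ref{lem: Mobius covariance conditions}. Uniformly bounded order of both $\varphi_u$ and $\varphi_u^\dagger$ is Proposition \ref{prop: VOA has unif bdd order}. Affiliation of $\varphi_u$ with $\cA$ is the hypothesis, and affiliation of $\varphi_u^\dagger$ then follows from the $\dagger$-invariance argument in the proof of Theorem \ref{thm: affiliated fields are Wightman fields}. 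Without loss of generality I would enlarge $\cS$ to be $(-1)^{L_0}\Theta$-invariant, which does not affect the generating property.

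Second, I would show $\cV_\cA = \cV$. The inclusion $\cV_\cA \subseteq \cV$ is immediate since $\varphi(z^n) \in \End(\cV)$ for each $\varphi \in \cF_\cA$ by rotation covariance. Conversely, the Fourier convention of \eqref{eqn: smeared field Y 0 def} gives $\varphi_u(z^m) = u_m = u_{(m+d_u-1)}$, so the span of vectors $\varphi_{u_1}(z^{n_1}) \cdots \varphi_{u_k}(z^{n_k})\Omega$ with $u_j \in \cS$ coincides with the span of $(u_1)_{(m_1)} \cdots (u_k)_{(m_k)}\Omega$, which equals $\cV$ because $\cS$ generates it as a vertex algebra.

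Finally, let $\tilde Y$ denote the unitary M\"obius vertex algebra structure on $\cV_\cA = \cV$ produced by Theorem \ref{thm: quasi Wightman to VA} applied to $\cF_\cA$. The subfamily $\cF_\cS := \{\varphi_u : u \in \cS\}$ is itself a quasi-Wightman field theory on $\cV$ with uniformly bounded order: the first five axioms are inherited from $\cF_\cA$, and the vacuum cyclicity axiom (QW5) holds by the generating property of $\cS$ together with the identity $\varphi_u(z^m) = u_{(m+d_u-1)}$. Both $Y^\cV$ and $\tilde Y$ then satisfy the characterizing property of Theorem \ref{thm: quasi Wightman to VA} with input $\cF_\cS$: namely, for every $\varphi \in \cF_\cS$ there is a quasi-primary $v$ with $\varphi(f) = Y(v,f)|_\cV$ (take $v=u$ for $Y^\cV$, and restrict from $\cF_\cA$ for $\tilde Y$). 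Uniqueness forces $Y^\cV = \tilde Y$. No single step is a serious obstacle; the only mildly subtle point is realising that one should apply the uniqueness of Theorem \ref{thm: quasi Wightman to VA} to the smaller family $\cF_\cS$ rather than $\cF_\cA$, since $Y^\cV$ need not a priori describe every element of $\cF_\cA$ as a vertex operator.
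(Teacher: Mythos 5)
Your proposal is correct and follows essentially the same route as the paper: verify that each generating field $\varphi_u=Y^{\cV}(u,\cdot)|_{\cV}$ lies in $\cF_\cA$ (adjointability, covariance, uniformly bounded order via Proposition \ref{prop: VOA has unif bdd order}, affiliation by hypothesis plus the $\dagger$-stability argument), deduce $\cS\subset\cV_\cA$ and $\cV_\cA=\cV$, and conclude that the two structures agree because they agree on a generating set. Your extra care in applying the uniqueness clause of Theorem \ref{thm: quasi Wightman to VA} to the subfamily $\cF_\cS$ rather than all of $\cF_\cA$ is a valid (and slightly more explicit) packaging of the paper's closing step ``$Y^{\cV}(u,z)|_{\cV_\cA}=Y^{\cA}(u,z)$ for $u\in\cS$, hence the structures coincide.''
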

\begin{proof}
As in Theorem \ref{thm: affiliated fields are Wightman fields}, we write $\cF$ for the collection of M\"obius-covariant, adjointable distributions with domain $\cV$ that are affiliated with $\cA$, which form a quasi-Wightman field theory except perhaps for cyclicity of the vacuum.
The vertex algebra $\cV_\cA$ corresponds to the subcollection $\cF_{\cA} \subset \cF$ consisting of distributions $\varphi$ such that $\varphi$ and $\varphi^\dagger$ have uniformly bounded order.

Let $\cS \subset \cV$ be a generating set of quasi-primary vectors for the vertex algebra $\cV$ such that the associated operator-valued distributions are affiliated with $\cA$.
For $u \in \cS$, let $\varphi_u(f) = Y^{\cV}(u,f)|_\cV$ be the associated operator-valued distribution.
By Theorem \ref{thm: VA to Wightman}, the distributions $\varphi_u$ are adjointable and M\"obius-covariant, and by assumption the $\varphi_u$ are affiliated with $\cA$.
Hence $\varphi_u \in \cF$.
Moreover, by Proposition \ref{prop: VOA has unif bdd order} the fields $\varphi_u$ and $\varphi_u^\dagger$ have uniformly bounded order, so $\varphi_u \in \cF_{\cA}$.
It follows that $\cS \subset \cV_{\cA}$ and $Y^{\cV}(u,z)|_{\cV_\cA} = Y^{\cA}(u,z)$ for $u \in \cS$, where $Y^\cA$ denotes the vertex algebra structure on $\cV_\cA$ arising from our construction.
Since $\cS$ generates $\cV$, we have $\cV_{\cA} = \cV$ as vertex algebras.
The representations of $\Lie(\Mob)$ and the inner products of $\cV_\cA$ and $\cV$ coincide by construction.
\end{proof}

Finally, as a result of our analysis of the domains of smeared vertex operators, we are able to give a construction of conformal nets from vertex algebras under looser assumptions than \cite{CKLW18}.
The key hypothesis will be the following.

\begin{defn}
A unitary M\"obius vertex algebra $\cV$ is called \textbf{AQFT-local} if for any $u,v \in \cV$ the closed operators $Y(u,f)$ and $Y(v,g)$ commute strongly\footnote{%
Recall that two closed operators $X$ and $Y$ commute strongly when the von Neumann algebras $\mathrm{vN}(X)$ and $\mathrm{vN}(Y)$ commute.
Here, $\mathrm{vN}(X)$ is the smallest von Neumann algebra to which $X$ is affiliated; it is generated by the polar partial isometry of $X$ along with the spectral projections of $\abs{X}$.
We avoid the terminology ``strongly local'' because in \cite{CKLW18} ``strongly local' vertex algebras are assumed to satisfy polynomial energy bounds.
}
whenever $f$ and $g$ have disjoint support.
\end{defn}
Here $Y(u,f)$ is the closure of $Y^0(u,f)$, where the domain of $Y^0(u,f)$ is $\cV$.
We have the following analog of \cite[Theorem 6.8]{CKLW18}.
\begin{prop}
Let $\cV$ be a unitary M\"obius vertex algebra that is AQFT-local.
Then 
$$
\cA_{\cV}(I) = \mathrm{vN}( \{ Y(v,f) \, : \, \mathrm{supp}(f) \subseteq I \} )
$$
defines a M\"obius-covariant net, with M\"obius symmetry given by the representation of $\Mob$ obtained by integrating the representation of $\Lie(\Mob)$ on $\cV$.
If $\cV$ is a unitary vertex operator algebra%
\footnote{For definitions and details of unitary vertex operator algebras, as opposed to unitary M\"obius vertex algebras, see \cite[\S4,\S5]{CKLW18}}
with conformal vector $\nu$, then $\cA_\cV$ is diffeomorphism covariant with respect to the representation of $\mathrm{Diff}(S^1)$ obtained by integrating the coefficients of $Y(\nu,z)$.
\end{prop}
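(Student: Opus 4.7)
The plan is to verify axioms (HK1)--(HK5) directly, relying on the uniform analytic control provided by Proposition~\ref{prop: VOA has unif bdd order}. Isotony is immediate from the definition of $\cA_\cV$, and the spectrum condition is built into that of a unitary M\"obius vertex algebra. Locality is a direct consequence of AQFT-locality: if $I_1 \cap I_2 = \emptyset$ with $\supp f \subseteq I_1$ and $\supp g \subseteq I_2$, then $Y(u,f)$ and $Y(v,g)$ commute strongly, so $\mathrm{vN}(Y(u,f))$ and $\mathrm{vN}(Y(v,g))$ commute; taking joins yields $\cA_\cV(I_1) \subseteq \cA_\cV(I_2)'$.

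For M\"obius covariance, integrate $\{L_{-1}, L_0, L_1\}$ to a representation $U\colon \Mob \to \cU(\cH)$ as in Section~\ref{sec: PERs}; invariance $U(\gamma)\Omega = \Omega$ comes from $L_k\Omega = 0$. It suffices to check the conjugation rule on quasi-primary generators of $\cA_\cV(I)$, since $\cV$ is spanned by $\mathfrak{sl}_2$-orbits of quasi-primary vectors and the identity $Y(L_{-1}v, z) = \partial_z Y(v,z)$ (integrated by parts on $S^1$) shows that $Y(L_{-1}^k w, f)$ is affiliated with the algebra generated by $\{Y(w, h) : \supp h \subseteq I\}$ whenever $\supp f \subseteq I$. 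For quasi-primary $w$, axiom (VA4) is exactly the infinitesimal M\"obius-covariance condition, so Lemma~\ref{lem: Mobius covariance conditions} gives $U(\gamma) Y(w,f)|_\cV U(\gamma)^* = Y(w, \beta_{d_w}(\gamma) f)|_\cV$, and closing both sides yields the identity $U(\gamma) Y(w,f) U(\gamma)^* = Y(w, \beta_{d_w}(\gamma) f)$ of unbounded operators. Since $\supp \beta_{d_w}(\gamma) f = \gamma(\supp f) \subseteq \gamma I$, this gives $U(\gamma)\cA_\cV(I)U(\gamma)^* \subseteq \cA_\cV(\gamma I)$, with the reverse inclusion from applying the argument to $\gamma^{-1}$.

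For the vacuum axiom, uniqueness follows from positivity of $L_0$ and $\dim\cV(0)=1$: any $U$-invariant vector lies in $\ker L_0 = \bbC\Omega$. For cyclicity, set $\cM = \bigvee_{I \Subset S^1} \cA_\cV(I)$ and $\cW = \cV \cap \overline{\cM\Omega}$; the plan is to show $\cW$ is invariant under every mode $v_n$, which together with $\Omega \in \cW$ and the fact that $\cV$ is spanned by iterated mode applications to $\Omega$ will give $\cV \subseteq \overline{\cM\Omega}$, hence $\overline{\cM\Omega} = \cH$ by density. Fix a smooth partition of unity $1 = \sum_\ell g_\ell$ on $S^1$ with each $g_\ell$ supported in a proper interval $I_\ell$. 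For $u \in \cW$,
\[
v_n u = Y(v, z^n) u = \sum_\ell Y(v, g_\ell z^n) u,
\]
a finite sum in $\cH$, where each $T_\ell := Y(v, g_\ell z^n)$ is closed and, by construction, affiliated with $\cA_\cV(I_\ell) \subseteq \cM$. Writing $u = \lim_j x_j \Omega$ with $x_j \in \cM$, Lemma~\ref{lem: approximating affiliated operators} supplies bounded $y_m^{(\ell)} \in \cA_\cV(I_\ell)$ with $y_m^{(\ell)} u \to T_\ell u$; boundedness of $y_m^{(\ell)}$ gives $y_m^{(\ell)} u = \lim_j y_m^{(\ell)} x_j \Omega \in \overline{\cM\Omega}$, so $T_\ell u \in \overline{\cM\Omega}$ and $v_n u \in \cW$.

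For diffeomorphism covariance (when $\cV$ is a unitary VOA with conformal vector $\nu$), one must integrate the Virasoro operators $L_n$ to a projective unitary representation of $\mathrm{Diff}(S^1)$ extending $\Mob$; after this, the covariance $U(\gamma)\cA_\cV(I) U(\gamma)^* = \cA_\cV(\gamma I)$ and the inclusion $U(\gamma) \in \cA_\cV(I)$ for $\gamma$ supported in $I$ follow in parallel with the M\"obius case, using approximations of $U(\gamma)$ by exponentials $\exp(it Y(\nu, f))$ for $\supp f \subseteq I$. The main obstacle is the Virasoro integration itself: whereas \cite{CKLW18} uses polynomial energy bounds, here one must substitute the Sobolev-type estimates of Proposition~\ref{prop: VOA has unif bdd order} and adapt a Goodman-Wallach/Toledano-Laredo integration scheme accordingly.
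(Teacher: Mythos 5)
Your overall strategy matches the paper's, and your treatment of isotony, locality, and the cyclicity of the vacuum (via a partition of unity and Lemma~\ref{lem: approximating affiliated operators}) correctly fills in steps that the paper's proof leaves implicit. There are, however, two points where your argument diverges in a way that matters. First, the step ``closing both sides yields the identity $U(\gamma)Y(w,f)U(\gamma)^* = Y(w,\beta_{d_w}(\gamma)f)$ of unbounded operators'' is not justified as written. Lemma~\ref{lem: Mobius covariance conditions} gives the covariance identity only for $u \in \cV$, and on the left-hand side $Y(w,f)$ is applied to $U(\gamma)^*u \notin \cV$ via its weak extension $\cH \to \widehat\cV$; that extension is a priori larger than the closure $Y(w,f) = \overline{Y^0(w,f)|_\cV}$, so you cannot conclude equality of closures without first knowing that $U(\gamma)^*\cV$ lies in the domain of the closure and sits inside a core. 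The paper resolves exactly this by routing through the invariant domain $\cD = \operatorname{span}\{Y(v^k,f_k)\cdots Y(v^1,f_1)\Omega\}$: by Proposition~\ref{prop: VOA has unif bdd order} one has $\cV \subseteq \cD \subseteq \dom(Y(v,f))$, so $\cD$ is a core, and by Theorem~\ref{thm: VA to Wightman} one has $U(\gamma)\cD = \cD$ together with the covariance relation as a genuine operator identity on $\cD$; only then does taking closures give the stated identity. Your argument should invoke this (or some equivalent core statement) explicitly.

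Second, for diffeomorphism covariance you propose to integrate the Virasoro algebra by adapting a Goodman--Wallach type scheme to the Sobolev estimates of Proposition~\ref{prop: VOA has unif bdd order}, and you leave this as the ``main obstacle.'' This is both harder than necessary and not what the paper does: the stress-energy field $Y(\nu,z)$ of a unitary VOA always satisfies linear energy bounds, so its integration to a projective representation of $\Diff(S^1)$ is standard and needs none of the machinery of this paper. The actual content of the paper's proof of this part is soft: $\cA_\cV$ is an extension of the Virasoro net generated by $Y(\nu,f)$, and diffeomorphism covariance of a M\"obius-covariant net containing a Virasoro subnet is \cite[Proposition 3.7]{Carpi04} (as in \cite[Theorem 6.8]{CKLW18}). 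As written, your sketch of this part is incomplete, and the proposed repair targets the wrong difficulty.
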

\begin{proof}
In the M\"obius case, the only point to address is M\"obius covariance.
Let $U:\Mob \to \cU(\cH)$ be the representation obtained by integrating the $L_k$ of $\cV$.
By the relation $Y(L_{-1}v,f) = Y(v,if^\prime - d_vf)$, we see that $\cA_\cV(I)$ is generated by smeared fields $Y(v,f)$ with $v$ quasi-primary.
Let $\cS \subseteq \cV$ be the set of quasi-primary vectors.
Then the domain $\cD = \{ Y(v^k,f_k) \cdots Y(v^1,f_1)\Omega \, : \, v^j \in \cS, \, f_j \in C^\infty(S^1) \}$ contains $\cV$, and is contained in the domain of any $Y(v,f)$ by Proposition \ref{prop: VOA has unif bdd order}.
Hence $\cD$ is a core for $Y(v,f)$.
By Theorem \ref{thm: VA to Wightman}, we have $U(\gamma)\cD = \cD$ for any $\gamma \in \Mob$, and $U(\gamma)Y(v,f)|_{\cD} U(\gamma)^* = Y(v,\beta_d(\gamma)f)|_{\cD}$ when $v$ is quasi-primary with conformal dimension $d$.
Taking closures yields $U(\gamma)Y(v,f)U(\gamma)^* = Y(v,\beta_d(\gamma)f)$, and M\"obius covariance follows.
If $\cV$ is a VOA, then (as in \cite[Theorem 6.8]{CKLW18}) we observe that $\cA_\cV$ is an extension of a Virasoro net and its conformal covariance follows from \cite[Proposition 3.7]{Carpi04}.
\end{proof}

\begin{cor}
Let $\cV$ be an AQFT-local unitary M\"obius vertex algebra, and let $\cA_\cV$ be the corresponding M\"obius-covariant net.
Then the unitary M\"obius vertex algebra associated to $\cA_\cV$ via Corollary~\ref{cor: conformal net to uvoa} recovers $\cV$.
\end{cor}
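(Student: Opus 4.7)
The plan is to deduce this from Proposition~\ref{prop: universality of VA from net}, which characterizes $\cV_\cA$ by a universal property. All that must be checked is that $\cV$, viewed as a unitary M\"obius vertex algebra on the vacuum Hilbert space of $\cA_\cV$, satisfies its hypotheses. The vacuum and the M\"obius representation agree by the very definition of $\cA_\cV$: the representation $U:\Mob \to \cU(\cH)$ is obtained by integrating $\{L_{-1},L_0,L_1\}$, and the vacuum $\Omega$ of $\cA_\cV$ is the same as that of $\cV$. Hence the only remaining point is to exhibit a generating family of quasi-primary vectors $v \in \cV$ such that the distribution $f \mapsto Y(v,f)\mid_\cV$ is affiliated with $\cA_\cV$ in the sense of Definition~\ref{def:affiliated}.

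I would take $\cS \subset \cV$ to be the set of all quasi-primary vectors. This generates $\cV$ because the relation $Y(L_{-1}w,z) = \tfrac{d}{dz}Y(w,z)$ lets one obtain the modes of $Y(w,z)$ from those of fields associated to quasi-primary vectors. For $v \in \cS$ and $f \in C^\infty(S^1)$ with $\supp(f)\subseteq I$, set $\varphi_v(f) = Y(v,f)\mid_\cV$ (this has domain $\cV$ because $\cV \subseteq \cD$ by Proposition~\ref{prop: VOA has unif bdd order}, and adjointability descends from $\cD$ to $\cV$ via Corollary~\ref{cor: VOA gives adjointable distributions}). The natural intermediate extension is $\varphi_v^{\text{aff}}(f) := Y(v,f)$, the full closure; this is affiliated with $\cA_\cV(I)$ tautologically by the definition $\cA_\cV(I) = \mathrm{vN}(\{Y(v,f) : \supp(f) \subseteq I\})$, and clearly $\varphi_v(f) \subseteq Y(v,f)$.

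To complete the inclusion chain of Definition~\ref{def:affiliated}, I need $Y(v,f) \subseteq \varphi_v^\dagger(\overline f)^*$. Writing $\tilde v = e^{L_1}(-1)^{L_0}\Theta v$, Equation~\eqref{eqn: smeared field adjoint} gives $Y(\tilde v, \overline f) \subseteq Y(v,f)^*$. Taking adjoints and using that $Y(v,f)$ is closed (so $Y(v,f)^{**} = Y(v,f)$) yields $Y(v,f) \subseteq Y(\tilde v, \overline f)^*$. Since restriction to a dense subspace only enlarges the adjoint, and since $\varphi_v^\dagger(\overline f) = Y(\tilde v, \overline f)\mid_\cV$ by Corollary~\ref{cor: VOA gives adjointable distributions}, we obtain $Y(v,f) \subseteq \varphi_v^\dagger(\overline f)^*$, as required.

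Having verified every hypothesis, Proposition~\ref{prop: universality of VA from net} will deliver $\cV = \cV_{\cA_\cV}$ as unitary M\"obius vertex algebras. I do not anticipate a serious obstacle here: the real content --- the construction of smeared vertex operators, the adjoint formula~\eqref{eqn: smeared field adjoint}, and the universal property (Proposition~\ref{prop: universality of VA from net}) --- has already been established, and the proof is essentially a bookkeeping assembly of these pieces. The only step requiring a moment's care is confirming the precise containment chain for Definition~\ref{def:affiliated}, which is resolved by the elementary fact that restriction enlarges adjoints, combined with~\eqref{eqn: smeared field adjoint}.
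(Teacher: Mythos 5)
Your proof is correct and takes essentially the same approach as the paper: the paper's own argument simply observes that $Y(u,f)$ is affiliated with $\cA_\cV(I)$ for quasi-primary $u$ and $\supp f \subseteq I$ and then invokes Proposition~\ref{prop: universality of VA from net}. Your explicit verification of the containment chain $\varphi_v(f) \subseteq Y(v,f) \subseteq \varphi_v^\dagger(\overline f)^*$ via Equation~\eqref{eqn: smeared field adjoint} is a correct elaboration of what the paper treats as immediate.
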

\begin{proof}
For every quasi-primary $u \in \cV$ the operator $Y(u,f)$ is affiliated with $\cA_{\cV}(I)$ when $\supp f \subseteq I$.
Thus the claimed result is an immediate consequence of Proposition~\ref{prop: universality of VA from net}.
\end{proof}

There is another construction of conformal nets $\cA_{\cV}$ from unitary vertex operator algebras $\cV$ which have a property called ``bounded localized vertex operators''  \cite{Tener19Geometric, Tener19Representation, Tener19Fusion}.
A straightforward argument shows that if $\cV$ has this property then applying Corollary~\ref{cor: conformal net to uvoa} to $\cA_{\cV}$ recovers $\cV$, but we will omit a discussion of bounded localized vertex operators here.

Work in progress of Andr\'e Henriques and JT will show that every (diffeomorphism covariant) conformal net is of the form $\cA_\cV$, for both the AQFT-locality construction and the bounded localized vertex operator construction.
It would then follow that the unitary VOA $\cV_\cA$ from Corollary \ref{cor: conformal net to uvoa} is always defined on all of $\cV$, and moreover that $\cV_\cA$ is AQFT-local with corresponding conformal net $\cA$.

It would be interesting to know whether the dual nets of the Virasoro net with $c>1$ \cite{BS90} or
the $\mathcal{W}_3$-net with $c>2$ \cite{CTW21} come from vertex algebras. It is unknown whether these dual nets
are diffeomorphism covariant.

\appendix

\section{Generating unitary M\"obius vertex algebras}

The following theorem gives standard criteria for a collection of fields to generate a M\"obius vertex algebra.
However, standard references (e.g.\! \cite[Theorem 4.5]{Kac98}) provide a proof under additional hypotheses and do not treat $L_1$-covariance. We summarize the necessary adjustments here:

\begin{thm}\label{thm: existence for Mobius}
Let $\cV$ be a vector space equipped with a representation $\{L_{-1},L_0,L_1\}$ of $\Lie(\Mob)_\bbC$, a collection $\cF \subset \End(\cV)[[z^{\pm 1}]]$ of formal distributions, and a choice of vector $\Omega \in \cV$ such that the following hold:
\begin{enumerate}
\item $\cV = \bigoplus_{n=0}^\infty \cV(n)$ where $\cV(n) =  \ker(L_0 - n)$ and each $\cV(n)$ is finite-dimensional, with $\dim \cV(0) = 1$.
\item $\Omega$ is $\Lie(\Mob)$-invariant.
\item For every $A(z) \in \cF$, $A(z)\Omega$ has a removable singularity at $z=0$.
\item For every $A(z) \in \cF$, there exists a number $d_A \in \bbZ_{\ge 0}$ such that
$$
[L_{k}, A(z)] = (z^{k+1} \tfrac{d}{dz} + (k+1)z^k d_A) A(z) \qquad k=-1,0,1.
$$
\item For every $A(z),B(w) \in \cF$, $(z-w)^N[A(z),B(w)] = 0$ for $N$ sufficiently large.
\item $\cV = \operatorname{span} \{ A^1_{(n_1)} \cdots A^k_{(n_k)}\Omega \,\,: \,\, A^j \in \cF,\, n_j \in \bbZ,\, k \in \bbZ_{\ge0}\}$, where $A_{(n)}$ denotes the $n$th mode of the formal series $A(z)$.
\end{enumerate}

Then there exists a unique structure of an $\bbN$-graded M\"obius vertex algebra on $\cV$ such that $A(z) = Y(A(z)\Omega|_{z=0},z)$ for every $A \in \cF$.

\end{thm}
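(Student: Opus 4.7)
The plan is to reduce existence to the standard reconstruction theorem and then upgrade from a plain vertex algebra to a M\"obius vertex algebra via a Goddard-style uniqueness argument. First I would apply \cite[Theorem 4.5]{Kac98} with $L_{-1}$ playing the role of the translation operator: hypothesis (2) supplies $L_{-1}\Omega = 0$, the $k = -1$ case of hypothesis (4) supplies $[L_{-1}, A(z)] = \partial_z A(z)$ for $A \in \cF$, hypothesis (3) supplies creativity, hypothesis (5) supplies pairwise locality, and hypothesis (6) supplies cyclicity of $\Omega$. Kac's reconstruction then produces a \emph{unique} vertex algebra structure $Y : \cV \to \End(\cV)[[z^{\pm 1}]]$ with $Y(A_{(-1)}\Omega, z) = A(z)$ for every $A \in \cF$, and as a standard byproduct one has $Y(v,z)\Omega = e^{zL_{-1}}v$ for all $v \in \cV$. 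This settles axioms (VA2), (VA3), (VA5), and the $k=-1$ case of (VA4), together with the uniqueness assertion. Observe also that for $k=0,1$ hypothesis (4) is precisely the content of (VA4) for the particular vectors $v = A_{(-1)}\Omega$, since these vectors are quasi-primary of conformal dimension $d_A$ by evaluating hypothesis (4) on $\Omega$.

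It remains to promote (VA4) from the generating set $\{A_{(-1)}\Omega : A \in \cF\}$ to all of $\cV$. For each $v \in \cV$ and $k \in \{0,1\}$, set
\[
T_k^v(z) \;:=\; [L_k, Y(v,z)] - \sum_{j=0}^{k+1}\binom{k+1}{j}z^{k+1-j}Y(L_{j-1}v, z),
\]
and show $T_k^v(z) = 0$ via the Goddard-uniqueness principle: a formal distribution that annihilates $\Omega$ and is mutually local with every generating field $A(w)$ must vanish identically (proved by induction on the length of a cyclic word in modes of generators acting on $\Omega$, moving $T_k^v(z)$ to the right at each step using locality). The identity $T_k^v(z)\Omega = 0$ is a direct computation using $Y(v,z)\Omega = e^{zL_{-1}}v$, $L_k\Omega = 0$, and the $\mathfrak{sl}_2$ commutation relations $[L_k, L_{-1}^n]$, which are explicit polynomials in $L_{-1}$, $L_0$ and $L_1$. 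For mutual locality of $T_k^v(z)$ with $A(w) \in \cF$, I would expand $[[L_k, Y(v,z)], A(w)]$ via the Jacobi identity into $[L_k, [Y(v,z), A(w)]] - [Y(v,z), [L_k, A(w)]]$, substitute the known relation $[L_k, A(w)] = (w^{k+1}\partial_w + (k+1)w^k d_A)A(w)$ from hypothesis (4), and use locality of $Y(v,z)$ and $A(w)$ together with the elementary fact that $(z-w)^N F(z,w) = 0$ implies $(z-w)^{N+1} \partial_w F(z,w) = 0$ to absorb the finitely many additional $(z-w)$-factors into a single locality exponent.

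Finally, axiom (VA1) is given by hypothesis (1), and the mode-degree compatibility $v_{(m)}\cV(n) \subset \cV(n + d_v - m - 1)$ for homogeneous $v$ follows immediately from the $k=0$ case of (VA4) by comparing coefficients of formal series. The main technical hurdle is the locality bookkeeping in the second step: one must verify that commuting with $L_k$ raises the required exponent of $(z-w)$ only by a controlled (and uniform in $w$) amount. This amounts to iterating the elementary derivative fact above a bounded number of times, so the calculation is not conceptually deep, but it is the step where the argument genuinely goes beyond the $L_{-1}$-only content of the classical reconstruction theorem.
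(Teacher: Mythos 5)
Your proposal is correct, and its overall architecture (reconstruct the vertex algebra structure from the generating fields, then upgrade to M\"obius covariance) matches the paper's; but the key step --- extending the $L_0$- and $L_1$-commutation relations from the generators to all of $\cV$ --- is handled by a genuinely different argument. The paper proves this by an explicit computation (Lemma \ref{lem: L1 closed under Borcherds products}): using the residue formula for $Y(u_{(n)}v,w)$ it shows directly, via the identity $w^2n(z-w)^{n-1}+n(z+w)(z-w)^n=z^2n(z-w)^{n-1}$ and an integration by parts in $z$, that the set of vectors satisfying \eqref{eq:L1base} is closed under Borcherds $(n)$-products, and hence contains all of $\cV$. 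You instead apply Goddard uniqueness to the defect field $T_k^v(z)$: this is softer and shorter, and your verifications are sound --- $T_k^v(z)\Omega=0$ reduces to $e^{-zL_{-1}}L_ke^{zL_{-1}}=\sum_j\binom{k+1}{j}z^{k+1-j}L_{j-1}$, which is the correct $\mathfrak{sl}_2$ identity, and locality of $T_k^v(z)$ with each generator follows from the Jacobi trick together with the fact that $(z-w)^NF=0$ implies $(z-w)^{N+1}\partial_wF=0$. What your route buys is avoiding the computation of Lemma \ref{lem: L1 closed under Borcherds products} entirely; what it costs is that you must run the multivariable Goddard induction carefully (after obtaining $\prod_i(z-w_i)^{N_i}T_k^v(z)A^1(w_1)\cdots\Omega=0$ you cannot simply ``divide'' by the prefactor for an arbitrary formal distribution --- you need that each $w$-coefficient lies in $\cV((z))$ and that the modes of $A^1$ eventually annihilate any fixed vector, so that a downward induction on the mode index recovers each coefficient). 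One point you should make explicit: you invoke \cite[Theorem 4.5]{Kac98} with hypothesis (6) as ``cyclicity,'' but the standard reconstruction theorem assumes spanning by words with \emph{negative} modes $n_j<0$, whereas hypothesis (6) allows arbitrary $n_j\in\bbZ$; one must first reduce to negative modes using the commutator formula implied by locality (non-negative modes commute to the right and annihilate $\Omega$ by hypothesis (3)), or, as the paper does, work with the family $\cG$ closed under all $(n)$-products so that surjectivity of $A(z)\mapsto A(z)\Omega|_{z=0}$ is immediate. This is a routine repair, not a genuine gap.
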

\begin{proof}
Note that the uniqueness part of the conclusion is clear by the generating property 6 along with the Borcherds product formula \cite[Eq. (4.6.10)]{Kac98}.
Let $\cG$ be the collection of formal distributions obtained by closing $\cF \cup \{\operatorname{Id}\}$ under derivatives, Borcherds $(n)$-products 
$$\label{eq:nproducts}
A_{(n)}B(z) = \operatorname{Res}_{z} \left\lbrace (z-w)^nA(z)B(w) - (-w+z)^nB(w)A(z)\right\rbrace
$$
for $n \in \bbZ$, and linear combinations. Here $(z-w)^n$ is shorthand for the series expansion of that function in the domain $\abs{z}>\abs{w}$, and similarly $(-w+z)^n$ is expanded in the domain $\abs{w} > \abs{z}$.
As in the proof of \cite[Theorem 4.5]{Kac98}, $\cG$ again satisfies properties 3 and 5 of the hypothesis, as well as the $L_{-1}$-commutation relations of 4.
We note that Kac only considers $(n)$-products with $n > 0$, but that does not affect this step (most notably Dong's Lemma \cite[Lemma 3.2]{Kac98} has no restriction on $n$).
By hypothesis 6, the map $\cG \to \cV$ given by $A(z) \mapsto A(z)\Omega|_{z=0}$ is surjective.
The argument of the uniqueness theorem \cite[Theorem 4.4]{Kac98}\cite{Goddard89}, shows that this map is injective. By inverting, we obtain the state-field correspondence $Y$ of a vertex algebra such that $Y(A(z)\Omega|_{z=0},z) = A(z)$ for every $A \in \cF$.

It remains to check that the fields of this vertex algebra have the correct commutation relations with $L_0$ and $L_1$.
From the $L_0$-commutation relation for $\cF$ and the definition of the $(n)$-product, it is immediate to check the $L_0$-commutation relation for $\cG$, and that $A(z)\Omega|_{z=0} \in \cV(d_A)$.
To address the $L_1$-commutation relation, observe that if $A(z) \in \cF$, then 
$$
L_1 A(z)\Omega|_{z=0} = [L_1,A(z)]\Omega\mid_{z=0} = (z^2 \tfrac{d}{dz} + 2d_A z)A(z)\Omega\mid_{z=0} = 0.
$$
Thus, the states $A(z)\Omega|_{z=0}$ are quasi-primary, and so the fields $Y(A(z)\Omega|_{z=0},z)$ satisfy the $L_1$-commutation relation when $A(z) \in \cF$.
The final step is to show that the $L_1$-commutation relation extends from the generating set to the entire vertex algebra, which is the content of Lemma \ref{lem: L1 closed under Borcherds products} below.
\end{proof}

\begin{lem}\label{lem: L1 closed under Borcherds products}
Suppose that $\cV$ satisfies all of the axioms of a M\"obius vertex algebra, except perhaps the commutation relation with $L_1$.
Then the set of homogeneous vectors $v \in \cV$ such that the $L_1$-commutation relation
\begin{align}\label{eq:L1base}
[L_1, Y(v,z)] = (z^2\partial_{z} + 2d_{v}z) Y(v,z) + Y(L_{1}v,z)
\end{align}
holds is closed under Borcherds $(n)$-products $u_{(n)}v$. That is, if $Y(u,z)$ and $Y(v,z)$ satisfy the $L_{1}$-commutation relation, then
\begin{equation}
\begin{aligned}
\label{eq:lemma}
    \comm{L_{1}}{Y(u_{(n)}v,z)} &= z^2\partial_{z}Y(u_{(n)}v,z) + 2d_{u_{(n)}v}zY(u_{(n)}v,z) + Y(L_{1}u_{(n)}v,z),
\end{aligned}
\end{equation}
where $d_{u_{(n)}v} = d_u + d_v - n - 1$.
\end{lem}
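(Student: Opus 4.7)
The plan is to compute $[L_{1}, Y(u_{(n)}v, z)]$ directly from the Borcherds $(n)$-product expression
\[
Y(u_{(n)}v, z) = \operatorname{Res}_w\!\left[ Y(u,w)Y(v,z)\,\iota_{w,z}(w-z)^n - Y(v,z)Y(u,w)\,\iota_{z,w}(w-z)^n \right],
\]
using the hypothesized $L_{1}$-commutators for $Y(u,w)$ and $Y(v,z)$. Expanding by the Leibniz rule, $[L_1,\,\cdot\,]$ applied to the integrand splits into three groups of terms: those produced by $(z^{2}\partial_{z} + 2d_{v}z)Y(v,z)$, those produced by $(w^{2}\partial_{w} + 2d_{u}w)Y(u,w)$, and the inhomogeneous pieces involving $Y(L_{1}u,w)$ and $Y(L_{1}v,z)$.

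The inhomogeneous pieces recombine, again by the definition of the $(n)$-product, into $Y((L_{1}u)_{(n)}v,\,z) + Y(u_{(n)}(L_{1}v),\,z)$. For the $z^{2}\partial_{z} + 2d_{v}z$ group, the residue in $w$ commutes with $\partial_{z}$; the only subtlety is $\partial_{z}\iota_{w,z}(w-z)^{n} = -n\,\iota_{w,z}(w-z)^{n-1}$ (and similarly for $\iota_{z,w}$), producing a correction $+nz^{2}\,Y(u_{(n-1)}v, z)$ in addition to $z^{2}\partial_{z}Y(u_{(n)}v, z) + 2d_{v}z\,Y(u_{(n)}v, z)$. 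For the $w^{2}\partial_{w} + 2d_{u}w$ group, I would use $\operatorname{Res}_{w}\partial_{w}(\,\cdot\,) = 0$ to integrate by parts, and then rewrite $w = (w-z)+z$ to re-express every residue as a combination of $Y(u_{(n+1)}v, z)$, $Y(u_{(n)}v, z)$, and $Y(u_{(n-1)}v, z)$. Careful bookkeeping yields
\[
(2d_{u}-n-2)\,Y(u_{(n+1)}v, z) + (2d_{u} - 2n - 2)\,z\,Y(u_{(n)}v, z) - n z^{2}\, Y(u_{(n-1)}v, z).
\]

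Summing the three groups, the $\pm n z^{2}\,Y(u_{(n-1)}v, z)$ terms cancel, and the coefficient of $z\,Y(u_{(n)}v, z)$ combines to $2(d_{u}+d_{v}-n-1)z = 2d_{u_{(n)}v}z$. The result is
\[
[L_{1}, Y(u_{(n)}v, z)] = (z^{2}\partial_{z}+2d_{u_{(n)}v}z)\,Y(u_{(n)}v, z) + Y\!\left((L_{1}u)_{(n)}v + u_{(n)}(L_{1}v) + (2d_{u}-n-2)u_{(n+1)}v,\; z\right).
\]
To match with \eqref{eq:lemma} it remains to identify the final argument as $L_{1}(u_{(n)}v)$. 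This follows by extracting the coefficient of $w^{-n-1}$ from the hypothesized $[L_{1}, Y(u,w)] = (w^{2}\partial_{w}+2d_{u}w)Y(u,w) + Y(L_{1}u,w)$, giving the mode-level identity $[L_{1}, u_{(n)}] = (2d_{u}-n-2)u_{(n+1)} + (L_{1}u)_{(n)}$; applying this to $v$ produces exactly the required decomposition of $L_{1}(u_{(n)}v)$.

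The main obstacle is bookkeeping, specifically the integration-by-parts step and the decomposition $w = (w-z)+z$, where it is easy to introduce sign or index errors and where one must verify that the spurious $Y(u_{(n-1)}v, z)$ contributions from the two sides cancel exactly. Conceptually, however, the argument is a direct verification once these routine manipulations are carried out.
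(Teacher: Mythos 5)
Your proof is correct and follows essentially the same route as the paper's: expand $[L_1,\cdot]$ through the residue formula for the $(n)$-product via the Leibniz rule, identify the inhomogeneous pieces as $Y((L_1u)_{(n)}v,z)+Y(u_{(n)}(L_1v),z)$, handle the $w^2\partial_w+2d_uw$ group by integration by parts under the residue together with the decomposition $w=(w-z)+z$ (the paper's identity $w^2n(z-w)^{n-1}+n(z+w)(z-w)^n=z^2n(z-w)^{n-1}$ is exactly this), and finish with the mode identity $[L_1,u_{(n)}]=(2d_u-n-2)u_{(n+1)}+(L_1u)_{(n)}$. The intermediate coefficients you report, including the cancelling $\pm nz^2\,Y(u_{(n-1)}v,z)$ terms, all check out.
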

\begin{proof}
We make use of the formula (see \cite[Eq (3.1.12) and (4.6.10)]{Kac98})
\begin{equation}\label{LLformula}
    \begin{aligned}
    Y(u_{(n)}v,w) = \operatorname{Res}_{z}\left\lbrace(z-w)^nY(u,z)Y(v,w) - (-w+z)^nY(v,w)Y(u,z) \right\rbrace,
    \end{aligned}
\end{equation}
where as before $(z-w)^n$ is shorthand for the series expansion of that function in the domain $\abs{z}>\abs{w}$, and similarly $(-w+z)^n$ is expanded in the domain $\abs{w} > \abs{z}$.
Taking the commutator with $L_{1}$, we have
\begin{equation}\label{eq:commexpanded}
    \begin{aligned}
   \comm{L_{1}}{Y(u_{(n)}v,w)} = \operatorname{Res}_{z}&\left\lbrace(z-w)^n(z^2\pd_{z} + w^2 \pd_{w} + 2d_{u}z + 2d_{v}w )Y(u,z)Y(v,w) \right. \\
   &-\left. (-w+z)^n(z^2\pd_{z} + w^2 \pd_{w} + 2d_{u}z + 2d_{v}w )Y(v,w)Y(u,z) \right. \\
   &+\left. (z-w)^n(Y(L_{1}u,z)Y(v,w) + Y(u,z)Y(L_{1}v,w)) \right.\\
   &- \left. (-w+z)^n(Y(L_{1}v,w)Y(u,z)+ Y(v,w)Y(L_{1}u,z)) \right\rbrace.
    \end{aligned}
\end{equation}
The terms in the final two lines yield $Y((L_{1}u)_{(n)}v,w)$ and $Y(u_{(n)}(L_{1}v),w)$. 
Applying commutation relations, we have that $L_{1}u_{(n)} = (2d_{u} -n -2)u_{(n+1)} + (L_{1}u)_{(n)} + u_{(n)}L_{1}$, and hence
\begin{equation*}
    \begin{aligned}
    Y(L_{1}u_{(n)}v,w) = (2d_{u} -n -2)Y(u_{(n+1)}v,w) + Y((L_{1}u)_{(n)}v,w)+Y(u_{(n)}(L_{1}v),w).
    \end{aligned}
\end{equation*}
Applying this equality, we can expand terms which appear on the righthand side of \eqref{eq:lemma} using \eqref{LLformula} as follows:
\begin{equation*}
\begin{aligned}
    w^{2}\partial_{w}Y(u_{(n)}&v,w) \\
    = \operatorname{Res}_{z} &\left\lbrace -w^2 n (z-w)^{n-1}Y(u,z)Y(v,w)  - (-1)w^2 n(-w+z)^{n-1}Y(v,w)Y(u,z)  \right.\\
    &  \left. +(z-w)^n w^2 Y(u,z)\partial_{w}Y(v,w) - (-w+z)^n w^2 \partial_{w}Y(v,w)Y(u,z) \right\rbrace,
\end{aligned}
\end{equation*}
and similarly,
\begin{equation*}
\begin{aligned}
    2(d_{u} +d_{v}& -n-1)wY(u_{(n)}v,w) + (2d_{u} - n - 2)Y(u_{(n+1)}v,w) \\ 
    = \operatorname{Res}_{z}&\left\lbrace (2d_{u}z + 2d_{v}w -nz-nw-2z)(z-w)^{n}Y(u,z)Y(v,w) \right. \\
    - & \,\left. (2d_{u}z + 2d_{v}w -nz-nw-2z)(-w+z)^{n}Y(v,w)Y(u,z) \right\rbrace.
\end{aligned}
\end{equation*}
Using the identity
\begin{equation*}
    \begin{aligned}
        w^2n(z-w)^{n-1} +n(z+w)(z-w)^{n} = z^2n(z-w)^{n-1},
    \end{aligned}
\end{equation*}
and applying integration-by-parts to the expression
\begin{equation*}
    \begin{aligned}
    -\operatorname{Res}_{z}&\left\lbrace (2z(z-w)^n + z^2 n (z-w)^{n-1})Y(u,z)Y(v,w) \right. \\
        & \quad - \left. (2z(-w+z)^n + z^2 n (-w+z)^{n-1})(-w+z)^n Y(v,w)Y(u,z) \right\rbrace \\
        = \operatorname{Res}_{z}&\left\lbrace - \partial_{z}(z^2(z-w)^{n})Y(u,z)Y(v,w)+\partial_{z}(z^2(-w+z)^n) Y(v,w)Y(u,z) \right\rbrace \\ 
    = \operatorname{Res}_{z}&\left\lbrace(z-w)^n z^2 \partial_{z}Y(u,z)Y(v,w) - (-w+z)^n z^2 Y(v,w)\partial_{z}Y(u,z) \right\rbrace, \\
    \end{aligned}
\end{equation*}
we have that \eqref{eq:commexpanded} is equal to the right-hand side of \eqref{eq:lemma}.
\end{proof}

\section{Remarks on Sobolev spaces}\label{sec:sobolevdist}
Recall that $H^N(S^1)$ can be defined as
\[
 H^N(S^1) = \Big\{ f \in L^2(S^1): \sum_{n \in \bbZ} (1+n^2)^N |\hat f(n)|^2 < \infty \Big\}.
\]
The norm $\|f\|_{N} = (1+n^2)^N |\hat f(n)|^2$ is equivalent to the following norm
\[
 \|f\|_{N_+} =  \sum_{n \in \bbZ} (1+n^2 + \cdots + n^{2N}) |\hat f(n)|^2,
\]
and accordingly the Sobolev space can be written as follows:
\[
 H^N(S^1) = \Big\{ f \in L^2(S^1): \sum_{n \in \bbZ} (1+n^2 + \cdots + n^{2N}) |\hat f(n)|^2 < \infty \Big\}.
\]
Using this norm, the map $U$ such that $\widehat{(Uf)}(n) = \sqrt{1+n^2 + \cdots + n^{2N}} \hat f(n)$
is a unitary from $H^N(S^1)$ to $L^2(S^1)$ (with $H^N(S^1)$ equipped with the corresponding inner product).
Any bounded operator $x$ on $L^2(S^1)$ can be written as 
$\widehat{xf}(m) = \sum_n x_{m,n} \hat f(n)$.
As it is bounded, there is a constant $C>0$ such that $|x_{m,n}|< C$.
Correspondingly, any bounded operator $x$ on $H^N(S^1)$ can be represented by components
$x_{m,n}$ and they satisfy $|x_{m,n}(1+n^2 + \cdots + n^{2N})^{-\frac12}(1+m^2 + \cdots + m^{2N})^{-\frac12}|< C$
for some $C > 0$.

Similarly, $H^{2N+2}(S^1\times S^1)$ is a subspace of $L^2(S^1\times S^1)$ with the norm
\[
 \|h\|_{{2N+2}_+(S^1\times S^1)}^2 = \sum_{m,n} \sum_{k \le 2N+2} \sum_{\ell = 0}^k m^{2(k-\ell)}n^{2\ell} |\hat h(m,n)|^2
\]
and the map $\tilde U$ such that
$\widehat{(\tilde Uf)}(m,n) = \sqrt{\sum_{k \le 2N+2} \sum_{\ell = 0}^k m^{2(k-\ell)}n^{2\ell} } \hat f(m,n)$
is a unitary from $H^{2N+2}(S^1\times S^1)$ to $L^2(S^1\times S^1)$.
Any continuous linear functional $\xi$ on $H^{2N+2}(S^1\times S^1)$
has the form $\xi(f) = \sum \xi_{m,n} \hat f(m,n)$,
where $\xi_{m,n}\left(\sum_{k \le 2N+1} \sum_{\ell = 0}^k m^{2(k-\ell)}n^{2\ell}\right)^{-\frac12}$ is in $\ell^2(\bbZ\times \bbZ)$.

Any continuous linear functional $\xi$ on $H^{2N+2}(S^1\times S^1)$
can be written as a linear combination of derivatives of $f \in L^2(S^1\times S^1)$ of order no greater than $2N+2$
\cite[Theorem 11.62]{Leoni17}.
Hence, $\xi$ itself has order no greater than $2N+4$ as a distribution on $S^1\times S^1$.

\begin{lem}\label{lem:sobolevfunctional}
 Let $B(\cdot,\cdot)$ be a jointly continuous bilinear form on $H^N(S^1)$.
 Then there is a continuous linear functional $\tilde B$ on $H^{2N+2}(S^1\times S^1)$ such that
 $\tilde B(f\otimes g) = B(f,g)$, where $(f\otimes g)(w,z) = f(w)g(z)$.
\end{lem}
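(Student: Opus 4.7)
The plan is to extend $B$ via its matrix of Fourier coefficients. First I would invoke the joint continuity of $B$ on $H^N(S^1) \times H^N(S^1)$ to choose $C > 0$ with $|B(f,g)| \le C \|f\|_N \|g\|_N$. Letting $e_n(z) = z^n$ and $B_{n,m} := B(e_n, e_m)$, and using $\|e_n\|_N = (1+n^2)^{N/2}$, this gives the pointwise bound
\[
|B_{n,m}| \le C(1+n^2)^{N/2}(1+m^2)^{N/2}.
\]

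Next, for $h \in H^{2N+2}(S^1 \times S^1)$ with Fourier coefficients $\hat h(n,m)$, I would define
\[
\tilde B(h) = \sum_{n,m \in \bbZ} B_{n,m}\, \hat h(n,m),
\]
and verify absolute convergence together with continuity by a single Cauchy--Schwarz estimate with weight $(1+n^2)^{N+1}(1+m^2)^{N+1}$. Explicitly,
\[
|\tilde B(h)|^2 \le \Bigg(\sum_{n,m} \frac{|B_{n,m}|^2}{(1+n^2)^{N+1}(1+m^2)^{N+1}}\Bigg) \Bigg(\sum_{n,m}(1+n^2)^{N+1}(1+m^2)^{N+1}|\hat h(n,m)|^2\Bigg).
\]
The first factor is bounded by $C^2 \bigl(\sum_n (1+n^2)^{-1}\bigr)^2 < \infty$. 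For the second factor, the elementary inequality $(1+n^2)(1+m^2) \le (1+n^2+m^2)^2$ yields $(1+n^2)^{N+1}(1+m^2)^{N+1} \le (1+n^2+m^2)^{2N+2}$, and as noted in the appendix the sum $\sum(1+n^2+m^2)^{2N+2}|\hat h(n,m)|^2$ is equivalent to $\|h\|_{2N+2_+}^2$ (each term $\sum_{\ell=0}^k m^{2(k-\ell)}n^{2\ell}$ is comparable to $(m^2+n^2)^k$ by the binomial theorem). Hence $|\tilde B(h)| \le C' \|h\|_{2N+2_+}$, so $\tilde B$ is a continuous linear functional on $H^{2N+2}(S^1 \times S^1)$.

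Finally, for $f, g \in H^N(S^1)$ one has $\widehat{(f \otimes g)}(n,m) = \hat f(n)\hat g(m)$, so
\[
\tilde B(f \otimes g) = \sum_{n,m} B_{n,m}\, \hat f(n) \hat g(m),
\]
which agrees with $B(f,g)$ when $f,g$ are trigonometric polynomials by bilinearity, and extends to all of $H^N(S^1) \times H^N(S^1)$ by the joint continuity of both sides. The only step that requires any care is bookkeeping the two norm equivalences (between $\|\cdot\|_{2N+2_+}$ and the standard Sobolev norm on $S^1\times S^1$, and the product-weight vs.\ sum-weight comparison) so that the summability exponent $2N+2$ is indeed large enough; the exponent $N+1$ for each variable in the Cauchy--Schwarz split is exactly what forces this choice.
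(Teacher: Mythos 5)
Your proof is correct and follows essentially the same route as the paper: both arguments reduce to the matrix of Fourier coefficients of $B$, bound it by $C(1+n^2)^{N/2}(1+m^2)^{N/2}$, and gain summability by sacrificing one power of $(1+n^2)(1+m^2)$ in each variable, which is exactly why the exponent $2N+2$ suffices (the paper phrases this via the Riesz representation $B(f,g)=\langle Jf,xg\rangle$ and an $\ell^2$ condition on the reweighted matrix rather than an explicit Cauchy--Schwarz, but the content is identical). The only caveat is your final sentence: $f\otimes g$ need not lie in $H^{2N+2}(S^1\times S^1)$ for general $f,g\in H^N(S^1)$, so the identity $\tilde B(f\otimes g)=B(f,g)$ should be asserted only where the left-hand side is defined (e.g.\ for smooth $f$ and $g$, which is all that is used later), a point the paper's own proof glosses over in the same way.
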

\begin{proof}
 As $B$ is jointly continuous, by the Riesz representation theorem there is a bounded operator $x$ such that
 $V(f,g) = \ip{Jf, xg}$, where $\widehat {Jf}(n) = \overline{f(n)}$ is a unitary conjugation
 and $x$ can be represented by matrix components as above, satisfying
 \[
  |x_{m,n}(1+n^2 + \cdots + n^{2N})^{-\frac12}(1+m^2 + \cdots + m^{2N})^{-\frac12}|< C
 \]
 for some $C > 0$.

 We claim that
 \[
  (1+n^2 + \cdots + n^{2N})^{\frac12}(1+m^2 + \cdots + m^{2N})^{\frac12}\left(\sum_{k \le 2N+2} \sum_{\ell = 0}^k m^{2(k-\ell)}n^{2\ell}\right)^{-\frac12}
 \]
 is in $\ell^2(\bbZ\times \bbZ)$. Indeed, we only have to show that 
 \[
  (1+n^2 + \cdots + n^{2N})(1+m^2 + \cdots + m^{2N})\left(\sum_{k \le 2N+2} \sum_{\ell = 0}^k m^{2(k-\ell)}n^{2\ell}\right)^{-1}
 \]
 is in $\ell^1(\bbZ\times \bbZ)$, and note that
 \[
  \sum_{k \le 2N+2} \sum_{\ell = 0}^k m^{2(k-\ell)}n^{2\ell} > (1+n^2 + \cdots + n^{2(N+1)})(1+m^2 + \cdots + m^{2(N+1)}),
 \]
 while
 \[
  \frac{(1+n^2 + \cdots + n^{2N})(1+m^2 + \cdots + m^{2N})}{(1+n^2 + \cdots + n^{2(N+1)})(1+m^2 + \cdots + m^{2(N+1)})} < \frac{1}{n^2 m^2}
 \]
 is in $\ell^1(\bbZ\times \bbZ)$.

 Therefore, we have that
 \[
  x_{m,n}\left(\sum_{k \le 2N+2} \sum_{\ell = 0}^k m^{2(k-\ell)}n^{2\ell}\right)^{-\frac12}  \in \ell^2(\bbZ\times \bbZ)
 \]
 and $x_{m,n}$ defines a continuous linear functional $\tilde B$ on $H^{2N+2}(S^1\times S^1)$ such that
 $\tilde B(f,g) = \sum_{m,n} x_{m,n}\hat f(m)\hat g(n)$, which is an extension of the desired form. 
\end{proof}

{\small
%  \bibliographystyle{alphaabbr}
%  \bibliography{aqft}
\def\cprime{$'$} \def\polhk#1{\setbox0=\hbox{#1}{\ooalign{\hidewidth
  \lower1.5ex\hbox{`}\hidewidth\crcr\unhbox0}}} \def\cprime{$'$}

}

\end{document}